\newcommand{\LNCSVERSION}[1]{}
\newcommand{\MANUSCRIPTVERSION}[1]{#1}
\newcommand{\mythmspace}{}
\newcommand{\mysecspace}{}
\newcommand{\myfigspace}{}
\newcommand{\executeiffilenewer}[3]{ \ifnum\pdfstrcmp{\pdffilemoddate{#1}}
{\pdffilemoddate{#2}}>0 {\immediate\write18{#3}}\fi }
\declaretheorem[name=Theorem,style=plain]{mytheorem}
\declaretheorem[name=Lemma,sibling=mytheorem,style=plain]{mylemma}
\declaretheorem[name=Claim,sibling=mytheorem,style=plain]{myclaim}
\declaretheorem[name=Corollary,sibling=mytheorem,style=definition]{mycorollary}
\declaretheorem[name=Observation,sibling=mytheorem,style=plain]{myobservation}
\declaretheorem[name=Definition,style=definition]{definition}
\newenvironment{restatablelemma}[3]{}{}
\newenvironment{restatableobservation}[3]{}{}
\renewcommand{\leq}{\leqslant}
\renewcommand{\geq}{\geqslant}
\newcommand{\Cases}[4]{\ensuremath{\left\lbrace
\begin{array}
{ll} {#1}&\text{#2}\\
{#3}&\text{#4}
\end{array}
\right.} }
\newcommand{\floor}[1]{\ensuremath{\left \lfloor {#1} \right \rfloor}}
\newcommand{\bra}[1]{\ensuremath{({#1})}}
\newcommand{\sqbra}[1]{\ensuremath{[{#1}]}}
\newcommand{\abs}[1]{\ensuremath{| {#1} |}}
\newcommand{\length}[1]{\ensuremath{{| \! |} {#1} {| \! |}}}
\newcommand{\size}[1]{\ensuremath{| {#1} |}}
\newcommand{\set}[1]{\ensuremath{\{{#1}\}}}
\newcommand{\dedge}[2]{\ensuremath{({#1},{#2})}}
\newcommand{\tuple}[1]{\ensuremath{\left \langle {#1} \right \rangle}}
\newcommand{\range}[1]{\ensuremath{[#1]}}
\newcommand{\neighbours}[2]{\ensuremath{\operatorname{N}_{#1} \bra{#2}}}
\newcommand{\infinity}{\ensuremath{\infty}}
\newcommand{\treeDist}[2]{\ensuremath{\operatorname{mini-max}_{#1} \bra{{#2}}}}
\newcommand{\treeDir}[2]{\ensuremath{\operatorname{mini-max-next}_{#1} \bra{{#2}}}}
\newcommand{\detDist}[3]{\ensuremath{\operatorname{det-dist}_{#1} \bra{{#2},{#3}}}}
\newcommand{\dist}[2]{\ensuremath{\operatorname{dist}_{#1} \bra{{#2}}}}
\newcommand{\secondDist}[2]{\ensuremath{\operatorname{sec-dist}_{#1} \bra{#2}}}
\newcommand{\detDir}[3]{\ensuremath{\operatorname{det-dir}_{#1} \bra{{#2},{#3}}}}
\newcommand{\dir}[2]{\ensuremath{\operatorname{dir}_{#1} \bra{{#2}}}}
\newcommand{\secondDir}[2]{\ensuremath{\operatorname{sec-dir}_{#1} \bra{#2}}}
\newcommand{\level}[2]{\ensuremath{\operatorname{level}_{#1} \bra{#2}}}
\newcommand{\lev}{\ensuremath{l}}
\newcommand{\levlow}{\ensuremath{\underline{l}}}
\newcommand{\levhi}{\ensuremath{\overline{l}}}
\newcommand{\detPath}[3]{\ensuremath{\operatorname{det-path}_{#1} \bra{{#2},{#3}}}}
\newcommand{\pathh}[2]{\ensuremath{\operatorname{path}_{#1} \bra{{#2}}}}
\newcommand{\pathhp}[2]{\ensuremath{\operatorname{path}^{p}_{#1} \bra{{#2}}}}
\newcommand{\pathhs}[2]{\ensuremath{\operatorname{path}^s_{#1} \bra{{#2}}}}
\newcommand{\mmpath}[2]{\ensuremath{\operatorname{mini-max-path}_{#1}\bra{{#2}}}}
\newcommand{\secondPath}[2]{\ensuremath{\operatorname{sec-path}_{#1} \bra{{#2}}}}
\newcommand{\CComp}{\mathcal{C}}
\newcommand{\component}{\ensuremath{C}}
\newcommand{\comp}[2]{\ensuremath{\operatorname{comp} \bra{{#1},{#2}}}}
\newcommand{\Comp}[1]{\ensuremath{\operatorname{\CComp} \bra{#1}}}
\newcommand{\new}[1]{\ensuremath{\operatorname{\Delta}}}
\newcommand{\TreeTwo}{\ensuremath{S}}
\newcommand{\RootedSubTree}[2]{\ensuremath{\Delta}_{#1}(#2)}
\newcommand{\RootedTreeOne}{\ensuremath{T}}
\newcommand{\RootedTreeTwo}{\ensuremath{S}}
\newcommand{\RootedTreeThree}{\ensuremath{R}}
\newcommand{\RootedTreeROne}{\ensuremath{T_{\curvearrowright}}}
\newcommand{\Forest}{\ensuremath{F}}
\newcommand{\RootedForest}{\ensuremath{F}}
\newcommand{\TreeLevell}[2]{\ensuremath{F_{#1}^{\,#2}}}
\newcommand{\Edges}{\ensuremath{E}}
\newcommand{\EdgesIncidentTo}[1]{\ensuremath{E\bra{#1}}}
\newcommand{\lp}[1]{\ensuremath{\mathcal{LP}_{#1}}}
\newcommand{\Vertices}{\ensuremath{V}}
\newcommand{\VerticesOf}[1]{\ensuremath{V\!\bra{#1}}}
\newcommand{\parent}[2]{\ensuremath{\operatorname{parent}_{#1}\bra{#2}}}
\newcommand{\children}[2]{\ensuremath{\operatorname{Ch}_{#1}\bra{#2}}}
\newcommand{\InducedTree}[1]{\ensuremath{F \sqbra{#1}}}
\newcommand{\InducedEdges}[1]{\ensuremath{E \bra{#1}}}
\newcommand{\pathOne}{\ensuremath{\pi}}
\newcommand{\pathTwo}{\ensuremath{\rho}}
\newcommand{\White}{\ensuremath{W}}
\newcommand{\Black}{\ensuremath{B}}
\newcommand{\Alive}{\ensuremath{A}}
\newcommand{\Dead}{\ensuremath{D}}
\newcommand{\edge}{\ensuremath{e}}
\newcommand{\white}{\ensuremath{w}}
\newcommand{\black}{\ensuremath{b}}
\newcommand{\ttime}{\ensuremath{t}}
\newcommand{\firstVertex}{\ensuremath{v}}
\newcommand{\secondVertex}{\ensuremath{u}}
\newcommand{\thirdVertex}{\ensuremath{x}}
\newcommand{\constDist}{\ensuremath{\beta}}
\newcommand{\constTokens}{\ensuremath{\delta}}
\newcommand{\constThreshold}{\ensuremath{\rho}}
\newcommand{\disp}[1]{\ensuremath{\check{b}_{#1}}}
\newcommand{\sap}{Shortest Augmenting Path algorithm}
\newcommand{\bigo}[1]{\ensuremath{\mathcal{O}\bra{#1}}}
\newcommand{\WA}[2]{\ensuremath{\operatorname{WA}_{#1}\bra{#2}}}
\newcommand{\hc}{\text{Hall’s condition}}
\newcommand{\definetitlefootnote}[1]{
\newcommand{\addtitlefootnote}{$^{*}$\footnote{\protect\@titlefootnotetext}}
\newcommand\@titlefootnotetext{\spaceskip=\z@skip $^{*}$#1} } \makeatother
\begin{document}
\definetitlefootnote{\textrm{The work of all authors was supported by Polish National Science Center grant 2013/11/D/ST6/03100. Additionally, the work of P.~Sankowski was partially supported by the project TOTAL (No 677651) that has received funding from ERC.}}
\title[A Tight Bound for Shortest Augmenting Paths on Trees]{A Tight Bound for Shortest Augmenting Paths on Trees\addtitlefootnote}
\author[B.{ }Bosek]{Bart\l{}omiej Bosek}
\author[D.{ }Leniowski]{Dariusz Leniowski}
\author[P.{ }Sankowski]{Piotr Sankowski}
\author[A.{ }Zych-Pawlewicz]{Anna Zych-Pawlewicz}

\address{B.{ }Bosek: Theoretical Computer Science Department, Faculty of Mathematics and Computer Science, Jagiellonian University, Krak\'{o}w, Poland}
\email{bosek@tcs.uj.edu.pl}
\address{D.{ }Leniowski, P.{ }Sankowski, and A.{ }Zych-Pawlewicz: Institute of Computer Science, University of Warsaw, Poland}
\email{d.leniowski@mimuw.edu.pl}
\email{sank@mimuw.edu.pl}
\email{anka@mimuw.edu.pl}

\begin{abstract}
The shortest augmenting path technique is one of the fundamental ideas used in maximum matching and maximum flow algorithms. Since being introduced by  Edmonds and Karp in 1972, it has been widely applied in many different settings. Surprisingly, despite this extensive usage, it is still not well understood even in the simplest case: online bipartite matching problem on trees. In this problem a bipartite tree $T=(\White \uplus\Black, E)$ is being revealed online, i.e., in each round one vertex from $\Black$ with its incident edges arrives. It was conjectured by Chaudhuri et. al. \cite{DBLP:conf/infocom/ChaudhuriDKL09} that the total length of all shortest augmenting paths found is $O(n \log n)$. In this
paper we prove a tight $O(n \log n)$ upper bound for the total length of shortest augmenting paths for trees improving over $O(n \log^2 n)$ bound \cite{DBLP:conf/waoa/BosekLSZ15}.
\end{abstract}
 \maketitle
\mysecspace{}

\section{Introduction}

\mysecspace{}

One of the most fundamental techniques used to solve maximum matchings or flow problems is the augmenting path technique. It augments the solution
along residual paths until the maximum size matching/flow is found. Intuitively, the work needed for that should be minimized if shortest paths are chosen each time. In particular, 
this was the key concept that allowed Edmonds and Karp in 1972 to show the first strongly polynomial time
algorithm for the maximum flow problem~\cite{DBLP:journals/jacm/EdmondsK72}.  Since then it has been widely applied. Surprisingly, despite this effort, it is still not well understood even
in the simplest case --- online bipartite matching problem on trees. This may be due to the fact that shortest augmenting paths do not seem to have strong enough structure admitting exact analysis. Other methods for choosing augmenting paths are easier to analyze \cite{DBLP:conf/focs/BosekLSZ14,DBLP:conf/infocom/ChaudhuriDKL09}. Our work is meant as a step forward towards understanding the shortest augmenting path method for computing the matching on bipartite graphs.

To be able to analyze this approach we adopt the following model. Let $\White$ and $\Black$ be the bipartition of vertices over which the tree will be formed. The set $\White$ (called white vertices) is given up front to the algorithm, whereas the vertices
in $\Black$ (black vertices) arrive online.
We denote by $\Forest_{\ttime} = \tuple{ \White \uplus \Black_{\ttime}, \Edges_{\ttime} }$
the~forest after the $t$'th black vertex has arrived where $X \uplus Y$ is a disjoint sum
of $X$ and $Y$.
The graphs $\Forest_{\ttime}$ for $t \in \range{n}=\set{1,\ldots,n}$ are constructed online in the following manner.
We start with $\Forest_0 = \tuple{ \White \uplus \Black_0, \Edges_0 } = \tuple{ \White
\uplus \emptyset, \emptyset }$.
In turn $t\in \range{n} $ a new vertex $\black_{\ttime} \in \Black$ together with all
its incident edges $\EdgesIncidentTo{\black_{\ttime}}$ is revealed and $\Forest_{\ttime}$ is
defined as: $\Edges_{\ttime} = \Edges_{t-1}\cup \EdgesIncidentTo{\black_{\ttime}}$ and
$\Black_{\ttime} = \Black_{t-1}\cup \set{\black_{\ttime}}$.
In the model we consider, none of the newly added edges is allowed to close the cycle.
For simplicity we assume that we add in total $n=|\White|$ black vertices.
The final graph is a tree denoted as $\Forest_{n}=(\White\uplus \Black_n,\Edges_n)$.

The goal of the online algorithm is to compute for each $\Forest_{\ttime}$ the maximum size
matching $M_{\ttime}$, possibly making use of $M_{t-1}$.
In this paper we study one specific algorithm, referred to as the Shortest Augmenting Path
algorithm.
When $\black_\ttime$ arrives, the \sap{} always chooses the shortest among all
available augmenting paths. A natural question that we ask is what is the total length of all paths applied by the \sap{}.
In this paper the unmatched vertices are referred to as \emph{free}. For a vertex $v$ we denote its neighborhood in $\Forest_{\ttime}$ as $\neighbours{\ttime}{v}$. By $\InducedTree{X}= \tuple{X, \InducedEdges{X}}$ we denote a subgraph of $\Forest$ induced by $X
\subseteq \White \cup \Black$, where $\InducedEdges{X}=\set{\edge \in \Edges: \edge
\subseteq X}$.

\mysecspace{}

\section{Motivation and Related Work}

\mysecspace{}

The online bipartite matching problem with augmentations has recently received increasing attention~\cite{DBLP:journals/corr/BernsteinHR17,DBLP:conf/focs/BosekLSZ14,DBLP:conf/waoa/BosekLSZ15,DBLP:conf/infocom/ChaudhuriDKL09,DBLP:conf/wads/GroveKKV95,DBLP:conf/soda/GuptaKS14}. The model we study has been introduced in~\cite{DBLP:conf/wads/GroveKKV95}. As mentioned before, the key point of this model is to focus on bounding the total length of augmenting paths and not the running time of the algorithm. This is motivated as follows. Imagine that the white vertices are servers and black vertices are clients. The clients arrive online. A typical client may be a portable computing box, trying to connect to a huge network of services with some specific request. The edges of the graph reflect eligibility of the servers to answer clients request. The classical online model (as in \cite{DBLP:journals/sigact/BirnbaumM08,DBLP:conf/soda/DevanurJK13,DBLP:conf/stoc/KarpVV90}) does not allow preemption, i.e., the client cannot change the server. In such setting one must accept some clients not being served while they could possibly be served with preemption. In that model a famous ranking algorithm gives an optimal $(1-1/e)$-approximation \cite{DBLP:conf/stoc/KarpVV90}. The authors in~\cite{DBLP:conf/wads/GroveKKV95} wonder if preemption makes sense. It may be beneficial to reallocate clients provided that only a limited number of reallocations is needed. This leads to the question of how many reallocations are needed if one insists on serving every client. In~\cite{DBLP:conf/wads/GroveKKV95} a special case is studied when each client can connect to at most two servers. In such scenario the authors prove that the \sap{} performs $\bigo{n \log n}$ reallocations and that no algorithm can do better than that. Chaudhuri et al. \cite{DBLP:conf/infocom/ChaudhuriDKL09} show that the \sap{} makes a total of $\bigo{n \log n}$ reallocations in the case of general bipartite graph, provided that the clients arrive in a random order. They conjecture, however, that this should be the case also for the worst case arriving order of clients. Until this paper, this conjecture remained open even for trees.
In \cite{DBLP:conf/waoa/BosekLSZ15} the authors prove a bound of $\bigo{n \log^2 n}$ for \sap{} given that the underlying graph is a tree. In this paper we take a different approach and prove the conjecture of Chaudhuri et al. for trees. In this restricted case, the authors of~\cite{DBLP:conf/infocom/ChaudhuriDKL09} proposed a different augmenting path algorithm that achieves total paths' length of $\bigo{n \log n}$. Their algorithm, however, is only applicable to trees. The \sap{}, on the other hand, applies to any bipartite graph and also is very simple. This is the reason why we feel it is important to study this algorithm.
Our ultimate goal is to show the bound of $\bigo{n \log n}$ for general bipartite graphs. We believe that the techniques proposed in this paper are an important step forward on the path to achieve this goal.
In parallel work to ours \cite{DBLP:journals/corr/BernsteinHR17} the authors provide a bound of $\bigo{n \log^2 n}$ total number of reallocations for the \sap{} on general bipartite graphs. This recent result has been accepted to \emph{SODA 2018} and it nearly closes the conjecture of Chaudhuri et al. We note, however, that their techniques alone do not lead to $\bigo{n \log n}$ even for trees. 
Before this result, for general graphs, nothing interesting was known for \sap{}. A different algorithm was proposed achieving much worse $\bigo{n \sqrt{n}}$ bound on the total length of augmenting paths~\cite{DBLP:conf/focs/BosekLSZ14}.

Our model is strongly related to dynamic algorithms.
There, we are not only interested in constructing short augmenting paths. An efficient way of finding 
them is the most important aspect.
Most papers in this area consider edge updates in a general fully-dynamic model
which allows for insertions and deletions of edges intermixed with each other. This is a much more difficult scenario in which one cannot do much when constrained by our model. In particular, if edges are added to a bipartite graph, one can show an instance for which any algorithm maintaining a maximum matching performs $\Omega(n^2)$ reallocations. Hence, it is reasonable to stop insisting on matching every client and accept approximate solutions. Here we want to approximate the maximum matching size and not the number of reallocations.
One also needs to keep in mind that a trivial greedy algorithm maintaining a maximal matching gives a $1/2$-approximation and
preforms no reallocations at all. 
A $2/3$-approximation algorithm by~\cite{DBLP:conf/stoc/NeimanS13} achieves $\bigo{\sqrt{m}}$
update time. Gupta and Peng give a $(1-\varepsilon)$-approximation in $\bigo{\sqrt{m}\varepsilon^{-2}}$ time per update~\cite{DBLP:conf/focs/GuptaP13}.
The $\bigo{\sqrt{m}}$ barrier was broken by Bernstein and Stein who gave a $(\frac{2}{3}-\varepsilon)$-approximation
algorithm that achieves $O(m^{1/4}\varepsilon^{-2.5})$ update time~\cite{DBLP:conf/icalp/BernsteinS15}.
Finally,  $(1-\varepsilon)$ approximation
in $O(m\varepsilon^{-1})$ total time and with $O(n\varepsilon^{-1})$ total length of paths was shown in~\cite{DBLP:conf/focs/BosekLSZ14} in a model most related to ours, i.e., when vertices are added on one side of the bipartition. There are also randomized algorithms in the dynamic model~\cite{DBLP:conf/soda/Sankowski07} maintaining the exact size of a maximum matching with $\bigo{n^{1.495}}$ update time. They
do not imply any bound on the number of changes to the matching as they use algebraic techniques that are
not based on augmenting paths.
 \mysecspace{}

\section{The mini-max game}\label{sec:minimax}

\mysecspace{}

Our goal in this paper is to prove that the total length of all augmenting paths applied by \sap{} on a tree is $O(n \log n)$.
More formally, we want to prove the following, where by $\length{\pathOne}$ we denote the number
of edges on a path $\pathOne$.

\mythmspace{}

\begin{mytheorem}
\label{thm:mainthm}
Let $\pathOne_\ttime$ be the path applied by \sap{} in turn $\ttime$.
Then
$
\sum_{\ttime=1}^{n} \length{\pathOne_\ttime} \in O(n \log n).
$
\end{mytheorem}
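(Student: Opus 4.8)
The plan is to reduce the analysis of \sap{} to a combinatorial game on the growing forest --- the \emph{mini-max game} of Section~\ref{sec:minimax} --- and then to bound the cost of that game by an amortised argument spread over $O(\log n)$ scales.

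First I would record the structural facts that make this reduction possible. Since no added edge closes a cycle, every $\Forest_\ttime$ is a forest; hence each augmenting path $\pathOne_\ttime$ lies inside one tree of $\Forest_\ttime$, is the \emph{unique} tree-path from $\black_\ttime$ to the free white vertex it reaches, and this path must be alternating, starting with an unmatched edge at $\black_\ttime$. Consequently $\length{\pathOne_\ttime}$ equals $1$ plus a recursively defined ``depth'' of the white neighbour through which the path leaves $\black_\ttime$ --- the shortest alternating distance from that neighbour to a free white vertex --- and \sap{} picks the neighbour minimising this. I would package this into an abstract object: a rooted forest whose leaves are exactly the current free white vertices, in which each arrival of $\black_\ttime$ merges some subtrees and each augmentation removes one leaf and re-roots part of the structure, with ``move cost'' equal to $\length{\pathOne_\ttime}$. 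The crucial feature of the mini-max viewpoint is that, because \sap{} takes a \emph{shortest} path, a long move can occur only when every free leaf near $\black_\ttime$ is shielded by a large, balanced sub-structure; this shielding is captured by the $\operatorname{mini-max}$ distance, which is the quantity I would track instead of the raw path length.

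Next I would set up the amortisation. Assign to every vertex a \emph{level} in $\{0,1,\dots,\lceil\log n\rceil\}$, tied to the size of the component it currently ``governs'', so that levels along any root-to-leaf chain are monotone and a free white vertex at level $\lev$ governs roughly $2^{\lev}$ vertices. Give each vertex a budget of $O(1)$ tokens per level, hence $O(\log n)$ tokens in total and $O(n\log n)$ tokens overall. When a path of length $\ell$ is applied I would split it into two regimes and pay for each separately, mirroring the schemes $\ChargeByDist{}$ and $\ChargeByPart{}$: the ``local'' portion --- the part lying close, in the mini-max metric, to the structure that just changed --- is charged by distance to the (few) vertices whose state or level actually changed this round, exploiting that such a change happens only boundedly often per vertex per level; the remaining ``global'' portion is charged to the $O(\log n)$ occasions on which a vertex can see the component it governs at least double. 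Choosing the constants $\constDist,\constDolarsDist,\constTokens$ and the threshold $\constThreshold$ so that the local regime is declared to be everything within distance $\approx\constThreshold$ times the number of changed vertices makes the two estimates meet: any path too long to be paid for locally must traverse many distinct levels, and each such traversal is a global doubling event carrying a spare token.

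The main obstacle is the inductive maintenance of the level invariant across an augmentation. A single application of $\pathOne_\ttime$ flips the matching along a long path and can thereby change the mini-max depths --- and hence the intended levels --- of many vertices hanging off that path, and can re-root whole zones. I must show (i) that these induced changes never force a vertex's level to rise without a corresponding token being released, and (ii) that the re-rooting is local enough that the distance-based charge is itself bounded by the length of the very path being paid for, which closes the apparent circularity. Making the constants balance so that the amortised cost is $O(\log n)$ per vertex rather than the $O(\log^2 n)$ of \cite{DBLP:conf/waoa/BosekLSZ15} --- an improvement that comes precisely from charging the global part against \emph{doublings} of the governed component instead of against each level crossing separately --- is the quantitatively delicate point, and it is exactly where replacing ``shortest path'' by ``$\operatorname{mini-max}$ distance'' pays off.
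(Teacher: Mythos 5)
Your high-level plan --- bound the SAP paths by the mini-max game and amortise via a level function with two charging regimes --- does point in the direction of the paper, but the specific mechanism you propose for the level function is wrong in a way that undermines the whole argument. You define the level of a vertex as roughly $\log_2$ of the size of the component it ``governs,'' taking values in $\{0,\dots,\lceil\log n\rceil\}$; the paper instead defines $\level{\ttime}{v}$ to be a \emph{mini-max distance} (the first mini-max distance $\dist{\ttime}{v}$ if $v$ is black, the second mini-max distance $\secondDist{\ttime}{v}$ if $v$ is white). These levels are \emph{not} logarithmic: they can be as large as $\Theta(n)$. The $O(\log n)$-per-vertex budget is recovered not by having only $\log n$ levels, but by a harmonic sum: at level $\lev$ a vertex pays $\Theta(1/\lev)$ tokens at most once, and $\sum_{\lev\le 2n} 1/\lev = O(\log n)$. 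Nothing in your sketch supports a component-size level function enjoying the properties you need (in particular the $\pm 1$ Lipschitz property along mini-max paths, Lemma~\ref{lem:plusMinusOne}, is a statement about mini-max distances, not about logarithms of component sizes).

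The second and more serious gap is your stated ``main obstacle'': that an augmentation flips the matching along $\pathOne_\ttime$ and thereby changes mini-max depths and forces a re-rooting. This is exactly what the mini-max framework is designed to eliminate --- $\dist{\ttime}{\cdot}$, $\secondDist{\ttime}{\cdot}$, and hence the level function, are defined \emph{purely from the graph $\Forest_\ttime$} and never mention a matching. They change only when $\black_\ttime$ arrives, never because of an augmentation, and by Observation~\ref{obs:monotonic} they change monotonically. The entire role of Lemma~\ref{lem:pathineq} is to give a one-way inequality $\length{\pathOne_\ttime}\le \dist{\ttime}{\black_\ttime}$ so that the subsequent analysis can forget the matching entirely. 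Because you frame the maintenance of the level invariant across an augmentation as the central difficulty, your plan re-introduces the coupling to the matching that the framework removes, and you have no replacement for the machinery the paper actually uses in its place --- dead vertices, life portals, the dispatching vertex $\disp{\ttime}$, the prefix/suffix split $\pathh{\ttime}{\black_\ttime}=\pathhp{\ttime}{\black_\ttime}\cdot\pathhs{\ttime}{\black_\ttime}$ with $\sum_\ttime\length{\pathhp{\ttime}{\black_\ttime}}\le 2n$, the case split on whether $\level{\ttime}{\disp{\ttime}}$ grows by a factor $\ge\constDist$, the fractional turn $\ttime-1/2$, and Claim~\ref{clm:size-of-path}. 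As written, the proposal would not yield $O(n\log n)$ and does not identify the ideas that make the paper's bound go through.
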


\mythmspace{}

The idea is not to study directly the paths applied by \sap{}, but a collection of other paths that are possibly longer. To be more precise, we model a scenario where in each turn \sap{} gets the worst
possible matching on $\Forest_{\ttime}$ (i.e., the one maximizing the shortest augmenting path).
We then study the worst case augmenting paths rather then the ones given by the matching produced by \sap{}. Interestingly, these paths can be defined without mentioning any matching. In this section we provide the appropriate definitions and show that they work as expected.

Let us consider what worst possible matching could there be. 
Think of a game, where the algorithm chooses a shortest augmenting path, and the adversary chooses a matching where such path is the longest. We are given graph $\Forest_{\ttime}$ and the newly presented vertex $\black_{\ttime}$. We are interested in a matching where $\black_{\ttime}$ is not matched, so that we model the worst case matching before $\black_{\ttime}$ is matched. The game starts in vertex $\black_{\ttime}$, where the algorithm may choose which edge to follow among the unmatched edges incident to $\black_{\ttime}$. Then the algorithm stumbles upon a white vertex where it has to follow the matching edge chosen by the adversary. The game continues until a leaf is reached (either black or white, black meaning that the algorithm did not find a path). It is not hard to see that the algorithm, when it has a choice, wants to minimize the distance to a free white vertex, while the adversary tries to maximize it. This way we obtain a two-person game, where the outcome of the game is the length of the shortest augmenting path. If the path does not exist, we let the outcome be infinite. Throughout the paper we let $\infinity+1=\infinity$ and we write $x < \infinity$ to indicate that $x$ is simply an integer.

We move on to stating formal definitions. We start by introducing our game on any rooted tree $\RootedTreeOne$ whose vertices are either black or white. We then define the outcome of the algorithm player in time $\ttime$ for a specific $\RootedTreeOne$ closely related to $\Forest_{\ttime}$. For a rooted tree $\RootedTreeOne$ we denote the list of children of a vertex $v$ in $\RootedTreeOne$ as $\children{\RootedTreeOne}{v}$ and a parent of $v$ as $\parent{\RootedTreeOne}{v}$.

\mythmspace{}

\begin{definition}\label{def:minimax}
\MANUSCRIPTVERSION{
Let $\RootedTreeOne$ be a rooted tree whose vertices are partitioned into two sets:
$V(\RootedTreeOne)\subseteq\Black \uplus \White$.
\begin{enumerate}
\item For each $\black\in\Black$ we define its revenue as
$$
\treeDist{\RootedTreeOne}{\black} = \left\lbrace
\begin{array}
{ll} \min_{\white \in \children{\RootedTreeOne}{\black}}
\treeDist{\RootedTreeOne}{\white}+1 & \text{if} \ \children{\RootedTreeOne}{\black}\neq \emptyset \\
\infinity & \text{otherwise}
\end{array}
\right.
$$
\item and for each $\white\in\White$ we define its revenue as
$$
\treeDist{\RootedTreeOne}{\white} = \left\lbrace
\begin{array}
{ll} \max_{\black \in \children{\RootedTreeOne}{\white}}
\treeDist{\RootedTreeOne}{\black}+1 & \text{if} \ \children{\RootedTreeOne}{\white}\neq \emptyset \\
0 & \text{otherwise.}
\end{array}
\right.
$$
\item We let $\treeDir{\RootedTreeOne}{\firstVertex}$ be the child of $\firstVertex$ whose revenue determines the minimum or the maximum respectively.\!\footnotemark[1]\footnotetext[1]{If there are more such vertices we choose the first one according to some predefined order on $\Black\cup\White$.} If $\firstVertex$ has no children, $\treeDir{\RootedTreeOne}{\firstVertex}$ is undefined.

\item We define the mini-max path starting in a vertex $\firstVertex$ as\footnotetext[2]{Symbol $\cdot$ denotes concatenation of paths.}
\begin{equation*}
\mmpath{\RootedTreeOne}{\firstVertex}=
\left \lbrace
\begin{array}{ll}
\firstVertex\cdot \mmpath{\RootedTreeOne}{\treeDir{\RootedTreeOne}{\firstVertex}} & \text{if} \ \treeDir{\RootedTreeOne}{\firstVertex} \ \text{is defined,\!\footnotemark[2]} \\
\firstVertex & \text{otherwise.}
\end{array}
\right.
\end{equation*}
\end{enumerate}
}
\LNCSVERSION{
Let $\RootedTreeOne$ be a rooted tree whose vertices are partitioned into two sets:
$V(\RootedTreeOne)\subseteq\Black \uplus \White$.
For each $\black\in\Black$ we define its revenue as $\treeDist{\RootedTreeOne}{\black}
= \min_{\white \in \children{\RootedTreeOne}{\black}} \treeDist{\RootedTreeOne}{\white}+1$
if $\children{\RootedTreeOne}{\black}\neq \emptyset$ and $\treeDist{\RootedTreeOne}{\black} = \infinity$ otherwise.
For each $\white\in\White$ we define its revenue as
$\treeDist{\RootedTreeOne}{\white} = \max_{\black \in \children{\RootedTreeOne}{\white}}
\treeDist{\RootedTreeOne}{\black}+1$ if $\children{\RootedTreeOne}{\white}\neq \emptyset$ and $\treeDist{\RootedTreeOne}{\white}=0$ otherwise.
We let $\treeDir{\RootedTreeOne}{\firstVertex}$ be the child of $\firstVertex$ whose revenue determines the minimum or the maximum respectively.\!\footnotemark[1]\footnotetext[1]{If there are more such vertices we choose the first one according to some predefined order on $\Black\cup\White$.} If $\firstVertex$ has no children, $\treeDir{\RootedTreeOne}{\firstVertex}$ is undefined.
We define the mini-max path starting in a vertex $\firstVertex$ as
$\mmpath{\RootedTreeOne}{\firstVertex}=
\firstVertex \cdot \mmpath{\RootedTreeOne}{\treeDir{\RootedTreeOne}{\firstVertex}}$ if $\treeDir{\RootedTreeOne}{\firstVertex}$ is defined and $\mmpath{\RootedTreeOne}{\firstVertex}=
\firstVertex$ otherwise.\!\footnotemark[2]\footnotetext[2]{Symbol $\cdot$ denotes concatenation of paths.}
}
\end{definition}

\mythmspace{}

\LNCSVERSION{
\begin{figure}[htbp]\fontsize{10}{10}\selectfont

\myfigspace{}

\centering \def\svgscale{0.9} \resizebox{1\textwidth}{!}{
\executeiffilenewer{rooted_tree_and_dir_sec_dir.svg}{rooted_tree_and_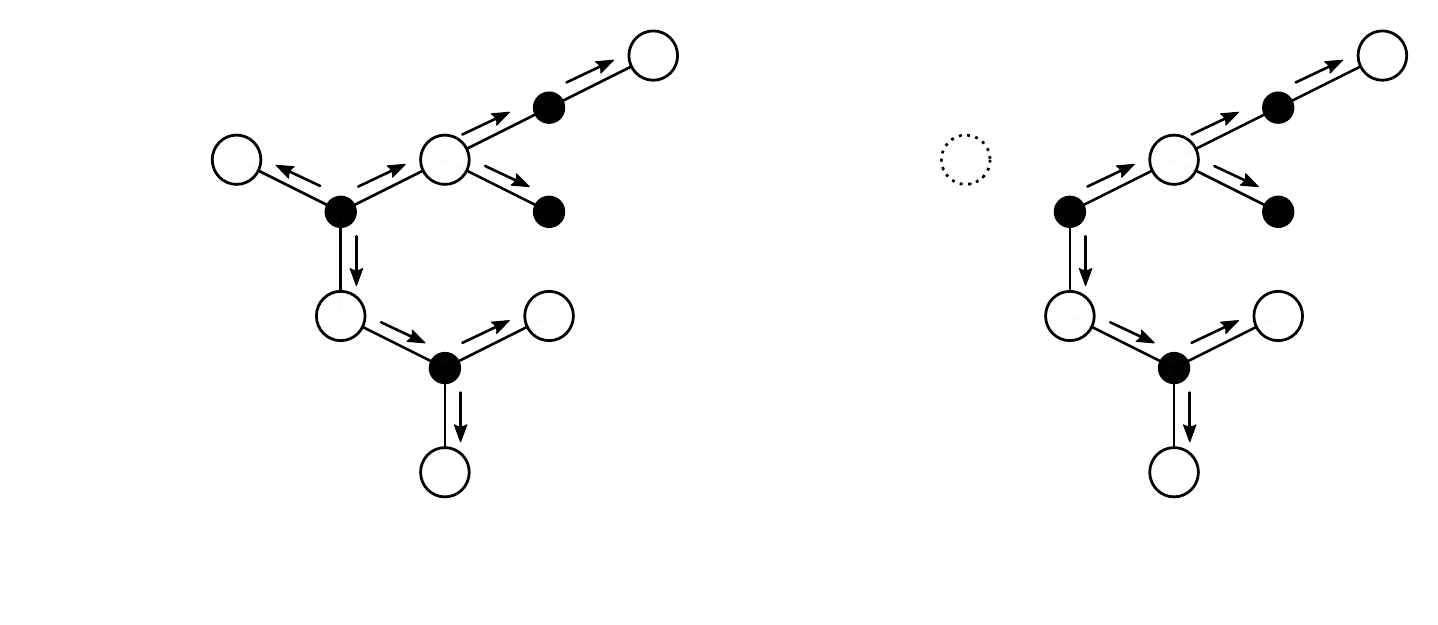} {inkscape -z -C --file=rooted_tree_and_dir_sec_dir.svg --export-pdf=rooted_tree_and_dir_sec_dir.pdf
--export-latex}
\begingroup
  \makeatletter
  \providecommand\color[2][]{
    \errmessage{(Inkscape) Color is used for the text in Inkscape, but the package 'color.sty' is not loaded}
    \renewcommand\color[2][]{}
  }
  \providecommand\transparent[1]{
    \errmessage{(Inkscape) Transparency is used (non-zero) for the text in Inkscape, but the package 'transparent.sty' is not loaded}
    \renewcommand\transparent[1]{}
  }
  \providecommand\rotatebox[2]{#2}
  \ifx\svgwidth\undefined
    \setlength{\unitlength}{623.62204724bp}
    \ifx\svgscale\undefined
      \relax
    \else
      \setlength{\unitlength}{\unitlength * \real{\svgscale}}
    \fi
  \else
    \setlength{\unitlength}{\svgwidth}
  \fi
  \global\let\svgwidth\undefined
  \global\let\svgscale\undefined
  \makeatother
  \begin{picture}(1,0.26157728)
    \put(0,0){\includegraphics[width=\unitlength,page=1]{rooted_tree_and_dir_sec_dir.pdf}}
    \put(-0.00068589,0.11661929){\color[rgb]{0,0,0}\makebox(0,0)[lb]{\smash{$\RootedTreeOne$:}}}
    \put(0.18693566,0.09352379){\color[rgb]{0,0,0}\makebox(0,0)[lb]{\smash{$0$}}}
    \put(0.0426164,0.09352379){\color[rgb]{0,0,0}\makebox(0,0)[lb]{\smash{$0$}}}
    \put(0.08621315,0.09472646){\color[rgb]{0,0,0}\makebox(0,0)[lb]{\smash{$\infty$}}}
    \put(0.13431961,0.09472646){\color[rgb]{0,0,0}\makebox(0,0)[lb]{\smash{$\infty$}}}
    \put(0.20647877,0.16688562){\color[rgb]{0,0,0}\makebox(0,0)[lb]{\smash{$\infty$}}}
    \put(0.11477643,0.16568302){\color[rgb]{0,0,0}\makebox(0,0)[lb]{\smash{$2$}}}
    \put(0.14845091,0.06045078){\color[rgb]{0,0,0}\makebox(0,0)[lb]{\smash{$\infty$}}}
    \put(0.1003447,0.06045078){\color[rgb]{0,0,0}\makebox(0,0)[lb]{\smash{$\infty$}}}
    \put(0.06666983,0.156663){\color[rgb]{0,0,0}\makebox(0,0)[lb]{\smash{$1$}}}
    \put(0.1628826,0.156663){\color[rgb]{0,0,0}\makebox(0,0)[lb]{\smash{$1$}}}
    \put(0.1628826,0.2047691){\color[rgb]{0,0,0}\makebox(0,0)[lb]{\smash{$3$}}}
    \put(0.25668948,0.156663){\color[rgb]{0,0,0}\makebox(0,0)[lb]{\smash{$\infty$}}}
    \put(0,0){\includegraphics[width=\unitlength,page=2]{rooted_tree_and_dir_sec_dir.pdf}}
    \put(0.54251559,0.00404958){\color[rgb]{0,0,0}\makebox(0,0)[lb]{\smash{$\Forest_{\ttime}$}}}
    \put(0.81912366,0.00404958){\color[rgb]{0,0,0}\makebox(0,0)[lb]{\smash{$\Forest_{\ttime} \setminus \{v, \dir{t}{v}\}$}}}
    \put(0.36629176,0.18840759){\color[rgb]{0,0,0}\makebox(0,0)[lb]{\smash{$\dir{t}{v}$}}}
    \put(0.3837662,0.14596256){\color[rgb]{0,0,0}\makebox(0,0)[lb]{\smash{$v$}}}
    \put(0.72291144,0.16039439){\color[rgb]{0,0,0}\makebox(0,0)[lb]{\smash{$v$}}}
    \put(0.68593038,0.10687641){\color[rgb]{0,0,0}\makebox(0,0)[lb]{\smash{$\secondDir{t}{v}$}}}
    \put(0.45171608,0.1958726){\color[rgb]{0,0,0}\makebox(0,0)[lb]{\smash{$0$}}}
    \put(0.49982212,0.18685265){\color[rgb]{0,0,0}\makebox(0,0)[lb]{\smash{$1$}}}
    \put(0.59362809,0.18685265){\color[rgb]{0,0,0}\makebox(0,0)[lb]{\smash{$\infty$}}}
    \put(0.5434178,0.19707527){\color[rgb]{0,0,0}\makebox(0,0)[lb]{\smash{$\infty$}}}
    \put(0.59603364,0.23495876){\color[rgb]{0,0,0}\makebox(0,0)[lb]{\smash{$1$}}}
    \put(0.64413954,0.24397857){\color[rgb]{0,0,0}\makebox(0,0)[lb]{\smash{$0$}}}
    \put(0.59603364,0.12371351){\color[rgb]{0,0,0}\makebox(0,0)[lb]{\smash{$0$}}}
    \put(0.54792781,0.05155435){\color[rgb]{0,0,0}\makebox(0,0)[lb]{\smash{$0$}}}
    \put(0.54792781,0.11469356){\color[rgb]{0,0,0}\makebox(0,0)[lb]{\smash{$1$}}}
    \put(0.49982212,0.12371351){\color[rgb]{0,0,0}\makebox(0,0)[lb]{\smash{$2$}}}
    \put(0,0){\includegraphics[width=\unitlength,page=3]{rooted_tree_and_dir_sec_dir.pdf}}
    \put(0.81010363,0.18685265){\color[rgb]{0,0,0}\makebox(0,0)[lb]{\smash{$3$}}}
    \put(0.90391092,0.18685265){\color[rgb]{0,0,0}\makebox(0,0)[lb]{\smash{$\infty$}}}
    \put(0.85369993,0.19707527){\color[rgb]{0,0,0}\makebox(0,0)[lb]{\smash{$\infty$}}}
    \put(0.90631626,0.23495876){\color[rgb]{0,0,0}\makebox(0,0)[lb]{\smash{$1$}}}
    \put(0.95442251,0.24397857){\color[rgb]{0,0,0}\makebox(0,0)[lb]{\smash{$0$}}}
    \put(0.90631626,0.12371351){\color[rgb]{0,0,0}\makebox(0,0)[lb]{\smash{$0$}}}
    \put(0.85821002,0.05155435){\color[rgb]{0,0,0}\makebox(0,0)[lb]{\smash{$0$}}}
    \put(0.85821002,0.11469356){\color[rgb]{0,0,0}\makebox(0,0)[lb]{\smash{$1$}}}
    \put(0.81010363,0.12371351){\color[rgb]{0,0,0}\makebox(0,0)[lb]{\smash{$2$}}}
    \put(0.00001879,0.23543746){\color[rgb]{0,0,0}\makebox(0,0)[lb]{\smash{(a)}}}
    \put(0.36199808,0.23602944){\color[rgb]{0,0,0}\makebox(0,0)[lb]{\smash{(b)}}}
  \end{picture}
\endgroup
  }
\caption{(a) Example of a rooted mini-max tree with vertex revenues. (b) Example of mini-max distances for a vertex $v$ in turn $\ttime$.}
\label{fig:rooted_tree_and_dir_sec_dir}

\myfigspace{}

\end{figure}
}
\MANUSCRIPTVERSION{
\begin{figure}[htbp]\fontsize{10}{10}\selectfont
\centering \def\svgscale{0.9} \resizebox{0.7\textwidth}{!}{
\executeiffilenewer{rooted_tree.svg}{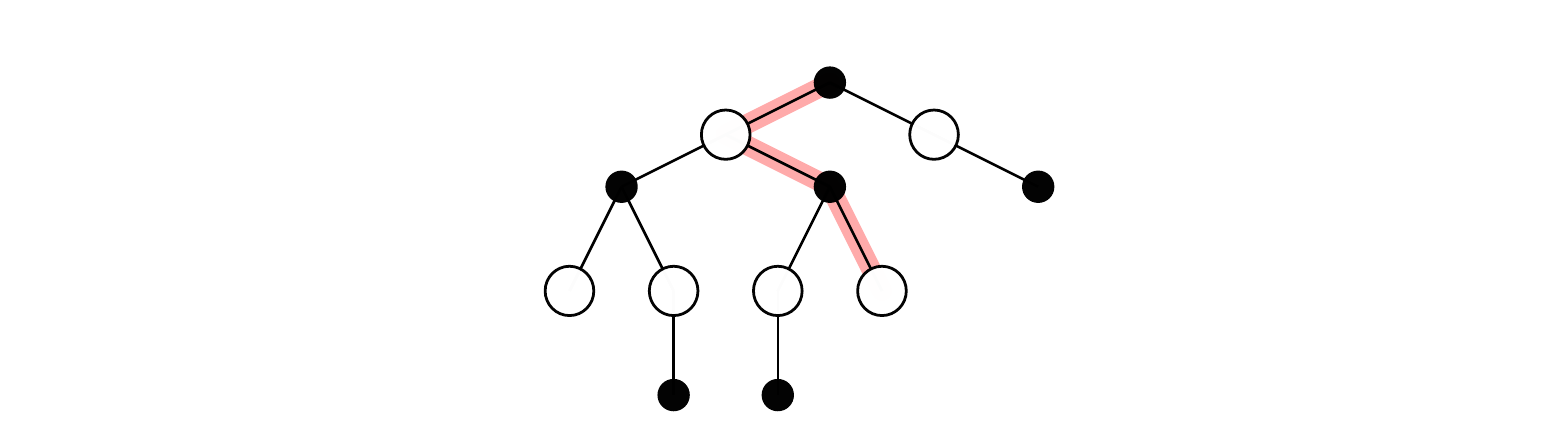} {inkscape -z -C --file=rooted_tree.svg --export-pdf=rooted_tree.pdf
--export-latex}
\begingroup
  \makeatletter
  \providecommand\color[2][]{
    \errmessage{(Inkscape) Color is used for the text in Inkscape, but the package 'color.sty' is not loaded}
    \renewcommand\color[2][]{}
  }
  \providecommand\transparent[1]{
    \errmessage{(Inkscape) Transparency is used (non-zero) for the text in Inkscape, but the package 'transparent.sty' is not loaded}
    \renewcommand\transparent[1]{}
  }
  \providecommand\rotatebox[2]{#2}
  \ifx\svgwidth\undefined
    \setlength{\unitlength}{446.34613758bp}
    \ifx\svgscale\undefined
      \relax
    \else
      \setlength{\unitlength}{\unitlength * \real{\svgscale}}
    \fi
  \else
    \setlength{\unitlength}{\svgwidth}
  \fi
  \global\let\svgwidth\undefined
  \global\let\svgscale\undefined
  \makeatother
  \begin{picture}(1,0.28441532)
    \put(0,0){\includegraphics[width=\unitlength,page=1]{rooted_tree.pdf}}
    \put(0.25307288,0.12336361){\color[rgb]{0,0,0}\makebox(0,0)[lb]{\smash{$\RootedTreeOne$:}}}
    \put(0.56309072,0.0889172){\color[rgb]{0,0,0}\makebox(0,0)[lb]{\smash{$0$}}}
    \put(0.3614533,0.0889172){\color[rgb]{0,0,0}\makebox(0,0)[lb]{\smash{$0$}}}
    \put(0.42236461,0.09059754){\color[rgb]{0,0,0}\makebox(0,0)[lb]{\smash{$\infty$}}}
    \put(0.48957709,0.09059754){\color[rgb]{0,0,0}\makebox(0,0)[lb]{\smash{$\infty$}}}
    \put(0.5903958,0.19141625){\color[rgb]{0,0,0}\makebox(0,0)[lb]{\smash{$\infty$}}}
    \put(0.46227201,0.18973601){\color[rgb]{0,0,0}\makebox(0,0)[lb]{\smash{$2$}}}
    \put(0.50932094,0.04270854){\color[rgb]{0,0,0}\makebox(0,0)[lb]{\smash{$\infty$}}}
    \put(0.44210846,0.04270854){\color[rgb]{0,0,0}\makebox(0,0)[lb]{\smash{$\infty$}}}
    \put(0.39505954,0.17713349){\color[rgb]{0,0,0}\makebox(0,0)[lb]{\smash{$1$}}}
    \put(0.52948449,0.17713349){\color[rgb]{0,0,0}\makebox(0,0)[lb]{\smash{$1$}}}
    \put(0.52948449,0.24434597){\color[rgb]{0,0,0}\makebox(0,0)[lb]{\smash{$3$}}}
    \put(0.66054877,0.17713349){\color[rgb]{0,0,0}\makebox(0,0)[lb]{\smash{$\infty$}}}
  \end{picture}
\endgroup
  }
\caption{Example of a rooted mini-max tree with vertex revenues.}
\label{fig:rooted_tree}
\end{figure}
}

Definition~\ref{def:minimax} is illustrated by example in Figure~\MANUSCRIPTVERSION{\ref{fig:rooted_tree}}\LNCSVERSION{\ref{fig:rooted_tree_and_dir_sec_dir}.(a)}.
Based on this definition we can define
the first and the second 
mini-max distance from a given vertex to a white leaf in a specific time moment $\ttime$. In addition to that we define the first and second direction, i.e., the vertex one needs to follow to find the first and second mini-max distance.

\mythmspace{}

\begin{definition}\label{def:distdir}
\MANUSCRIPTVERSION{
Let $\ttime \in \range{n} $ and $v \in \Black_{\ttime} \cup \White$. Let $\RootedTreeOne$ be the connected component of $v$ in $\Forest_{\ttime}$ rooted in $v$. Let 
\begin{align*}
\dist{\ttime}{v} & =\treeDist{\RootedTreeOne}{v},\\
\dir{\ttime}{v} & =\treeDir{\RootedTreeOne}{v}.
\end{align*}
Let now $\RootedTreeTwo$ be a rooted tree, where from $\RootedTreeOne$ we remove $\treeDir{\RootedTreeOne}{\firstVertex}$ and all its descendants. Let
\begin{align*}
\secondDist{\ttime}{v} & =\treeDist{\RootedTreeTwo}{v},\\
\secondDir{\ttime}{v} & =\treeDir{\RootedTreeTwo}{v}.
\end{align*}
}
\LNCSVERSION{
Let $\ttime \in \range{n} $ and $v \in \Black_{\ttime} \cup \White$. Let $\RootedTreeOne$ be the connected component of $v$ in $\Forest_{\ttime}$ rooted in $v$. Let 
$\dist{\ttime}{v}  =\treeDist{\RootedTreeOne}{v}$ and
$\dir{\ttime}{v}  =\treeDir{\RootedTreeOne}{v}$.
Let now $\RootedTreeTwo$ be a rooted tree, where from $\RootedTreeOne$ we remove $\treeDir{\RootedTreeOne}{\firstVertex}$ and all its descendants. Let
$\secondDist{\ttime}{v} =\treeDist{\RootedTreeTwo}{v}$ and
$\secondDir{\ttime}{v} =\treeDir{\RootedTreeTwo}{v}$.
}
\end{definition}

\mythmspace{}

\MANUSCRIPTVERSION{
\begin{figure}[htbp]\fontsize{10}{10}\selectfont
\centering \def\svgscale{0.9} \resizebox{0.65\textwidth}{!}{
\executeiffilenewer{dir_sec_dir.svg}{dir_sec_dir.pdf} {inkscape -z -C --file=dir_sec_dir.svg --export-pdf=dir_sec_dir.pdf
--export-latex}
\begingroup
  \makeatletter
  \providecommand\color[2][]{
    \errmessage{(Inkscape) Color is used for the text in Inkscape, but the package 'color.sty' is not loaded}
    \renewcommand\color[2][]{}
  }
  \providecommand\transparent[1]{
    \errmessage{(Inkscape) Transparency is used (non-zero) for the text in Inkscape, but the package 'transparent.sty' is not loaded}
    \renewcommand\transparent[1]{}
  }
  \providecommand\rotatebox[2]{#2}
  \ifx\svgwidth\undefined
    \setlength{\unitlength}{414.04637615bp}
    \ifx\svgscale\undefined
      \relax
    \else
      \setlength{\unitlength}{\unitlength * \real{\svgscale}}
    \fi
  \else
    \setlength{\unitlength}{\svgwidth}
  \fi
  \global\let\svgwidth\undefined
  \global\let\svgscale\undefined
  \makeatother
  \begin{picture}(1,0.43471636)
    \put(0,0){\includegraphics[width=\unitlength,page=1]{dir_sec_dir.pdf}}
    \put(0.2948562,0.02629828){\color[rgb]{0,0,0}\makebox(0,0)[lb]{\smash{$\Forest_{\ttime}$}}}
    \put(0.75132735,0.02629828){\color[rgb]{0,0,0}\makebox(0,0)[lb]{\smash{$\Forest_{\ttime} \setminus \{v, \dir{t}{v}\}$}}}
    \put(0.02943232,0.3039718){\color[rgb]{0,0,0}\makebox(0,0)[lb]{\smash{$\dir{t}{v}$}}}
    \put(0.05575201,0.24004255){\color[rgb]{0,0,0}\makebox(0,0)[lb]{\smash{$v$}}}
    \put(0.60641591,0.26177927){\color[rgb]{0,0,0}\makebox(0,0)[lb]{\smash{$v$}}}
    \put(0.55071592,0.18117237){\color[rgb]{0,0,0}\makebox(0,0)[lb]{\smash{$\secondDir{t}{v}$}}}
    \put(0.15809635,0.3152153){\color[rgb]{0,0,0}\makebox(0,0)[lb]{\smash{$0$}}}
    \put(0.23055207,0.30162981){\color[rgb]{0,0,0}\makebox(0,0)[lb]{\smash{$1$}}}
    \put(0.37184068,0.30162981){\color[rgb]{0,0,0}\makebox(0,0)[lb]{\smash{$\infty$}}}
    \put(0.29621508,0.31702671){\color[rgb]{0,0,0}\makebox(0,0)[lb]{\smash{$\infty$}}}
    \put(0.37546352,0.37408553){\color[rgb]{0,0,0}\makebox(0,0)[lb]{\smash{$1$}}}
    \put(0.44791924,0.38767091){\color[rgb]{0,0,0}\makebox(0,0)[lb]{\smash{$0$}}}
    \put(0.37546352,0.20653182){\color[rgb]{0,0,0}\makebox(0,0)[lb]{\smash{$0$}}}
    \put(0.3030078,0.09784829){\color[rgb]{0,0,0}\makebox(0,0)[lb]{\smash{$0$}}}
    \put(0.3030078,0.19294633){\color[rgb]{0,0,0}\makebox(0,0)[lb]{\smash{$1$}}}
    \put(0.23055207,0.20653182){\color[rgb]{0,0,0}\makebox(0,0)[lb]{\smash{$2$}}}
    \put(0,0){\includegraphics[width=\unitlength,page=2]{dir_sec_dir.pdf}}
    \put(0.73774212,0.30162981){\color[rgb]{0,0,0}\makebox(0,0)[lb]{\smash{$3$}}}
    \put(0.87903031,0.30162981){\color[rgb]{0,0,0}\makebox(0,0)[lb]{\smash{$\infty$}}}
    \put(0.80340492,0.31702671){\color[rgb]{0,0,0}\makebox(0,0)[lb]{\smash{$\infty$}}}
    \put(0.88265315,0.37408553){\color[rgb]{0,0,0}\makebox(0,0)[lb]{\smash{$1$}}}
    \put(0.95510866,0.38767091){\color[rgb]{0,0,0}\makebox(0,0)[lb]{\smash{$0$}}}
    \put(0.88265315,0.20653182){\color[rgb]{0,0,0}\makebox(0,0)[lb]{\smash{$0$}}}
    \put(0.81019764,0.09784829){\color[rgb]{0,0,0}\makebox(0,0)[lb]{\smash{$0$}}}
    \put(0.81019764,0.19294633){\color[rgb]{0,0,0}\makebox(0,0)[lb]{\smash{$1$}}}
    \put(0.73774212,0.20653182){\color[rgb]{0,0,0}\makebox(0,0)[lb]{\smash{$2$}}}
  \end{picture}
\endgroup
  }
\caption{Example of mini-max distances for a vertex $v$ in turn $\ttime$.}
\label{fig:dir_sec_dir}
\end{figure}
}

Definition~\ref{def:distdir} is illustrated by example in Figure~\MANUSCRIPTVERSION{\ref{fig:dir_sec_dir}}\LNCSVERSION{\ref{fig:rooted_tree_and_dir_sec_dir}.(b)}.
We next observe the monotonicity of the mini-max distance functions we
defined.

\mythmspace{}

\begin{restatable}{myobservation}{obsmonotonic}
\begin{restatableobservation}[{{\rm Appendix~\refAppMinimax{} in \cite{DBLP:journals/corr/BosekLZS17}}}]
\label{obs:monotonic}
Fix a vertex $u \in \Black \cup \White$. The functions $\dist{\ttime}{u}$ and $\secondDist{\ttime}{u}$ are non-decreasing with respect to $\ttime$ for the whole range of $\ttime$ where $u \in V(\Forest_{\ttime})$.
\end{restatableobservation}
\end{restatable}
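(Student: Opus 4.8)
The plan is to show that neither quantity can decrease when passing from $\Forest_{\ttime}$ to $\Forest_{\ttime+1}$ (the range of $\ttime$ where $u$ is present is an interval, so comparing consecutive turns suffices), by analysing exactly how the rooted mini-max tree of $u$ changes in one turn. Fix $u$ and a turn $\ttime$ with $u\in V(\Forest_{\ttime})$, and let $\RootedTreeOne$ be the connected component of $u$ in $\Forest_{\ttime}$ rooted at $u$. Since edges are only added, the component of $u$ in $\Forest_{\ttime+1}$ either equals $\RootedTreeOne$ --- in which case $\dist{\ttime+1}{u}=\dist{\ttime}{u}$ and $\secondDist{\ttime+1}{u}=\secondDist{\ttime}{u}$ and there is nothing to prove --- or it additionally contains $\black_{\ttime+1}$ with its incident edges. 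In the latter case, since no new edge closes a cycle and the tree is bipartite, $\black_{\ttime+1}$ has exactly one neighbour $w^{*}$ in $\RootedTreeOne$, and $w^{*}$ is white. Rooted at $u$, the new component is thus obtained from $\RootedTreeOne$ by attaching to the white vertex $w^{*}$ one new child, $\black_{\ttime+1}$, together with the subtree it carries (all further vertices merged in this turn), whose root $\black_{\ttime+1}$ is black. Two consequences of this description are what the whole proof hinges on: in the update no black vertex gains a child, and when $u$ itself is black its child set in the rooted component never changes at all --- it is the set of neighbours of $u$, and every edge incident to $u$ appears already in the turn in which $u$ arrives.

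The key step is a monotonicity lemma: if a rooted bipartite tree is enlarged by attaching, as a new child of some white vertex $w^{*}$, an arbitrary extra subtree whose root is black, then the revenue (in the sense of Definition~\ref{def:minimax}) of every vertex already present does not decrease. I would prove this by induction along the path from $w^{*}$ up to the root. For the base case, the revenue of $w^{*}$ is a maximum over the revenues of its children incremented by one (or $0$ if it had none), and the update only adds a child while leaving the other subtrees untouched, so it cannot go down. For the inductive step, take a vertex $v$ strictly above $w^{*}$ on that path: its child set is unchanged, exactly one of its children lies on the path and has a non-decreasing revenue by the inductive hypothesis while all the others are untouched, so the recomputed value of $v$ --- a maximum over its children if $v$ is white, a minimum if $v$ is black --- cannot drop either. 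Applying this lemma to $\RootedTreeOne$ for the single step $\ttime\to\ttime+1$ immediately gives $\dist{\ttime}{u}=\treeDist{\RootedTreeOne}{u}\le\dist{\ttime+1}{u}$.

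For the second distance I would first eliminate $\RootedTreeTwo$ from the picture: deleting $\dir{\ttime}{u}$ and its descendants leaves every remaining child of $u$ with an untouched subtree, so $\secondDist{\ttime}{u}$ equals, for white $u$, the second largest value in the multiset $\set{\treeDist{\RootedTreeOne}{b}+1 : b\in\children{\RootedTreeOne}{u}}$ and $0$ if $u$ has fewer than two children, while for black $u$ it equals the second smallest value in $\set{\treeDist{\RootedTreeOne}{w}+1 : w\in\children{\RootedTreeOne}{u}}$ and $\infinity$ if $u$ has fewer than two children; here ``second largest'' and ``second smallest'' are taken with multiplicities, which makes these formulas correct even when the optimum is attained by several children. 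Now I combine this with the structural description of the update. For black $u$ the child set is fixed and, by the lemma, every entry of the multiset is non-decreasing, so its second smallest value is non-decreasing (the second smallest of a tuple equals $\min_{i\neq j}\max(a_i,a_j)$, which is monotone entrywise). For white $u$ the index set can only grow, and each entry is still non-decreasing; the second largest value of a multiset does not decrease under such changes, which one sees by mapping the two largest old entries to two new entries dominating them. The boundary cases --- fewer than two children, or entries equal to $\infinity$ --- are routine once this is set up.

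The step I expect to be the real obstacle is pinning the structural claim of the first paragraph down precisely and convincing oneself that it is the whole story: monotonicity of $\dist{\ttime}{u}$ genuinely fails if an update is allowed to hang a white child under a black vertex, and monotonicity of $\secondDist{\ttime}{u}$ for black $u$ genuinely uses that $u$'s child set is frozen, so the proof really does stand or fall with the fact that a turn never does anything worse than add one black child to one white vertex of $u$'s component. Granting that, the rest is the short induction of the lemma together with the elementary monotonicity of the second largest and second smallest element of a multiset.
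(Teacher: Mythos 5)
Your proof is correct and proceeds along essentially the same inductive lines as the paper's, but with two refinements worth noting. The paper proves $\treeDist{\RootedTreeOne}{u}\leq\treeDist{\RootedTreeTwo}{u}$ by a bottom-up induction over the whole component, exploiting only the facts $\children{\RootedTreeOne}{u}=\children{\RootedTreeTwo}{u}$ for black $u$ and $\children{\RootedTreeOne}{u}\subseteq\children{\RootedTreeTwo}{u}$ for white $u$; you instead first pin down the structural claim that the update attaches a single black-rooted subtree to a single white vertex $w^{*}$, and then run the induction only along the path from $w^{*}$ to the root (vertices off that path trivially keep their revenue). The two inductions are interchangeable; yours makes the locality of the change more visible, the paper's is slightly more parsimonious in that it never needs the uniqueness of $w^{*}$.

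The more substantive difference is your treatment of $\secondDist{\ttime}{u}$. The paper dismisses it as ``analogous,'' but a naive transcription of the bottom-up induction is not immediate because $\dir{\ttime-1}{u}$ and $\dir{\ttime}{u}$ may be different vertices, so the two deleted subtrees in Definition~\ref{def:distdir} need not line up. Your reformulation of $\secondDist{\ttime}{u}$ as the second-largest (for white $u$) or second-smallest (for black $u$) element of the multiset of incremented child revenues sidesteps this entirely: second-smallest over a fixed index set is $\min_{i\neq j}\max(a_i,a_j)$ and hence entrywise monotone, and second-largest is monotone under simultaneously enlarging the index set and increasing the entries. This is a clean and genuinely useful piece of scaffolding that makes the ``analogous'' step explicit, and you are right to flag that the frozen child set for black vertices is the load-bearing structural fact here.
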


\mythmspace{}

\MANUSCRIPTVERSION{
\begin{proof}
Observe that if $\RootedTreeOne$ is a component of $\firstVertex$ in $\Forest_{\ttime-1}$ rooted in $\firstVertex$, then adding $\black_{\ttime}$ causes that $\black_{\ttime}$ possibly becomes a child of some white vertex $\white \in \Vertices\bra{\RootedTreeOne}$ (see Figure~\ref{fig:ststr12} to the left in Appendix~\refAppDead{}).
Such change causes that either we take maximum over a larger set than before, or we take maximum/minimum over a set whose values do not decrease. In all these cases we cannot decrease the mini-max values. For the rigorous proof see Appendix~\refAppMinimax{}.
\end{proof}
}

We intuitively explained how the mini-max distances correspond to the augmenting paths of the worst case matching, so the hope is that they bound from above the augmenting paths applied by \sap{}. The next lemma shows that this intuition is reflected in reality.
It states that no matter what matching is given on $\Forest_{\ttime}$ for some $\ttime \in \range{n}$, the path chosen by \sap{} to match $\black_{\ttime}$ is bounded by $\dist{\ttime}{\black_{\ttime}}$.

\mythmspace{}

\begin{restatable}{mylemma}{lempathineq}
\begin{restatablelemma}[{{\rm Appendix~\refAppMinimax{} in \cite{DBLP:journals/corr/BosekLZS17}}}]
\label{lem:pathineq}
Let $1\leqslant \ttime_0 \leqslant \ttime \leqslant n$ and let $\pathTwo_t$ be the shortest augmenting path from $\black_{\ttime_0}$ to a free white
vertex according to any given matching $M$ in $\Forest_{\ttime}$ where $\black_{\ttime_0}$ is free. It holds that if $\dist{\ttime}{\black_{\ttime_0}} < \infinity$ then $\pathTwo_t$ exists and
$
\length{\pathTwo_t} \leqslant \dist{\ttime}{\black_{\ttime_0}}.
$
\end{restatablelemma}
\end{restatable}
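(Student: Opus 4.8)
The plan is to make precise the intuition already sketched in the text: $\dist{\ttime}{\black_{\ttime_0}}$ is the value of a game in which the adversary may place \emph{any} matched edge out of each white vertex, whereas a fixed matching $M$ only constrains the adversary, so the path‑builder can only do at least as well. Root the connected component $\RootedTreeOne$ of $\black_{\ttime_0}$ in $\Forest_{\ttime}$ at $\black_{\ttime_0}$, as in Definition~\ref{def:distdir}. First I would record the structural fact that, because $\Forest_{\ttime}$ is a tree and $\black_{\ttime_0}$ is free, every augmenting path for $M$ starting at $\black_{\ttime_0}$ is simple, stays inside $\RootedTreeOne$, and descends strictly from parent to child with the correct alternation: a white vertex $w$ that the path enters from its parent is entered along an \emph{unmatched} edge, hence $M(w)\neq\parent{\RootedTreeOne}{w}$, so either $w$ is free (the path stops) or $M(w)$ is a child of $w$ and the path is forced to continue to it; and $\black_{\ttime_0}$, as well as any black vertex entered along its (matched) parent edge, has all of its child edges unmatched, so the path may descend to an arbitrary child.

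Next, call a vertex $v$ of $\RootedTreeOne$ \emph{reachable} if it lies on some such descending alternating walk from $\black_{\ttime_0}$; thus $\black_{\ttime_0}$ is reachable, every child of a reachable black vertex is reachable, a reachable free white vertex has no reachable child, a reachable white vertex matched to a child has exactly that child reachable, and a reachable white vertex matched to its parent cannot occur. For reachable $v$ let $d^{M}(v)$ be the length of the shortest descending alternating walk from $v$ to a free white vertex ($d^{M}(v)=\infinity$ if there is none), so that $d^{M}(v)=0$ if $v$ is free white, $d^{M}(v)=d^{M}(M(v))+1$ if $v$ is white matched to a child, and $d^{M}(v)=\min_{\white\in\children{\RootedTreeOne}{v}}\bigl(d^{M}(\white)+1\bigr)$ if $v$ is black (all of whose child edges are unmatched). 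The heart of the argument is the claim $d^{M}(v)\le\treeDist{\RootedTreeOne}{v}$ for every reachable $v$, which I would prove by induction on the height of $v$ in $\RootedTreeOne$. In the base case a reachable leaf white vertex must be free (it is reachable and has no children, so cannot be matched to its only neighbour, its parent), giving $d^{M}(v)=0=\treeDist{\RootedTreeOne}{v}$, while a leaf black vertex has $d^{M}(v)=\infinity=\treeDist{\RootedTreeOne}{v}$. For a black $v$ the inductive hypothesis on its reachable children yields $d^{M}(v)=\min_{\white}\bigl(d^{M}(\white)+1\bigr)\le\min_{\white}\bigl(\treeDist{\RootedTreeOne}{\white}+1\bigr)=\treeDist{\RootedTreeOne}{v}$. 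For a white $v$ that is free, $d^{M}(v)=0\le\treeDist{\RootedTreeOne}{v}$; and if $v$ is matched to a child $\black_0$, then $d^{M}(v)=d^{M}(\black_0)+1\le\treeDist{\RootedTreeOne}{\black_0}+1\le\max_{\black\in\children{\RootedTreeOne}{v}}\treeDist{\RootedTreeOne}{\black}+1=\treeDist{\RootedTreeOne}{v}$.

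Finally I would assemble the pieces. Applying the claim to the root gives $d^{M}(\black_{\ttime_0})\le\treeDist{\RootedTreeOne}{\black_{\ttime_0}}=\dist{\ttime}{\black_{\ttime_0}}$, which is finite by hypothesis; hence a descending alternating walk from $\black_{\ttime_0}$ to a free white vertex of length $d^{M}(\black_{\ttime_0})$ exists, and by the first step such a walk is a genuine augmenting path for $M$ starting at the free vertex $\black_{\ttime_0}$. Therefore $\pathTwo_t$ exists and $\length{\pathTwo_t}\le d^{M}(\black_{\ttime_0})\le\dist{\ttime}{\black_{\ttime_0}}$, as required. The step I expect to need the most care is the bookkeeping around reachability: the mini‑max recursion for a white vertex maximises over \emph{all} its children, whereas against a fixed $M$ the path is forced through $M(v)$, and one must ensure the degenerate possibility ``$M(v)=\parent{\RootedTreeOne}{v}$'' never arises on a path starting at the free vertex $\black_{\ttime_0}$ — which is exactly what the descending/alternation structure established in the first step guarantees.
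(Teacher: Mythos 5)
Your proof is correct and follows essentially the same strategy as the paper's: a bottom-up/height induction on the rooted component of $\black_{\ttime_0}$, comparing the game value $\treeDist{\RootedTreeOne}{\cdot}$ against a matching-constrained distance function, with the side conditions ensuring that the only awkward case ($M(\white)=\parent{\RootedTreeOne}{\white}$) never arises along a path starting at the free root. The paper phrases the side conditions as ``black unmatched from below'' and ``white unmatched from above'' and proves the bound for all such vertices, whereas you restrict to ``reachable'' vertices; these amount to the same invariant and lead to the same inductive step.
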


\mythmspace{}

\MANUSCRIPTVERSION{
\begin{proof}
For the proof we refer to Appendix~\refAppMinimax{}\LNCSVERSION{ in \cite{DBLP:journals/corr/BosekLZS17}}. 
\LNCSVERSION{\qed}
\end{proof}
}

 \mysecspace{}

\section{Dead vertices}\label{sec:dead}

\mysecspace{}

In this section we introduce another concept crucial for our result. We define here \emph{dead} vertices and give some intuition why this makes sense. In fact dead vertices reflect the infinity of some mini-max distance functions. For completeness, in addition to defining dead vertices, we describe the situations when the mini-max distance functions are infinite. We start with the statements.

\mythmspace{}

\begin{definition}\label{def:breakHC}
\MANUSCRIPTVERSION{
A vertex $\black_{\ttime_0} \in \Black_{\ttime}$ breaks \hc{} in time $\ttime \geq \ttime_0$ iff there exists $X\subseteq \Black_{\ttime}$ such that 
\begin{enumerate}
\item $\black_{\ttime_0} \in X$,
\item $\size{\neighbours{\ttime}{X}} < \size{X}$, and
\item $X$ is minimal under inclusion set satisfying (2).
\end{enumerate}
}
\LNCSVERSION{
A vertex $\black_{\ttime_0} \in \Black_{\ttime}$ breaks \hc{} in time $\ttime \geq \ttime_0$ iff $\black_{\ttime_0} \in X$ for some inclusion-wise minimal set $X$ satisfying $\size{\neighbours{\ttime}{X}} < \size{X}$.  
}
\end{definition}

\mythmspace{}

\mythmspace{}

\begin{restatable}{mylemma}{lemdistBlinf}
\begin{restatablelemma}[{{\rm Appendix~\refAppDead{} in \cite{DBLP:journals/corr/BosekLZS17}}}]
\label{lem:distBlinf}
Let $1\leqslant \ttime_0 \leqslant \ttime \leqslant n$. Then
$\dist{\ttime}{\black_{\ttime_0}}=\infinity$ iff $\black_{\ttime_0}$ breaks \hc{} in time $\ttime.$ 
\end{restatablelemma}
\end{restatable}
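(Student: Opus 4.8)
The plan is to prove both implications by relating the recursive mini-max definition to Hall-type witnesses in the rooted component of $\black_{\ttime_0}$. Throughout, let $\RootedTreeOne$ denote the connected component of $\black_{\ttime_0}$ in $\Forest_{\ttime}$, rooted at $\black_{\ttime_0}$. First I would unfold Definition~\ref{def:minimax}: a key preliminary observation is that for a black vertex $\black$ in $\RootedTreeOne$ we have $\treeDist{\RootedTreeOne}{\black}=\infinity$ iff either $\black$ is a leaf (no children) or \emph{every} child $\white$ of $\black$ satisfies $\treeDist{\RootedTreeOne}{\white}=\infinity$; and for a white vertex $\white$, $\treeDist{\RootedTreeOne}{\white}=\infinity$ iff $\white$ has at least one child $\black$ with $\treeDist{\RootedTreeOne}{\black}=\infinity$. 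Iterating this, $\dist{\ttime}{\black_{\ttime_0}}=\infinity$ is equivalent to the existence of a subtree of $\RootedTreeOne$ rooted at $\black_{\ttime_0}$ in which every maximal path from the root alternates black/white and ends at a black leaf — in other words, a subtree witnessing that the black vertices in it cannot all be matched downward. This is the bridge to Hall's condition.

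For the direction ``$\dist{\ttime}{\black_{\ttime_0}}=\infinity \Rightarrow \black_{\ttime_0}$ breaks \hc{}'', I would extract from the infinite mini-max value a concrete witness set. Define $X$ to be the set of black vertices $\black$ reachable from $\black_{\ttime_0}$ by a downward alternating path in $\RootedTreeOne$ along which every black vertex encountered has $\treeDist{\RootedTreeOne}{\cdot}=\infinity$ and every white vertex encountered has $\treeDist{\RootedTreeOne}{\cdot}=\infinity$; more precisely, take the subtree $R$ obtained by, at each infinite-valued black vertex, keeping \emph{all} of its children, and at each infinite-valued white vertex, keeping \emph{one} infinite-valued black child (which exists). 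Let $X$ be the black vertices of $R$ and note $\neighbours{\ttime}{X}\supseteq \neighbours{\ttime}{X}\cap V(R)$, the white vertices of $R$. Because $R$ is a tree rooted at a black vertex and alternates colors, it has exactly one more black vertex than white vertices among its internal structure — the root $\black_{\ttime_0}$ is unmatched ``from above'' — so $\size{\text{white vertices of }R} = \size{X}-1$. The remaining point is that no black vertex of $X$ has a neighbour \emph{outside} $R$: a neighbour of $\black\in X$ in $\Forest_{\ttime}$ is either its parent in $\RootedTreeOne$ (already in $R$ or equal to $\black_{\ttime_0}$'s case) or a child (all children of an infinite black vertex are in $R$ by construction). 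Hence $\size{\neighbours{\ttime}{X}}\le \size{X}-1<\size{X}$, so $X$ (or an inclusion-minimal subset of it containing $\black_{\ttime_0}$, invoking Definition~\ref{def:breakHC} — here one must check $\black_{\ttime_0}$ survives in \emph{some} minimal witness, which follows because $\black_{\ttime_0}$ is the unique ``deficient'' source) breaks \hc{}.

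For the converse, suppose $\black_{\ttime_0}\in X$ with $X$ inclusion-minimal and $\size{\neighbours{\ttime}{X}}<\size{X}$. I would argue $\dist{\ttime}{\black_{\ttime_0}}=\infinity$ by contradiction: if it were finite, the mini-max path $\mmpath{\RootedTreeOne}{\black_{\ttime_0}}$ ends at a white leaf, giving an alternating path from $\black_{\ttime_0}$ to a free white vertex, and by Lemma~\ref{lem:pathineq} applied to a well-chosen matching (e.g.\ any maximum matching of $\InducedTree{X\cup\neighbours{\ttime}{X}}$, or directly: finiteness of $\dist{\ttime}{\black_{\ttime_0}}$ would let us match all of $X$ into $\neighbours{\ttime}{X}$, using that the tree structure and minimality of $X$ force $\neighbours{\ttime}{X}$ to be exactly reachable). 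The cleanest route is: inclusion-minimality of $X$ forces $\size{\neighbours{\ttime}{X}}=\size{X}-1$ and forces $\InducedTree{X\cup\neighbours{\ttime}{X}}$ to be connected with $X$ matchable onto any $\size{X}-1$ of its neighbours after deleting one element — a standard consequence of minimal Hall-violators — so rooted at $\black_{\ttime_0}$ this induced subtree is exactly a color-alternating tree with a black leaf layer, whence every downward computation in Definition~\ref{def:minimax} bottoms out in $\infinity$ and propagates up to $\black_{\ttime_0}$.

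The main obstacle I anticipate is the bookkeeping in the first direction: correctly choosing the witness subtree $R$ so that simultaneously (i) its black/white vertex counts differ by exactly one, and (ii) no black vertex of $R$ sends an edge outside $R$, and then (iii) arguing that $\black_{\ttime_0}$ can be kept inside an \emph{inclusion-minimal} witness as required by Definition~\ref{def:breakHC}(3). Point (iii) is subtle because shrinking $X$ to a minimal violator could in principle discard $\black_{\ttime_0}$; one resolves this by observing that in the tree $R$ the deficiency is ``concentrated'' at the root, so any sub-violator must still contain $\black_{\ttime_0}$ (or one reworks the statement so that $\black_{\ttime_0}$ is the root of the chosen minimal set). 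The rest is a routine structural induction on $\RootedTreeOne$ following the recursion in Definition~\ref{def:minimax}, and I would organize the whole proof as that induction, handling the black-vertex and white-vertex cases separately and reading off the Hall witness in the base and inductive steps.
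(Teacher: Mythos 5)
Your proposal matches the paper's proof essentially step for step. The forward direction uses exactly the paper's witness tree $\RootedTreeTwo$ (keep all white children of each infinite black vertex, keep one infinite black child of each white vertex), the same count $\size{\White_{\TreeTwo}}=\size{X}-1$ via the near-bijection to parents, and the same observation that $X$ has no escape edges; the converse direction is the paper's one-liner via Lemma~\ref{lem:pathineq} and Hall's theorem. The one place you flag as an obstacle — minimality per Definition~\ref{def:breakHC}(3) — is where the paper is cleanest: rather than arguing informally that the deficiency is ``concentrated at the root,'' it shows directly that \emph{no} proper subset $X'\subsetneq X$ violates Hall, by the injective parent map $f$ when $\black_{\ttime_0}\notin X'$, and by augmenting $f(X'\setminus\set{\black_{\ttime_0}})$ with one extra white grandchild-parent when $\black_{\ttime_0}\in X'$; this simultaneously handles both halves of your concern (that a smaller violator could exist, and that it could drop $\black_{\ttime_0}$).
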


\mythmspace{}

\MANUSCRIPTVERSION{
\begin{proof}
"$\Leftarrow$" : 
Assume that $\dist{\ttime}{\black_{\ttime_0}}<\infinity$. Then, by Lemma~\ref{lem:pathineq}, an augmenting path exists from $\black_{\ttime_0}$ for a maximum matching $M$ where $\black_{\ttime_0}$ is free. By Hall's theorem this implies that $\black_{\ttime_0}$ does not break \hc{}.
"$\Rightarrow$" : 
Deferred to Appendix~\refAppDead{}\end{proof}

Lemma~\ref{lem:distBlinf} reveals the following corollary. 
}

\mythmspace{}

\begin{mycorollary}\label{cor:nonMatch}
If $\dist{\ttime}{\black_{\ttime}}$ is infinite, and we are given some maximum matching $M_{\ttime-1}$ for $\Forest_{\ttime-1}$, then there is no augmenting (with respect to $M_{\ttime-1}$) path for $\black_{\ttime}$ in time $\ttime$. 
\end{mycorollary}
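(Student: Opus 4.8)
The plan is to combine Lemma~\ref{lem:distBlinf} with Lemma~\ref{lem:pathineq} directly. Suppose, for contradiction, that $\dist{\ttime}{\black_{\ttime}}=\infinity$ but there exists an augmenting path for $\black_{\ttime}$ with respect to $M_{\ttime-1}$ in $\Forest_{\ttime}$. First I would observe that $M_{\ttime-1}$, being a maximum matching for $\Forest_{\ttime-1}$, is still a (not necessarily maximum) matching in $\Forest_{\ttime}$ in which $\black_{\ttime}$ is free, since $\black_{\ttime}$ and its incident edges are exactly the new part of the graph and none of them lies in $M_{\ttime-1}$. Hence $M = M_{\ttime-1}$ and $\black_{\ttime_0} = \black_{\ttime}$ (with $\ttime_0 = \ttime$) fit the hypotheses of Lemma~\ref{lem:pathineq}.

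Next I would argue by cases. If $\dist{\ttime}{\black_{\ttime}} < \infinity$ were false — which is our assumption — then Lemma~\ref{lem:distBlinf} tells us that $\black_{\ttime}$ breaks \hc{} in time $\ttime$: there is an inclusion-minimal $X \subseteq \Black_{\ttime}$ with $\black_{\ttime} \in X$ and $\size{\neighbours{\ttime}{X}} < \size{X}$. By Hall's theorem applied to $\Forest_{\ttime}$, no matching can saturate all of $\Black_{\ttime}$; more precisely, since $X$ violates Hall's condition, every matching of $\Forest_{\ttime}$ leaves at least one vertex of $X$ unmatched, and in fact (by the standard deficiency version of Hall's theorem, or by Kőnig's theorem) a maximum matching saturates $\size{\neighbours{\ttime}{X}}$ vertices of $\neighbours{\ttime}{X}$ but deficiency-many vertices of $\Black_{\ttime}$ remain free. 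The point is that an augmenting path from $\black_{\ttime}$ with respect to $M_{\ttime-1}$ would, upon flipping, produce a matching of $\Forest_{\ttime}$ of size $\size{M_{\ttime-1}}+1$ that saturates $\black_{\ttime}$ together with everything $M_{\ttime-1}$ already saturated. But $M_{\ttime-1}$ is a maximum matching of $\Forest_{\ttime-1}$, which has the same black side except for $\black_{\ttime}$; one checks that such an augmented matching would be a matching of $\Forest_{\ttime}$ saturating $\black_{\ttime}$, contradicting the fact that $X \ni \black_{\ttime}$ blocks saturation of all of $X$ in a way that, by minimality of $X$, forces $\black_{\ttime}$ itself to be the unsaturated vertex of $X$ in any maximum matching.

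A cleaner route, which I would actually prefer to write, avoids re-deriving Hall-deficiency facts: apply Lemma~\ref{lem:pathineq} in its contrapositive. The lemma says that if $\dist{\ttime}{\black_{\ttime}} < \infinity$ then the shortest augmenting path $\pathTwo_\ttime$ from $\black_{\ttime}$ with respect to any matching $M$ in which $\black_{\ttime}$ is free exists and has length at most $\dist{\ttime}{\black_{\ttime}}$. We are in the situation $\dist{\ttime}{\black_{\ttime}} = \infinity$, so the lemma gives no path; but that alone is not a proof of \emph{non}-existence. So instead I would run the argument forward: if an augmenting path for $\black_{\ttime}$ w.r.t.\ $M_{\ttime-1}$ existed, flipping it yields a matching $M'$ of $\Forest_{\ttime}$ with $\black_{\ttime}$ matched. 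Now by Lemma~\ref{lem:distBlinf}, $\dist{\ttime}{\black_{\ttime}} = \infinity$ means $\black_{\ttime}$ breaks \hc{} in time $\ttime$, i.e.\ $\black_{\ttime}$ lies in a set $X$ with $\size{\neighbours{\ttime}{X}} < \size{X}$; but then $X$ cannot be matched injectively into $\neighbours{\ttime}{X}$, so no matching of $\Forest_{\ttime}$ — in particular not $M'$ — can saturate all of $X$. This does not yet contradict anything, because $M'$ need not saturate all of $X$. To close this gap I would invoke the minimality clause of Definition~\ref{def:breakHC}: for the inclusion-minimal $X$, $\size{\neighbours{\ttime}{X}} = \size{X}-1$ and $X \setminus \set{x}$ satisfies Hall's condition for every $x \in X$, hence $\Forest_{\ttime}[X \cup \neighbours{\ttime}{X}]$ has a matching saturating $X \setminus \set{\black_{\ttime}}$ but the deficiency $1$ forces exactly one vertex of $X$ to be free in \emph{every} maximum matching of that induced piece, and a short exchange argument (standard for minimal Hall-violators) shows it must be $\black_{\ttime}$ whenever $\black_{\ttime}$ is free to begin with — contradicting that $M'$ matches $\black_{\ttime}$.

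The main obstacle is precisely this last point: turning ``$\black_{\ttime}$ lies in some Hall-violating set'' into ``$\black_{\ttime}$ is unmatched in every maximum matching of $\Forest_{\ttime}$.'' Both the paper's stated results give implications in the direction I need for an \emph{upper} bound on path length, not a clean non-existence statement, so the corollary really rests on the deficiency form of Hall's theorem together with the minimality in Definition~\ref{def:breakHC}. I expect the intended proof is the two-line one — combine Corollary's hypothesis with Lemma~\ref{lem:distBlinf} to get that $\black_{\ttime}$ breaks Hall's condition, then note that a vertex in a (minimal) Hall-violator is never matched in a maximum matching of a bipartite graph when it is the distinguished deficient vertex — and I would present it that way, citing the minimality and the standard fact that an augmenting path from $\black_{\ttime}$ would contradict maximality of $M_{\ttime-1}$ on $\Forest_{\ttime-1}$ combined with the structure of $\Forest_{\ttime}$.
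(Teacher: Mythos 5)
Your route through Lemma~\ref{lem:distBlinf} is the natural one and almost certainly what the authors mean by ``follows directly from Hall's theorem'': the paper's own proof is exactly that one sentence, so the whole content is in how you close the Hall-violation argument, and that is where your write-up is not yet sound. Your intermediate steps are fine (that $M_{\ttime-1}$ is a matching of $\Forest_\ttime$ with $\black_\ttime$ free; that Lemma~\ref{lem:pathineq} gives only an upper bound, not non-existence; that the inclusion-minimal violator $X\ni\black_\ttime$ has $\size{\neighbours{\ttime}{X}}=\size{X}-1$ and every proper subset satisfies \hc{}). But the final clause, ``a short exchange argument shows it must be $\black_\ttime$ whenever $\black_\ttime$ is free to begin with,'' is circular as written: the thing to be proved is precisely that $\black_\ttime$ stays free, and a minimal Hall violator alone does not single out a canonical free vertex (consider $b_1,b_2$ both adjacent only to $w$; each can be the matched one). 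Nothing in that sentence invokes the maximality of $M_{\ttime-1}$ on $\Forest_{\ttime-1}$, which is the hypothesis that actually forces the contradiction.

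Here is the step that fills the gap and makes the exchange argument you gesture at precise. Suppose, for contradiction, an augmenting path for $\black_\ttime$ exists; flip it to get a matching $M'$ of $\Forest_\ttime$ with $\size{M'}=\size{M_{\ttime-1}}+1$ and $\black_\ttime$ matched. Since $X$ is a minimal Hall violator with $\size{\neighbours{\ttime}{X}}=\size{X}-1$, every matching leaves exactly one vertex of $X$ unmatched, and the set of vertices of $X$ reachable from that free vertex by $M'$-alternating paths inside $\Forest_\ttime\sqbra{X\cup\neighbours{\ttime}{X}}$ is itself a Hall violator; by minimality it is all of $X$. So there is an $M'$-alternating path inside $X\cup\neighbours{\ttime}{X}$ from the $M'$-free vertex of $X$ to $\black_\ttime$; flipping it yields a matching $M''$ of $\Forest_\ttime$ with $\size{M''}=\size{M'}$ in which $\black_\ttime$ is free. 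But then $M''$ uses no edge incident to $\black_\ttime$, hence is a matching of $\Forest_{\ttime-1}$ of size $\size{M_{\ttime-1}}+1$, contradicting the maximality of $M_{\ttime-1}$. (Equivalently, in the deficiency form of K\H{o}nig's theorem: submodularity of $\size{\neighbours{\ttime}{\cdot}}$ plus minimality of $X$ shows the maximum deficiency of $\Forest_\ttime$ strictly exceeds that of $\Forest_{\ttime-1}$, so $\nu(\Forest_\ttime)=\nu(\Forest_{\ttime-1})$ and no augmentation can exist.) With this replacement your proof is correct and coincides with the paper's intended Hall/K\H{o}nig argument.
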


\mythmspace{}

\MANUSCRIPTVERSION{
\begin{proof}
Follows directly from Hall's theorem. 
\end{proof}
}

We now move on to defining dead vertices. The definitions may not seem very intuitive, but we provide some intuition shortly after introducing them.

\mythmspace{}

\begin{definition}\label{def:deadBlack}
We say that a vertex $\black_{\ttime_0} \in \Black$ is \emph{dead} in turn $\ttime \geq  \ttime_0$ iff $\secondDist{\ttime}{\black_{\ttime_0}} = \infinity.$ 
\end{definition}

\mythmspace{}

Definition~\ref{def:deadBlack} combined with Lemma~\ref{lem:distBlinf} implies, that any black vertex that breaks Hall's condition in time $t$ is dead in time $t$, but not necessarily the other way around.

\mythmspace{}

\begin{definition}\label{def:deadWhite}
A white vertex $\white \in \White$ is \emph{dead} in time $\ttime$  iff  $\dist{\ttime}{\white} = \infinity.$ 
\end{definition}

\mythmspace{}

We say that a vertex is alive iff it is not dead. We denote as $\Alive_{\ttime}$ the set of vertices of $\Black_{\ttime} \cup \White$ that are alive in turn $t$ and as $\Dead_{\ttime}$ the set of vertices of $\Black_{\ttime} \cup \White$ that are dead in turn $\ttime$. If $v \in \Alive_{\ttime-1} \cap \Dead_{\ttime}$, we say that $v$ dies in turn $\ttime$. Note that due to monotonicity (Observation~\ref{obs:monotonic}), once a vertex dies, it never comes back alive.
The following observations bring some intuition into the picture of dead versus alive vertices. Observation~\ref{obs:aliveness} given below follows from Definitions~\ref{def:distdir},~\ref{def:deadBlack}~and~\ref{def:deadWhite}.

\mythmspace{}

\begin{myobservation}\label{obs:aliveness}
\MANUSCRIPTVERSION{$ $}
\begin{enumerate}
 \item A black leaf is dead from the moment it arrives.
 \item A black vertex is dead iff it has at most one alive neighbour.
 \item A white vertex is dead iff it has at least one dead neighbour.
\end{enumerate}
\end{myobservation}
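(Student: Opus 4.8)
The plan is to unfold Definitions~\ref{def:distdir},~\ref{def:deadBlack} and~\ref{def:deadWhite} through a re‑rootable version of the revenue function. For an edge $xy$ of $\Forest_{\ttime}$ I write $r(x\to y)$ for $\treeDist{R}{y}$, where $R$ is the component of $y$ in $\Forest_{\ttime}$ with the edge $xy$ deleted, rooted at $y$; since $\treeDist{}{}$ of a vertex depends only on the subtree hanging below it, $r(x\to y)=\treeDist{\RootedTreeOne}{y}$ whenever $\RootedTreeOne$ is the component of $x$ rooted at $x$. Unwinding Definition~\ref{def:minimax} yields: if $y$ is white then $r(x\to y)=\max\{\,r(y\to z)+1 : z\in\neighbours{\ttime}{y}\setminus\{x\}\,\}$ (empty maximum $0$), so $r(x\to y)=\infinity$ iff some such $z$ has $r(y\to z)=\infinity$; if $y$ is black then $r(x\to y)=\min\{\,r(y\to z)+1 : z\in\neighbours{\ttime}{y}\setminus\{x\}\,\}$ (empty minimum $\infinity$), so $r(x\to y)=\infinity$ iff every such $z$ has $r(y\to z)=\infinity$ (in particular if $x$ is the only neighbour of $y$). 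Feeding these into Definitions~\ref{def:distdir},~\ref{def:deadWhite} and~\ref{def:deadBlack}, I would record two working facts: a white $w$ is dead iff $r(w\to z)=\infinity$ for some $z\in\neighbours{\ttime}{w}$ (from $\dist{\ttime}{w}=\max_{z\in\neighbours{\ttime}{w}} r(w\to z)+1$); and a black $b$ is dead iff at most one $z\in\neighbours{\ttime}{b}$ satisfies $r(b\to z)<\infinity$ (since $\secondDist{\ttime}{b}$ is the second smallest of the values $r(b\to z)+1$, and this is $\infinity$ precisely when fewer than two of them are finite).

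Item~(1) is then immediate: a black leaf has at most one neighbour at every time after its arrival, hence at most one finite $r(b\to\cdot)$, hence it is dead. For item~(2) the crucial step is the equivalence, for a black $b$ and a neighbour $w\in\neighbours{\ttime}{b}$: \emph{$w$ is alive iff $r(b\to w)<\infinity$ and $b$ is alive.} To see it, note $w$ is alive iff every $r(w\to z)$, $z\in\neighbours{\ttime}{w}$, is finite; splitting $\neighbours{\ttime}{w}$ into $\{b\}$ and the rest turns this into ``$r(b\to w)<\infinity$ (all $r(w\to z)$ finite for $z\neq b$) and $r(w\to b)<\infinity$ (some $z\in\neighbours{\ttime}{b}\setminus\{w\}$ has $r(b\to z)<\infinity$)''; and, since the first conjunct already says $w$ contributes a finite value to $b$, the second conjunct is equivalent to ``at least two of the $r(b\to\cdot)$ are finite'', i.e.\ to $b$ being alive. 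Given the equivalence: if $b$ is dead then the right‑hand side fails for every $w$, so $b$ has no alive neighbour, in particular at most one; if $b$ is alive then at least two of its $r(b\to\cdot)$ are finite and the corresponding neighbours are alive, so $b$ has at least two alive neighbours. This is exactly~(2) (and shows, as a by‑product, that a black vertex never has exactly one alive neighbour).

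For item~(3), suppose first that $w$ is dead and pick $z\in\neighbours{\ttime}{w}$ with $r(w\to z)=\infinity$. As $z$ is black, this forces every $y\in\neighbours{\ttime}{z}\setminus\{w\}$ to have $r(z\to y)=\infinity$ (or $z$ to have no neighbour besides $w$), so at most one of the $r(z\to\cdot)$ is finite and $z$ is dead; thus $w$ has a dead neighbour. Conversely, let $b\in\neighbours{\ttime}{w}$ be dead. If $r(b\to w)=\infinity$ then, $w$ being white, some $z\in\neighbours{\ttime}{w}\setminus\{b\}$ has $r(w\to z)=\infinity$, so $w$ is dead. If instead $r(b\to w)<\infinity$ then, since $b$ is dead, $w$ is the unique neighbour of $b$ with finite $r(b\to\cdot)$; hence every $y\in\neighbours{\ttime}{b}\setminus\{w\}$ has $r(b\to y)=\infinity$, which gives $r(w\to b)=\infinity$ and again $w$ is dead.

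I expect item~(3) to be the main obstacle. In item~(2) the status of $w$ is controlled by a single edge‑local quantity together with the status of $b$, whereas for~(3) no edge‑local equivalence holds --- for instance $r(b\to w)=\infinity$ may be caused by structure on $w$'s side that has nothing to do with $b$ --- so one must keep the two cases $r(b\to w)<\infinity$ and $r(b\to w)=\infinity$ separate and, in the finite case, genuinely use that $b$ is dead in order to pin down its unique finite branch. Everything else is routine manipulation of the two recurrences above.
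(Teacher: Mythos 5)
Your proof is correct, and it follows the route the paper intends: the paper does not write out a proof (it states that the observation ``follows from Definitions~\ref{def:distdir},~\ref{def:deadBlack} and~\ref{def:deadWhite}''), and your $r(x\to y)$ is exactly the re-rooted revenue $\detDist{\ttime}{x}{y}$ that the paper itself introduces later (Appendix~\refAppProof{}) to make such unwindings clean. The two working reformulations you record (white $w$ dead iff some $r(w\to z)=\infinity$; black $b$ dead iff at most one $r(b\to z)<\infinity$), the edge-local equivalence for item~(2), and the case split on $r(b\to w)$ for item~(3) are all sound, and your by-product — a black vertex has either zero or at least two alive neighbours, never exactly one — is a correct and useful sharpening of item~(2).
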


\mythmspace{}

The intuition behind dead vertices is that they determine regions of $\Forest_{\ttime}$ where \hc{} is either broken or tight. The mini-max paths in turn $\ttime$ that correspond to finite mini-max distances do not visit vertices that were dead in turn $\ttime-1$. Moreover, if a mini-max path in $\Forest_\ttime$ (whose corresponding mini-max distance is finite) enters a vertex that is alive in turn $\ttime$, it does not visit anymore vertices dead in turn $\ttime$. We state this formally as Lemma~\ref{lem:mmPathsAlive}. This reflects the behavior of augmenting paths. If a maximum matching is maintained, then the augmenting path from turn $\ttime$ does not enter the regions where \hc{} is tight in $\Forest_{\ttime-1}$. 

\mythmspace{}

\begin{restatable}{mylemma}{lemmmPathsAlive}
\begin{restatablelemma}[{{\rm Appendix~\refAppDead{} in \cite{DBLP:journals/corr/BosekLZS17}}}]
\label{lem:mmPathsAlive}
Let $\ttime\in\range{n} $ and $\firstVertex\in \Alive_{\ttime}$.
Pick any vertex as a root of the connected component of $v$ in $\Forest_{\ttime}$ and let $\RootedTreeOne$ be the corresponding rooted tree. Then $\Vertices\bra{\mmpath{\RootedTreeOne}{\firstVertex}}\subseteq\Alive_{\ttime}$.
\end{restatablelemma}
\end{restatable}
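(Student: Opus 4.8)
The plan is to prove the statement by induction along the mini-max path. Write $\mmpath{\RootedTreeOne}{\firstVertex}=v_0 v_1 \cdots v_k$ with $v_0=\firstVertex$, where by Definition~\ref{def:minimax} we have $v_{j+1}=\treeDir{\RootedTreeOne}{v_j}$ as long as $v_j$ has a child in $\RootedTreeOne$, and $v_k$ (the last vertex) has none. The base case $v_0\in\Alive_{\ttime}$ is the hypothesis, so it is enough to show that if $v_j\in\Alive_{\ttime}$ and $v_{j+1}$ exists then $v_{j+1}\in\Alive_{\ttime}$; I would split on the colour of $v_j$. If $v_j\in\White$, the step is immediate from Observation~\ref{obs:aliveness}(3): an alive white vertex has all its neighbours alive, and $v_{j+1}$ is a child, hence a neighbour, of $v_j$.

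The substantial case is $v_j\in\Black$. Then $v_{j+1}\in\White$ because the tree is bipartite, so by Definition~\ref{def:deadWhite} it suffices to prove $\dist{\ttime}{v_{j+1}}<\infinity$. Since $v_j$ is an alive black vertex, Definition~\ref{def:deadBlack} gives $\secondDist{\ttime}{v_j}<\infinity$. Let $\RootedTreeThree$ be the connected component of $v_j$ rooted at $v_j$ itself; unfolding Definitions~\ref{def:distdir} and~\ref{def:minimax}, $\secondDist{\ttime}{v_j}$ equals $1$ plus the second smallest value of $\{\treeDist{\RootedTreeThree}{w} : w\in\neighbours{\ttime}{v_j}\}$, so at least two neighbours $w$ of $v_j$ satisfy $\treeDist{\RootedTreeThree}{w}<\infinity$. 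The bookkeeping fact I would isolate first is that the revenue attached to a neighbour $w$ of $v_j$ depends only on the subtree on $w$'s side of the edge $v_j w$ (with $w$ as its entry point), so it equals $\treeDist{\RootedTreeOne}{w}$ whenever $w$ is a child of $v_j$ in $\RootedTreeOne$, and it equals the revenue of $w$ in the component rooted at $v_{j+1}$ whenever $w\neq v_{j+1}$.

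With this, the argument is short. In $\RootedTreeOne$, the set $\children{\RootedTreeOne}{v_j}$ is $\neighbours{\ttime}{v_j}$ with at most $\parent{\RootedTreeOne}{v_j}$ removed, hence $\treeDist{\RootedTreeOne}{v_j}=1+\min_w \treeDist{\RootedTreeThree}{w}$ over a subset of $\neighbours{\ttime}{v_j}$ missing at most one element; as the two smallest of these values are finite, this minimum is finite, and since $v_{j+1}=\treeDir{\RootedTreeOne}{v_j}$ attains it we get $\treeDist{\RootedTreeOne}{v_{j+1}}=\treeDist{\RootedTreeOne}{v_j}-1<\infinity$. Now re-root the component at $v_{j+1}$: its children there are $v_j$ together with the vertices of $\children{\RootedTreeOne}{v_{j+1}}$, so $\dist{\ttime}{v_{j+1}}$ equals the larger of $\treeDist{\RootedTreeOne}{v_{j+1}}$ (read as $0$ when $v_{j+1}$ is a leaf of $\RootedTreeOne$) and $1$ plus the revenue of $v_j$ in this new rooting. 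The first quantity is finite by the previous step; for the second, the children of $v_j$ in the new rooting are $\neighbours{\ttime}{v_j}\setminus\{v_{j+1}\}$, so that revenue is again $1+\min_w\treeDist{\RootedTreeThree}{w}$ over $\neighbours{\ttime}{v_j}$ with a single element removed, hence finite. Therefore $\dist{\ttime}{v_{j+1}}<\infinity$, so $v_{j+1}\in\Alive_{\ttime}$, and the induction closes.

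I expect the main obstacle to be precisely this black case, and inside it the two finiteness claims: both amount to saying that removing one neighbour from $\neighbours{\ttime}{v_j}$ still leaves a neighbour of finite revenue, which is exactly what $\secondDist{\ttime}{v_j}<\infinity$ provides once one is careful to match the revenue of a fixed subtree across the three rootings used ($\RootedTreeOne$, the component rooted at $v_j$, and the component rooted at $v_{j+1}$). A tidy way to package this is to prove once a small auxiliary claim --- for a black vertex $b$ with $\secondDist{\ttime}{b}<\infinity$, the revenue of $b$ computed in the subtree below $b$ is finite in every rooting of its component --- after which both uses of finiteness above become instances of that claim.
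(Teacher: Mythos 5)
Your proposal is correct and follows essentially the same route as the paper's own proof: induction along the path, dispatching the white step via Observation~\ref{obs:aliveness}(3), and handling the black step by re-rooting at $v_{j+1}$ and bounding the revenues of its children (split into $v_j$ and the rest) using $\secondDist{\ttime}{v_j}<\infinity$. The paper packages the key estimate as the one-line inequality $\treeDist{\RootedTreeOne}{\black'}\leqslant\secondDist{\ttime}{\black'}$ valid for any rooting $\RootedTreeOne$ and any black $\black'$, whereas you re-derive its content inline via the count that at least two neighbours of $v_j$ carry finite subtree revenue --- a purely cosmetic difference.
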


\mythmspace{}

\MANUSCRIPTVERSION{
\begin{proof}
See Appendix~\refAppDead{}.
\end{proof}
}

In the remainder of this section we specify precisely which vertices die in turn $t$. The first lemma does not describe who dies or stays alive, but it is an important complement of the subsequent two lemmas, which cover all the situations when vertices die.

\mythmspace{}

\begin{restatable}{mylemma}{lemaliveneighbour}
\begin{restatablelemma}[{{\rm Appendix~\refAppDead{} in \cite{DBLP:journals/corr/BosekLZS17}}}]
\label{lem:aliveneighbour}
If $\black_{\ttime}$ does not break \hc{} in turn $t$, then $\black_{\ttime}$ has at least one neighbour in $\Forest_{\ttime}$ which was alive in turn $\ttime-1$.
\end{restatablelemma}
\end{restatable}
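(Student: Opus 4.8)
The plan is to route the argument through the mini-max distance of $\black_\ttime$, using the equivalence between finiteness of $\dist{\ttime}{\cdot}$ and not breaking \hc{} provided by Lemma~\ref{lem:distBlinf}, together with the elementary fact that deleting the newest vertex from $\Forest_\ttime$ recovers $\Forest_{\ttime-1}$.

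First I would apply Lemma~\ref{lem:distBlinf} with $\ttime_0 = \ttime$: since $\black_\ttime$ does not break \hc{} in turn $\ttime$, we get $\dist{\ttime}{\black_\ttime} < \infinity$. Let $\RootedTreeOne$ be the connected component of $\black_\ttime$ in $\Forest_\ttime$ rooted at $\black_\ttime$, so that $\dist{\ttime}{\black_\ttime} = \treeDist{\RootedTreeOne}{\black_\ttime}$ by Definition~\ref{def:distdir}. The children of $\black_\ttime$ in $\RootedTreeOne$ are exactly its neighbours in $\Forest_\ttime$; these are all white and all already present in $\Forest_{\ttime-1}$, since the white vertices are given up front. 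Because $\dist{\ttime}{\black_\ttime} < \infinity$, the child set $\children{\RootedTreeOne}{\black_\ttime}$ is non-empty, and the black-vertex clause of Definition~\ref{def:minimax} yields
\[
\dist{\ttime}{\black_\ttime} \;=\; 1 + \min_{\white \in \children{\RootedTreeOne}{\black_\ttime}} \treeDist{\RootedTreeOne}{\white} .
\]

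The one structural point to verify is that $\treeDist{\RootedTreeOne}{\white} = \dist{\ttime-1}{\white}$ for each such child $\white$. Since $\treeDist{}{\cdot}$ is defined by a recursion that only refers to descendants, $\treeDist{\RootedTreeOne}{\white}$ depends only on the subtree of $\RootedTreeOne$ rooted at $\white$; because $\Forest_\ttime$ with $\black_\ttime$ deleted is exactly $\Forest_{\ttime-1}$ and $\black_\ttime$ is the parent of $\white$ in $\RootedTreeOne$, that subtree is precisely the connected component of $\white$ in $\Forest_{\ttime-1}$ rooted at $\white$, whence $\treeDist{\RootedTreeOne}{\white} = \dist{\ttime-1}{\white}$ by Definition~\ref{def:distdir}. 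Substituting into the displayed identity and using $\dist{\ttime}{\black_\ttime} < \infinity$, there is a neighbour $\white$ of $\black_\ttime$ with $\dist{\ttime-1}{\white} < \infinity$, i.e.\ $\white \in \White$ is alive in turn $\ttime-1$ by Definition~\ref{def:deadWhite}; this is the required neighbour.

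I do not expect a genuine obstacle here: the proof is short, and essentially all of it is the bookkeeping identity $\Forest_\ttime \setminus \{\black_\ttime\} = \Forest_{\ttime-1}$ plus the small care needed to check that the rooting used to define $\dist{\ttime-1}{\white}$ (the component of $\white$ rooted at $\white$) coincides with the rooted subtree hanging below $\white$ inside $\RootedTreeOne$. An alternative that avoids Lemma~\ref{lem:distBlinf} is to argue by contrapositive: assuming every neighbour of $\black_\ttime$ is dead in turn $\ttime-1$, one assembles from the sets witnessing their deadness a minimal $X \subseteq \Black_\ttime$ with $\black_\ttime \in X$ and $\size{\neighbours{\ttime}{X}} < \size{X}$; this works but is more cumbersome than the route through $\dist{}{\cdot}$.
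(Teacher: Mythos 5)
Your proof is correct and takes essentially the same route as the paper's: both appeal to Lemma~\ref{lem:distBlinf} to conclude $\dist{\ttime}{\black_\ttime}<\infinity$, then expand the mini-max recursion at the root $\black_\ttime$ and identify $\treeDist{\RootedTreeOne}{\white}$ with $\dist{\ttime-1}{\white}$ for each white neighbour $\white$ to extract a neighbour alive in turn $\ttime-1$. You merely spell out the subtree-identification step that the paper leaves implicit in its chain of equalities.
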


\mythmspace{}

\MANUSCRIPTVERSION{
\begin{proof}
See Appendix~\refAppDead{}.
\end{proof}
}

So if a black vertex added in turn $\ttime$ does not break \hc{}, then it has at least one neighbour who was alive in turn $\ttime-1$. The next two lemmas cover two cases. The first lemma states that if the new black vertex has at least two such neighbours, then no vertices die in turn $\ttime$. If, however, it has exactly one such neigbour, then some vertices die in turn $\ttime$ and the second lemma describes precisely which ones. 

\mythmspace{}

\begin{restatable}{mylemma}{lemifatleasttwoalive}
\begin{restatablelemma}[{{\rm Appendix~\refAppDead{} in \cite{DBLP:journals/corr/BosekLZS17}}}]
\label{lem:ifatleasttwoalive}
If $\black_{\ttime}$ has at least two neighbours which are alive in turn $\ttime-1$ then
$\black_{\ttime}$ is alive in turn $\ttime$ and no vertex dies in turn $\ttime$.
\end{restatablelemma}
\end{restatable}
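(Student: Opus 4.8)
The plan is to track aliveness directly through the mini-max recursion. All edges incident to $\black_{\ttime}$ lead to white vertices, and since no new edge may close a cycle these white neighbours lie in pairwise distinct components $C_1,\dots,C_k$ of $\Forest_{\ttime-1}$; write $\white_i$ for the neighbour lying in $C_i$, and let $C=C_1\cup\dots\cup C_k\cup\{\black_{\ttime}\}$ be the component into which they merge in $\Forest_{\ttime}$. Every other component of $\Forest_{\ttime-1}$ is untouched, so aliveness is unchanged there, and by monotonicity (Observation~\ref{obs:monotonic}) a dead vertex stays dead, so $\Alive_{\ttime}\subseteq\Alive_{\ttime-1}\cup\{\black_{\ttime}\}$. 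It therefore remains to show that $\black_{\ttime}$ is alive in turn $\ttime$ and that $\Alive_{\ttime-1}\subseteq\Alive_{\ttime}$, the latter being exactly the assertion that no vertex dies. That $\black_{\ttime}$ is alive is the easy half: rooting $C$ at $\black_{\ttime}$, its children are $\white_1,\dots,\white_k$ and the subtree below $\white_i$ is precisely $C_i$, so $\white_i$ has revenue $\dist{\ttime-1}{\white_i}$ there; at least two of these revenues are finite (the two alive neighbours, by Definition~\ref{def:deadWhite}), hence $\secondDist{\ttime}{\black_{\ttime}}<\infinity$.

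For the rest I would isolate the following claim. Fix $i$ and a vertex $v\in C_i$; let $R'$ be $C_i$ rooted at $v$ and $R$ be $C$ rooted at $v$, so that $R$ is obtained from $R'$ by attaching $\black_{\ttime}$, carrying the subtrees of the other $C_j$'s, as a fresh child of $\white_i$. Claim: for every vertex $x$ on the $v$--$\white_i$ path, $\treeDist{R'}{x}<\infinity$ implies $\treeDist{R}{x}<\infinity$. The proof rests on two points. First, any subtree of $R$ not containing $\white_i$ is literally a subtree of $R'$, so its root keeps its revenue; and because $\black_{\ttime}$ has at least two neighbours alive in turn $\ttime-1$, at most one of which is $\white_i$, it has a child $\white_j$ with $j\ne i$ and $\treeDist{R}{\white_j}=\dist{\ttime-1}{\white_j}<\infinity$, so $\treeDist{R}{\black_{\ttime}}<\infinity$ ($\black_{\ttime}$ being black), and hence $\treeDist{R}{\white_i}=\max\{\treeDist{R'}{\white_i},\,\treeDist{R}{\black_{\ttime}}+1\}$ is finite whenever $\treeDist{R'}{\white_i}$ is. Second, if some path vertex $x=p_l$ violated the claim then $x\ne\white_i$ by the first point, and the next path vertex $p_{l+1}$ would violate it too: when $p_l$ is white, the child forcing $\treeDist{R}{p_l}=\infinity$ cannot be off the path (an off-path witness is unchanged, hence would already force $\treeDist{R'}{p_l}=\infinity$), so it is $p_{l+1}$, whose old revenue is finite because its white parent's is; when $p_l$ is black, all of its children have infinite revenue in $R$, yet $\treeDist{R'}{p_l}<\infinity$ forces one child to have finite old revenue, and that child cannot be off the path (its unchanged subtree would keep $p_l$ finite in $R$), so again it is $p_{l+1}$. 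Iterating, $\white_i$ would violate the claim, contradicting the first point.

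Granting the claim, no vertex dies. Outside $C$ nothing changes. If $v\in C_i$ is white and alive in turn $\ttime-1$, then $\dist{\ttime-1}{v}=\treeDist{R'}{v}<\infinity$, so the claim with $x=v$ gives $\dist{\ttime}{v}=\treeDist{R}{v}<\infinity$ and $v$ survives; this also covers the alive neighbours $\white_i$ of $\black_{\ttime}$. If $v\in C_i$ is black and alive in turn $\ttime-1$, then, rooting at $v$, it has at least two children of finite revenue in $R'$ (equivalently $\secondDist{\ttime-1}{v}<\infinity$), at most one of which lies on the $v$--$\white_i$ path; that one stays finite in $R$ by the claim and the others stay finite because their subtrees are unchanged, so $v$ still has at least two children of finite revenue in $R$, i.e. $\secondDist{\ttime}{v}<\infinity$ and $v$ survives. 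Together with $\black_{\ttime}\in\Alive_{\ttime}$ this gives $\Alive_{\ttime}=\Alive_{\ttime-1}\cup\{\black_{\ttime}\}$, so no vertex dies in turn $\ttime$.

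The step I expect to be the real obstacle is the claim of the second paragraph. The tempting but wrong intuition is that attaching a subtree of finite revenue ``can only help''; it fails exactly at white (max-)vertices whose revenue is already infinite. The remedy is to confine the possible damage to the $v$--$\white_i$ path and to exploit that $\white_i$ is shielded -- its new revenue is a maximum with a finite term -- so the damage cannot actually reach it; the only genuinely delicate case is a black path-vertex whose minimum ``routes around'' an infinite child, where one really needs that off-path subtrees are left untouched.
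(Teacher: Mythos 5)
Your proof is correct and follows essentially the same route as the paper's: both root at the vertex $u$ to be checked, observe that only revenues along the $u$--$\white_i$ path change, show that $\white_i$'s revenue stays finite by exploiting a second alive neighbour $\white_j$ ($j\ne i$) of $\black_{\ttime}$ (so $\treeDist{R}{\black_{\ttime}}<\infinity$ and hence $\treeDist{R}{\white_i}=\max\{\treeDist{R'}{\white_i},\treeDist{R}{\black_{\ttime}}+1\}$ inherits finiteness), and then propagate that finiteness back up the path to $u$. Where you genuinely add value is in making the propagation rigorous: the paper compresses it into a one-line ``these two observations imply that it suffices to show $\treeDist{\RootedTreeOne}{\white}<\infinity\Rightarrow\treeDist{\RootedTreeTwo}{\white}<\infinity$,'' whereas you prove it explicitly (your Point 2 contradiction walk, plus the separate treatment of a black $u$ via counting finite-revenue children to get $\secondDist{\ttime}{u}<\infinity$), and you are right to flag the black path vertex as the only place where the argument needs the off-path subtrees to be literally unchanged. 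One minor imprecision: your phrase ``any subtree of $R$ not containing $\white_i$ is literally a subtree of $R'$'' should be restricted to subtrees rooted at vertices of $C_i$ — the subtrees hanging in the other $C_j$ are not in $R'$ at all — but this is exactly how you actually use the statement, so it is a wording issue, not a gap.
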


\mythmspace{}

\MANUSCRIPTVERSION{
\begin{proof}
See Appendix~\refAppDead{}.
\end{proof}
}

The last lemma covers the only case when vertices die in turn $\ttime$. It shows that there is a certain region around $\black_{\ttime}$ where vertices die, and a barrier for that region are special vertices called life portals, defined below. The picture illustrating which region dies in turn $\ttime$ is given in Figure~\ref{fig:dead_parts}. 

\mythmspace{}

\begin{definition}\label{def:lifePortals}
A black vertex $\black$ is a life portal in turn $\ttime$ iff $\size{\neighbours{\ttime}{\black} \cap \Alive_{\ttime-1}} \geq 3$. The set of life portals in turn $\ttime$ is denoted as $\lp{\ttime}$.  
\end{definition}

\mythmspace{}

\begin{figure}[htbp]\fontsize{10}{10}\selectfont

\myfigspace{}

\centering \def\svgscale{0.9} \resizebox{0.6\textwidth}{!}{
\executeiffilenewer{dead_parts_2.svg}{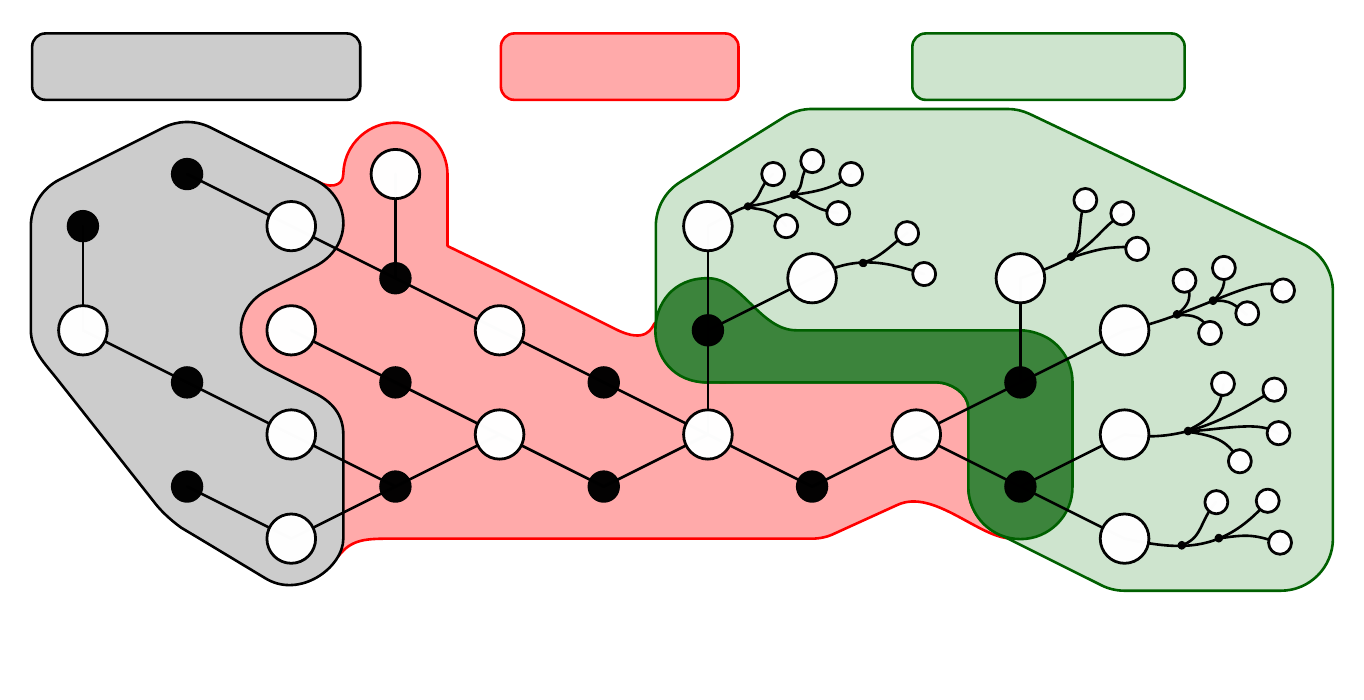} {inkscape -z -C --file=dead_parts_2.svg --export-pdf=dead_parts_2.pdf
--export-latex}
\begingroup
  \makeatletter
  \providecommand\color[2][]{
    \errmessage{(Inkscape) Color is used for the text in Inkscape, but the package 'color.sty' is not loaded}
    \renewcommand\color[2][]{}
  }
  \providecommand\transparent[1]{
    \errmessage{(Inkscape) Transparency is used (non-zero) for the text in Inkscape, but the package 'transparent.sty' is not loaded}
    \renewcommand\transparent[1]{}
  }
  \providecommand\rotatebox[2]{#2}
  \ifx\svgwidth\undefined
    \setlength{\unitlength}{392.76153324bp}
    \ifx\svgscale\undefined
      \relax
    \else
      \setlength{\unitlength}{\unitlength * \real{\svgscale}}
    \fi
  \else
    \setlength{\unitlength}{\svgwidth}
  \fi
  \global\let\svgwidth\undefined
  \global\let\svgscale\undefined
  \makeatother
  \begin{picture}(1,0.51305395)
    \put(0,0){\includegraphics[width=\unitlength,page=1]{dead_parts_2.pdf}}
    \put(0.37384202,0.45391129){\color[rgb]{0,0,0}\makebox(0,0)[lb]{\smash{Die in turn $t$}}}
    \put(0.030122,0.45391129){\color[rgb]{0,0,0}\makebox(0,0)[lb]{\smash{Dead in turn $t-1$}}}
    \put(0.67937092,0.45391129){\color[rgb]{0,0,0}\makebox(0,0)[lb]{\smash{Alive in turn $t$}}}
    \put(0,0){\includegraphics[width=\unitlength,page=2]{dead_parts_2.pdf}}
    \put(0.44258607,0.03380892){\color[rgb]{1,1,1}\makebox(0,0)[lb]{\smash{Life portals}}}
    \put(0,0){\includegraphics[width=\unitlength,page=3]{dead_parts_2.pdf}}
    \put(0.30635146,0.12967427){\color[rgb]{0,0,0}\makebox(0,0)[lb]{\smash{$b_t$}}}
  \end{picture}
\endgroup
  }
\caption{The vertices that die in turn $\ttime$.}
\label{fig:dead_parts}

\myfigspace{}

\end{figure}

\mythmspace{}

\begin{restatable}{mylemma}{lemdispatching}
\begin{restatablelemma}[{{\rm Appendix~\refAppDead{} in \cite{DBLP:journals/corr/BosekLZS17}}}]
\label{lem:dispatching}
\MANUSCRIPTVERSION{
If $\black_{\ttime}$ has exactly one neighbour in $\Forest_{\ttime}$ which was alive in turn $\ttime-1$ and there is a
path $\pathOne$ from $\black_{\ttime}$ to $v \in \Black_{\ttime} \cup \White$ such that
\begin{enumerate}
\item
all vertices of $\pathOne$ were alive in turn $\ttime-1$ and
\item
there are no life portals from $\lp{\ttime}$ on $\pathOne$,
\end{enumerate}
then $v$ dies in turn $\ttime$.
Vertices of $\Black_{\ttime} \cup \White$ that cannot be reached from $\black_{\ttime}$ via such path do not die in turn $\ttime$.
}
\LNCSVERSION{
If $\black_{\ttime}$ has exactly one neighbour in $\Forest_{\ttime}$ which was alive in turn $\ttime-1$ and there is a
path $\pathOne$ from $\black_{\ttime}$ to $v \in \Black_{\ttime} \cup \White$ such that
(1) all vertices of $\pathOne$ were alive in turn $\ttime-1$ and
(2) there are no life portals from $\lp{\ttime}$ on $\pathOne$,
then $v$ dies in turn $\ttime$.
Vertices of $\Black_{\ttime} \cup \White$ that cannot be reached from $\black_{\ttime}$ via such path do not die in turn $\ttime$.
}
\end{restatablelemma}
\end{restatable}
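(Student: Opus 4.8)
\emph{Setup and notation.} Let $\white^{*}$ be the unique neighbour of $\black_{\ttime}$ in $\Forest_{\ttime}$ that was alive in turn $\ttime-1$; every other neighbour of $\black_{\ttime}$ was dead in turn $\ttime-1$, hence also in turn $\ttime$ by monotonicity (Observation~\ref{obs:monotonic}). I would first record three elementary facts. (i) $\black_{\ttime}$ is dead in turn $\ttime$: by the previous sentence it has at most one alive neighbour in turn $\ttime$, so Observation~\ref{obs:aliveness}(2) applies. (ii) $\black_{\ttime}\notin\lp{\ttime}$, since it has only one neighbour alive in turn $\ttime-1$. (iii) Adding $\black_{\ttime}$ enlarges only the neighbourhoods of the (white) neighbours of $\black_{\ttime}$, and of those only $\white^{*}$ is alive in turn $\ttime-1$; hence if $v\neq\white^{*}$ is alive in turn $\ttime-1$, or if $v\neq\black_{\ttime}$ is black, then $\neighbours{\ttime}{v}=\neighbours{\ttime-1}{v}$. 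Call a path $\pathOne=\black_{\ttime}\cdot x_{1}\cdots x_{k}$ with $k\ge1$ \emph{good} if $x_{1},\dots,x_{k}\in\Alive_{\ttime-1}$ and no $x_{i}$ lies in $\lp{\ttime}$ (this is the intended reading of conditions (1)--(2) for the vertices other than $\black_{\ttime}$), and let $R$ be the set of endpoints of good paths. Then $R\subseteq\Alive_{\ttime-1}$, and $\white^{*}\in R$ by (i) and Observation~\ref{obs:aliveness}(3). The degenerate case $v=\black_{\ttime}$ is harmless: $\black_{\ttime}$ is born dead and lies outside $R$, consistently with the claim. So it remains to show that the set of vertices dying in turn $\ttime$ is exactly $R$ (cf.\ Figure~\ref{fig:dead_parts}).

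\emph{Every vertex of $R$ dies.} I would prove this by induction on $\length{\pathOne}$. If $\length{\pathOne}=1$ the endpoint is $\white^{*}$, which is white and has the dead neighbour $\black_{\ttime}$, hence is dead in turn $\ttime$ by Observation~\ref{obs:aliveness}(3); it was alive in turn $\ttime-1$, so it dies. If $\length{\pathOne}\ge2$, let $v$ be the endpoint and $u$ its predecessor; the prefix ending at $u$ is a shorter good path, so by induction $u$ dies in turn $\ttime$, in particular $u\in\Dead_{\ttime}$. If $v$ is white it now has the dead neighbour $u$, so $v\in\Dead_{\ttime}$ by Observation~\ref{obs:aliveness}(3). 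If $v$ is black (so $v\neq\black_{\ttime}$), then $v\in\Alive_{\ttime-1}$ together with Observation~\ref{obs:aliveness}(2) at time $\ttime-1$ gives $v$ at least two neighbours alive in turn $\ttime-1$, while $v\notin\lp{\ttime}$ and (iii) give it at most two such neighbours — hence exactly two, one of them $u$. The remaining neighbours of $v$ were dead already in turn $\ttime-1$, and $u\in\Dead_{\ttime}$, so $v$ keeps at most one alive neighbour in turn $\ttime$ and is dead by Observation~\ref{obs:aliveness}(2). In all cases $v\in\Alive_{\ttime-1}\cap\Dead_{\ttime}$, i.e.\ $v$ dies in turn $\ttime$.

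\emph{No vertex outside $R$ dies.} Equivalently, $X:=\Alive_{\ttime-1}\setminus R\subseteq\Alive_{\ttime}$ (since $\black_{\ttime}\notin\Alive_{\ttime-1}$ and vertices already dead in turn $\ttime-1$ cannot die in turn $\ttime$). I would use that $\Alive_{\ttime}$ is the \emph{largest} subset of $V(\Forest_{\ttime})$ closed under the local rules of Observation~\ref{obs:aliveness} — white members have all their $\Forest_{\ttime}$-neighbours inside, black members have at least two — so it suffices to check $X$ is so closed. The heart of the argument is a tree observation, for $\Forest_{\ttime}$ rooted at $\black_{\ttime}$: \emph{a vertex $v\notin R$ has at most one neighbour in $R$, namely its parent if any} — for the unique $\black_{\ttime}$–$y$ path to a neighbour $y\in R$ of $v$ is a good path, and if $y$ were a child of $v$ this path would run through $v$, its prefix at $v$ witnessing $v\in R$; moreover, if $v\notin\lp{\ttime}$ then $v$ has no neighbour in $R$ at all, because appending $v$ (alive in turn $\ttime-1$, not a life portal) to a good path ending at its parent yields another good path to $v$. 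Now fix $v\in X$; by (iii), $\neighbours{\ttime}{v}=\neighbours{\ttime-1}{v}$ (using $v\neq\white^{*}$ when $v$ is white). If $v$ is white, Observation~\ref{obs:aliveness}(3) puts all of $\neighbours{\ttime-1}{v}$ in $\Alive_{\ttime-1}$, and since $v\notin\lp{\ttime}$ none of it is in $R$, so $\neighbours{\ttime}{v}\subseteq X$. If $v$ is black, Observation~\ref{obs:aliveness}(2) at time $\ttime-1$ gives at least two neighbours of $v$ in $\Alive_{\ttime-1}$; when $v\notin\lp{\ttime}$ there are exactly two, neither in $R$, hence both in $X$; when $v\in\lp{\ttime}$ there are at least three, at most one in $R$, hence at least two in $X$. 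So $X$ is closed, $X\subseteq\Alive_{\ttime}$, and together with the previous part the vertices dying in turn $\ttime$ are exactly those of $R$.

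\emph{Main obstacle.} The first inclusion is routine death-propagation; the real content is the second, and inside it the single delicate point is the tree observation that a vertex outside $R$ has at most one neighbour in $R$ — this is precisely why a life portal, with its three guaranteed alive neighbours, survives (it can lose only the one neighbour on the side of $\black_{\ttime}$), whereas a boundary black vertex with only two alive neighbours must die. The one ingredient I take for granted beyond the statements quoted above is that $\Alive_{\ttime}$ is the \emph{greatest} fixed point of the closure operator of Observation~\ref{obs:aliveness}; this is the standard reformulation of the mini-max/Hall definitions of Section~\ref{sec:dead}, and if it is not already available it should be isolated as a short preliminary lemma.
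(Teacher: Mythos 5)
Your proposal takes a genuinely different route from the paper's, and the overall structure is sound, but there is a gap you yourself flag and it is real. For the hard direction (nothing outside $R$ dies), the paper works directly with the mini-max recursion: it first proves $\secondDist{\ttime}{p}<\infinity$ for every life portal $p$, exploiting that $p$ has at least two alive children whose rooted subtrees are unchanged by the arrival of $\black_{\ttime}$, and then propagates finiteness along the path from any other surviving $u$ to a life portal. You replace all of this with the single structural assertion that $\Alive_{\ttime}$ is the \emph{greatest} set closed under the local rules of Observation~\ref{obs:aliveness}, and then verify closure of $X=\Alive_{\ttime-1}\setminus R$ by the tree observation that a $v\notin R$ has at most one $R$-neighbour (its parent toward $\black_{\ttime}$). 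That is elegant and uniform, and it makes the life-portal threshold $\geqslant 3$ do exactly its job: losing one $R$-neighbour still leaves two alive ones.

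However, Observation~\ref{obs:aliveness} only says that $\Alive_{\ttime}$ \emph{satisfies} those local rules, i.e.\ is \emph{a} fixed point of the associated closure operator; it does not say it is the greatest one. That every closed set $A$ satisfies $A\subseteq\Alive_{\ttime}$ is a genuine lemma which must be proved; it amounts to exhibiting a strategy for the mini-max minimizer that never leaves $A$ (at a black vertex of $A$ move to some non-parent neighbour in $A$, observe that a black vertex of $A$ cannot be a leaf of the rooted tree, so the game terminates at a white leaf, giving $\dist{\ttime}{v}<\infinity$ for white $v\in A$; repeat in the tree with $\dir{\ttime}{v}$ deleted to obtain $\secondDist{\ttime}{v}<\infinity$ for black $v\in A$). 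You call this a ``standard reformulation'' and propose to isolate it as a preliminary lemma, which is the right instinct, but as submitted the proof is incomplete: that lemma does not appear in the paper---whose proof of the present lemma is, in effect, a hand-worked instance of it---and asserting it without proof is the missing step.
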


\mythmspace{}

\MANUSCRIPTVERSION{
\begin{proof}
See Appendix~\refAppDead{}.
\end{proof}
}

\mythmspace{}

\begin{mycorollary}\label{cor:23eqiv}
Lemma \ref{lem:dispatching} shows, that for all $\ttime$ such that $\black_{\ttime}$ does not break \hc{} and has exactly one neighbour alive in turn $\ttime-1$, statement $\size{\neighbours{\ttime}{b} \cap \Alive_{\ttime-1}} \geqslant 3$ in Definition
\ref{def:lifePortals} is equivalent to $\size{\neighbours{\ttime}{b} \cap \Alive_{\ttime}}
\geqslant 2$. 
\end{mycorollary}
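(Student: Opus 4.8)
The plan is to root $\Forest_{\ttime}$ at $\black_{\ttime}$ and read off from Lemma~\ref{lem:dispatching} which neighbours of $b$ change status in turn $\ttime$. Since $b$ is black, $\neighbours{\ttime}{b}$ consists of white vertices; if $b\neq\black_{\ttime}$ then $b\in\Black_{\ttime-1}$ and, as every edge added in turn $\ttime$ is incident to $\black_{\ttime}$, we have $\neighbours{\ttime}{b}=\neighbours{\ttime-1}{b}$, while Observation~\ref{obs:monotonic} (``once dead, always dead'') gives $\neighbours{\ttime}{b}\cap\Alive_{\ttime}\subseteq\neighbours{\ttime}{b}\cap\Alive_{\ttime-1}$. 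So the statement reduces to counting how many of the white neighbours of $b$ that were alive in turn $\ttime-1$ die in turn $\ttime$. The case $b=\black_{\ttime}$ is settled immediately: by hypothesis $\black_{\ttime}$ has exactly one neighbour alive in turn $\ttime-1$, so $\size{\neighbours{\ttime}{b}\cap\Alive_{\ttime-1}}\ge 3$ fails; its remaining white neighbours were already dead in turn $\ttime-1$ and stay dead, so it has at most one alive neighbour in turn $\ttime$ (indeed it is itself dead by Observation~\ref{obs:aliveness}(2)), and $\size{\neighbours{\ttime}{b}\cap\Alive_{\ttime}}\ge 2$ fails as well.

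For the forward implication, assume $\size{\neighbours{\ttime}{b}\cap\Alive_{\ttime-1}}\ge 3$, i.e.\ $b\in\lp{\ttime}$. By Lemma~\ref{lem:dispatching} every vertex dying in turn $\ttime$ is joined to $\black_{\ttime}$ by a path through vertices alive in turn $\ttime-1$ that avoids $\lp{\ttime}$; in the rooting at $\black_{\ttime}$ the path to any child of $b$ passes through $b\in\lp{\ttime}$, so no child of $b$ dies. Hence at most one neighbour of $b$---its parent---can die, so $\size{\neighbours{\ttime}{b}\cap\Alive_{\ttime}}\ge\size{\neighbours{\ttime}{b}\cap\Alive_{\ttime-1}}-1\ge 2$.

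For the converse I would argue the contrapositive: if $b\notin\lp{\ttime}$ then $\size{\neighbours{\ttime}{b}\cap\Alive_{\ttime-1}}\le 2$, and we must show $\size{\neighbours{\ttime}{b}\cap\Alive_{\ttime}}\le 1$. If there is at most one such neighbour already, monotonicity finishes it; otherwise there are exactly two, the parent $\white_1$ and a child $\white_2$ of $b$, and $b$ is alive in turn $\ttime-1$ by Observation~\ref{obs:aliveness}(2). The key point---and the step I expect to need the most care---is that the statement is invoked for a $b$ that the dying region of turn $\ttime$ actually reaches, i.e.\ one for which the path from $\black_{\ttime}$ to $b$ has all vertices other than $\black_{\ttime}$ alive in turn $\ttime-1$ and carries no life portal strictly before $b$; these are precisely the black vertices for which ``life portal'' serves as a barrier in Lemma~\ref{lem:dispatching}. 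For such a $b$, since $b\notin\lp{\ttime}$ and $b$ itself is alive in turn $\ttime-1$, the whole path from $\black_{\ttime}$ to $b$ avoids $\lp{\ttime}$, so Lemma~\ref{lem:dispatching} forces $b$ to die in turn $\ttime$; then by Observation~\ref{obs:aliveness}(3) every white neighbour of $b$ dies in turn $\ttime$ too, so $\size{\neighbours{\ttime}{b}\cap\Alive_{\ttime}}=0$. What remains is the bookkeeping: formalising this ``reached by the dying region'' condition and checking that under it exactly one of ``$b$ dies'' and ``$b\in\lp{\ttime}$ with only its parent dying'' occurs, consistently with the parent/child orientation coming from rooting at $\black_{\ttime}$. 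Given that, the forward direction and the reduction above are routine consequences of Lemma~\ref{lem:dispatching} and Observation~\ref{obs:aliveness}.
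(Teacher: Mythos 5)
The paper gives no explicit argument for this corollary --- it is asserted as an immediate consequence of Lemma~\ref{lem:dispatching} --- so your task is to reconstruct the intended reasoning, and you do this correctly.

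The most important thing you get right is the recognition that the equivalence is \emph{not} universal in $b$: for a black vertex far from $\black_{\ttime}$ with exactly two alive-in-$(\ttime-1)$ neighbours nothing changes, so $\geqslant 3$ fails while $\geqslant 2$ holds, and the corollary as literally worded would be false. The intended reading (visible from how the corollary is used to identify $\disp{\ttime}$ with the first life portal on $\pathh{\ttime}{\black_{\ttime}}$) is that $b$ ranges over vertices that the dying region defined in Lemma~\ref{lem:dispatching} can reach, i.e.\ those joined to $\black_{\ttime}$ by a path through vertices alive in turn $\ttime-1$ with no life portal strictly before $b$; you state this explicitly and correctly flag it as the crux. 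With this restriction in place, your case analysis is sound: the reduction $\neighbours{\ttime}{b}=\neighbours{\ttime-1}{b}$ for $b\neq\black_{\ttime}$ together with monotonicity of aliveness, the $b=\black_{\ttime}$ case, the contrapositive of the backward direction via ``$b$ dies, hence by Observation~\ref{obs:aliveness}(3) all of $b$'s white neighbours die,'' and the forward direction via the tree structure isolating $b$'s children behind the life portal $b$. One small simplification you could make: the forward direction drops out directly once one notes (as the proof of Lemma~\ref{lem:dispatching} in Appendix~B establishes) that life portals remain alive in turn $\ttime$, whereupon Observation~\ref{obs:aliveness}(2) already gives at least two alive neighbours in turn $\ttime$; no counting of dying parents versus surviving children is needed. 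Overall this matches the paper's implicit derivation from Lemma~\ref{lem:dispatching} and fills in the gap carefully.
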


\mythmspace{}
 \mysecspace{}

\section{The proof}\label{sec:proof}

\mysecspace{}

In the remainder of the paper we present the proof of Theorem \ref{thm:mainthm}, which states that if $\pathOne_{\ttime}$ is the path applied by \sap{} in turn $\ttime$, then $\sum_{\ttime=1}^n \length{ \pathOne_{\ttime} } \in \bigo{n \log n}$. 
As we mentioned in Section~\ref{sec:minimax}, we do not study $\length{ \pathOne_{\ttime} }$ directly.
Instead, we want to study distance functions $\dist{\ttime}{\black_{\ttime}}$ introduced in Section~\ref{sec:minimax}.
By Lemma~\ref{lem:pathineq} given in Section~\ref{sec:minimax} we know that $\dist{\ttime}{\black_{\ttime}}$ bounds $\length{ \pathOne_{\ttime} }$ from above.
Recall that $\dist{\ttime}{\black_{\ttime}}$ is the mini-max distance from $\black_{\ttime}$ to a white leaf in $\Forest_{\ttime}$.
By definition if $\dist{\ttime}{\black_{\ttime}} < \infinity$, then there is a path from $\black_{\ttime}$ to a white leaf which certifies it.
We introduce the formal definition of such path below.

\mythmspace{}

\begin{definition}
\label{def:path}
 Let $\ttime \in \range{n} $ and $v \in \Black_{\ttime} \cup \White$. Let $\RootedTreeOne$ be a connected component of $v$ in $\Forest_{\ttime}$ rooted at $v$. Then $\pathh{\ttime}{\firstVertex}=\mmpath{\RootedTreeOne}{v}$.
\end{definition}

\mythmspace{}

Note that by Definitions~\ref{def:minimax} and~\ref{def:distdir}, if $\dist{\ttime}{\black_{\ttime}} < \infinity$, then $\length{\pathh{\ttime}{\black_{\ttime}}}=\dist{\ttime}{\black_{\ttime}}$.
In addition to that, we define a path that certifies that $\secondDist{\ttime}{\black_{\ttime}}$ is finite.

\mythmspace{}

\begin{definition}\label{def:secPath}
 Let $\ttime \in \range{n} $ and $v \in \Black_{\ttime} \cup \White$. Let $\RootedTreeTwo$ be a connected component of $v$ in $\Forest_{\ttime} - \{ v, \dir{\ttime}{v} \}$ rooted at $v$, where $\Forest_{\ttime} - \{ v, \dir{\ttime}{v} \}$ denotes $\Forest_{\ttime}$ with edge $\{ v, \dir{\ttime}{v} \}$ removed. Then
 $\secondPath{\ttime}{\firstVertex}=\mmpath{\RootedTreeTwo}{v}$. 
\end{definition}

\mythmspace{}

Again by Definitions~\ref{def:minimax} and~\ref{def:distdir}, if $\secondDist{\ttime}{\black_{\ttime}} < \infinity$, then $\length{\secondPath{\ttime}{\black_{\ttime}}}=\secondDist{\ttime}{\black_{\ttime}}$.
Instead of proving Theorem~\ref{thm:mainthm}, in the remainder of this paper we prove that $\sum_{\ttime: \dist{\ttime}{\black_{\ttime}}<\infinity}\length{\pathh{\ttime}{\black_{\ttime}}} \in \bigo{ n \log n }$. This is in fact a stronger statement. We claim that even if the adversary picks the worst possible maximum matching in each turn, the \sap{} still applies paths of total length $\bigo{ n \log n }$. Note that if $\dist{\ttime}{\black_{\ttime}} = \infinity$, then due to Corollary~\ref{cor:nonMatch} \sap{} cannot match the new vertex $\black_{\ttime}$ if the maximum matching is given on the remaining vertices. 
Our proof of such simple statement is unfortunately rather complex.
Before we move on to it, we give some intuitions on where the actual problem hides.
It is enlightening to discover, that with the additional assumption that the black vertices are of
degree two or more, the statement above has a very simple proof.

\mythmspace{}

\begin{mylemma}
\label{lem:deggeq2}
If each black vertex $\black_\ttime$ has degree at least $2$, then
$\sum_{\ttime=1}^n \length{\pathh{\ttime}{\black_\ttime}} \leqslant n \log_2n$.
\end{mylemma}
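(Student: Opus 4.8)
The plan is to stop tracking $\length{\pathh{\ttime}{\black_{\ttime}}}$ directly, to use the identity $\length{\pathh{\ttime}{\black_{\ttime}}}=\dist{\ttime}{\black_{\ttime}}$, to bound $\dist{\ttime}{\black_{\ttime}}$ by twice the number of white vertices in the \emph{smallest} old component that $\black_{\ttime}$ glues together, and to pay for this with a logarithmic potential on components. Fix a turn $\ttime$. Because $\black_{\ttime}$ has degree at least $2$ all of its edges go to white vertices, and since the model forbids closing a cycle the white neighbours $\white_1,\dots,\white_d$ of $\black_{\ttime}$ (with $d\ge 2$) lie in $d$ pairwise distinct components $C_1,\dots,C_d$ of $\Forest_{\ttime-1}$; the component of $\black_{\ttime}$ in $\Forest_{\ttime}$ is $\set{\black_{\ttime}}\cup C_1\cup\dots\cup C_d$, and rooting it at $\black_{\ttime}$ turns each $C_i$ into the subtree below $\white_i$, rooted at $\white_i$. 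By Definitions~\ref{def:minimax} and~\ref{def:distdir} this gives $\dist{\ttime}{\black_{\ttime}}=1+\min_{1\le i\le d}\treeDist{C_i}{\white_i}$, each $C_i$ understood to be rooted at $\white_i$.

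Next I would bound each $\treeDist{C_i}{\white_i}$. Here the hypothesis really bites: since every black vertex has degree at least $2$, in any rooting every black vertex has at least one child, so the mini-max path of Definition~\ref{def:minimax} started at the white vertex $\white_i$ alternates white, black, white, \dots and can only terminate at a childless vertex, which is necessarily white. Thus it ends at a white leaf; if it carries $p+1$ white vertices it has exactly $2p$ edges, and those whites are distinct vertices of $C_i$, so $p+1\le\size{C_i\cap\White}$ and $\treeDist{C_i}{\white_i}=2p\le 2(\size{C_i\cap\White}-1)$. In particular $\dist{\ttime}{\black_{\ttime}}<\infinity$, whence $\length{\pathh{\ttime}{\black_{\ttime}}}=\dist{\ttime}{\black_{\ttime}}$. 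Writing $s_i=\size{C_i\cap\White}$ and reindexing so that $s_1\le s_2\le\dots\le s_d$, we obtain
\[
\length{\pathh{\ttime}{\black_{\ttime}}}=\dist{\ttime}{\black_{\ttime}}\le 1+2(s_1-1)=2s_1-1 .
\]

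Then I would introduce the potential $\Phi_{\ttime}=\sum_{\white\in\White}\log_2 c_{\ttime}(\white)$, where $c_{\ttime}(\white)$ is the number of white vertices in the connected component of $\white$ in $\Forest_{\ttime}$. At time $0$ every white vertex is isolated, so $\Phi_0=0$, and $c_n(\white)\le\size{\White}=n$ gives $\Phi_n\le n\log_2 n$. Only the whites of $C_1,\dots,C_d$ change their component size in turn $\ttime$, from $s_i$ to $S:=s_1+\dots+s_d$, so
\[
\Phi_{\ttime}-\Phi_{\ttime-1}=\sum_{i=1}^{d}s_i\log_2\frac{S}{s_i}.
\]
The crux is the inequality $\sum_{i=1}^{d}s_i\log_2(S/s_i)\ge 2s_1$ for $d\ge 2$. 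I would prove it by first collapsing $C_2,\dots,C_d$ into a single block: as $s_i\le S-s_1$ for every $i\ge 2$, one has $\log_2(S/s_i)\ge\log_2(S/(S-s_1))$ termwise, hence $\sum_{i\ge 2}s_i\log_2(S/s_i)\ge (S-s_1)\log_2(S/(S-s_1))$, which reduces the claim to the two-block quantity $f(a,b)=a\log_2\frac{a+b}{a}+b\log_2\frac{a+b}{b}$ evaluated at $a=s_1$, $b=S-s_1$ (note $a\le b$, since $d\ge 2$ forces $S\ge 2s_1$); a short computation gives $\partial f/\partial b=\log_2\frac{a+b}{b}\ge 0$, so $f(a,b)\ge f(a,a)=2a$. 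Combining, $\length{\pathh{\ttime}{\black_{\ttime}}}\le 2s_1-1<2s_1\le\Phi_{\ttime}-\Phi_{\ttime-1}$, and telescoping over $\ttime$ yields $\sum_{\ttime=1}^{n}\length{\pathh{\ttime}{\black_{\ttime}}}\le\Phi_n-\Phi_0\le n\log_2 n$.

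The only genuinely non-routine step is the entropy-type inequality $\sum_i s_i\log_2(S/s_i)\ge 2s_1$; it is precisely what pushes the constant down to $n\log_2 n$, rather than the $2n\log_2 n$ one gets from the cruder ``$\length{\pathh{\ttime}{\black_{\ttime}}}\le 2s_1$, and a white vertex can lie in the smallest block of a merge at most $\log_2 n$ times'' argument. I expect the small amount of care needed there --- the termwise reduction to two blocks, and the monotonicity of $f$ in its larger argument --- to be the main obstacle; the alternation bound on mini-max paths, the finiteness of $\dist{\ttime}{\black_{\ttime}}$, and the telescoping of $\Phi$ are all straightforward.
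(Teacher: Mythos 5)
Your proof is correct, and it follows the same high-level plan as the paper's (bound each step by the smaller merged block, amortize via a doubling argument), but it is executed more carefully and actually yields the stated constant, whereas the paper's informal charging does not quite. Concretely, the paper charges one token to each vertex (white \emph{and} black) of every non-largest component and says a vertex is charged at most $\log_2 n$ times; since the forest has up to $2n$ vertices and a component can grow to size $2n$, a literal reading of that argument gives $2n(\log_2 n + 1)$, not $n\log_2 n$. You avoid this by two refinements: (a) the alternation argument that a mini-max path in a degree-$\geq 2$ rooted component must start and end white, giving the sharp bound $\treeDist{C_i}{\white_i}\leq 2(\size{C_i\cap\White}-1)$ and hence $\dist{\ttime}{\black_{\ttime}}\leq 2s_1-1$ in terms of \emph{white} counts only; and (b) the explicit potential $\Phi_{\ttime}=\sum_{\white}\log_2 c_{\ttime}(\white)$ over the $n$ white vertices, with the log-concavity step $\sum_i s_i\log_2(S/s_i)\geq 2s_1$ replacing the paper's informal ``component at least doubles'' bookkeeping and delivering $\Phi_n\leq n\log_2 n$ cleanly. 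You also do not invoke Lemma~\ref{lem:ifatleasttwoalive} to get finiteness of $\dist{\ttime}{\black_{\ttime}}$, as the paper does; your white-leaf termination argument yields finiteness directly from the degree hypothesis, which makes the lemma self-contained. The only slight inefficiency in your write-up is that the termwise collapse of $C_2,\dots,C_d$ to a single block and the monotonicity of $f(a,b)$ in $b$, while correct, could be replaced by the one-line observation that $\sum_i s_i\log_2(S/s_i)\geq s_1\log_2(S/s_1)+s_2\log_2(S/s_2)\geq s_1+s_1=2s_1$ using $S\geq 2s_j$ for $j=1,2$ --- but your version is also fine.
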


\mythmspace{}

\mythmspace{}

\begin{proof}
We start by observing that no vertex ever dies. In turn $\ttime=0$ the only presented vertices are $\White$, so by definition all vertices are alive in turn $\ttime=0$. Let $\ttime>0$ and assume that no vertices died up until turn $\ttime-1$. Due to Lemma~\ref{lem:ifatleasttwoalive} no vertex dies in turn $\ttime$ and $\black_{\ttime}$ is alive in turn $\ttime$. This implies that $\dist{\ttime}{\black_{\ttime}} < \infinity$ and $\secondDist{\ttime}{\black_{\ttime}} < \infinity$.
Hence, $\pathh{\ttime}{\black_{\ttime}}$ and $\secondPath{\ttime}{\black_{\ttime}}$ are two separate paths, contained in two different components of $\Forest_{\ttime-1}$ connected in turn $\ttime$ by $\black_{\ttime}$. Also, $\length{\pathh{\ttime}{\black_{\ttime}}} \leq \length{\secondPath{\ttime}{\black_{\ttime}}}$.
Thus, $\pathh{\ttime}{\black_{\ttime}}$ is at most as long as the size of the smaller of the two components.
We pay for $\pathh{\ttime}{\black_\ttime}$ by charging $1$ token to each vertex in every component but the largest one among the components of $\Forest_{\ttime-1}$ connected by $\black_{\ttime}$ in turn $\ttime$.
A vertex $\firstVertex$ is charged when $\firstVertex$'s component size increases at least twice,
so $\firstVertex$ cannot be charged more than $\log_2n$ times. This gives a total charge of at most $n \log_2n$.
\LNCSVERSION{\qed}\end{proof}

\mythmspace{}

The essence of this proof is that every time a black vertex is added, it connects
at least two trees into one.
As a consequence there are at least two alternative mini-max paths starting from the
added vertex, each in a separate tree.
The length of the shorter of the two can be charged to the vertices of the smaller tree.
If we allow black vertices of degree $1$, the situation becomes more complicated, because:
(1) there is no alternative path, i.e., the path needs to follow the only edge adjacent to the newly
added black vertex, and (2) no trees are merged.
Nevertheless the proof of Theorem \ref{thm:mainthm} is a generalization of the proof of Lemma~\ref{lem:deggeq2}. The majority of the remainder of this paper is devoted to addressing issue (2). We define trees which are merged in each turn and allow introducing the charging scheme that generalizes the scheme of Lemma~\ref{lem:deggeq2}. We start, though, by addressing issue (1). To that end we introduce a concept of a dispatching vertex.
Even though $\black_\ttime$ does not necessarily fork into two alternative mini-max paths, there
is another vertex which does. It is the first life portal on $\pathh{\ttime}{\black_{\ttime}}$.
We refer to it as dispatching vertex in turn $\ttime$. To be more formal, we introduce the following definition.

\mythmspace{}

\begin{definition}
\label{def:dispatching}
The dispatching vertex at time $\ttime \in \range{n} $ is the first black vertex on $\pathh{\ttime}{\black_{\ttime}}$ such that $\size{\neighbours{\ttime}{\black} \cap \Alive_{\ttime}} \geqslant 2$. We denote it as $\disp{\ttime}$.
\end{definition}

\mythmspace{}

First observe that $\disp{\ttime}$ has two alive neighbours in turn $\ttime$, so there are two alternative mini-max paths branching from $\disp{\ttime}$. 
Our next observation is that if $\disp{\ttime}\neq \black_{\ttime}$, then $\disp{\ttime} \in \lp{\ttime}$: if $\black_{\ttime}$ has two neighbours alive in $\ttime-1$, then due to Lemma~\ref{lem:ifatleasttwoalive} no vertex dies in turn $\ttime$ so $\black_{\ttime}$ has two neigbours alive in turn $\ttime$ and hence $\black_{\ttime}=\disp{\ttime}$; otherwise, if $\black_{\ttime}$ has one neigbour alive in turn $\ttime-1$, then due to Lemma~\ref{lem:dispatching} and Corollary~\ref{cor:23eqiv} it holds that $\disp{\ttime}$ is the first life portal of $\lp{\ttime}$ on $\pathh{\ttime}{\black_{\ttime}}$. Then also $\disp{\ttime}$ is the first vertex on $\pathh{\ttime}{\black_{\ttime}}$ that remains alive. All vertices that follow $\disp{\ttime}$ on $\pathh{\ttime}{\black_{\ttime}}$  remain alive as well. It may happen that $\pathh{\ttime}{\black_{\ttime}}$ contains no life portals, in which case there is no dispatching vertex defined in turn $\ttime$. This case however is not of concern, since the whole $\pathh{\ttime}{\black_{\ttime}}$ dies in turn $\ttime$ due to Lemma~\ref{lem:dispatching}. In general, we do not have to worry about vertices that die, and we state this observation as Observation~\ref{obs:dieOnlyOnce} preceded by Definition~\ref{def:presuf}.

\mythmspace{}

\begin{definition}\label{def:presuf}
Let $\ttime \in \range{n}$ be such that $\dist{\ttime}{\black_{\ttime}} < \infinity$. We let $\pathh{\ttime}{\black_{\ttime}}=\pathhp{\ttime}{\black_{\ttime}} \cdot \pathhs{\ttime}{\black_{\ttime}}$, where $\pathhp{\ttime}{\black_{\ttime}}$ is the prefix of $\pathh{\ttime}{\black_{\ttime}}$ that dies (possibly empty) and $\pathhs{\ttime}{\black_{\ttime}}$ is the corresponding suffix (also possibly empty).  
\end{definition}

\mythmspace{}

Note that if $\disp{\ttime}$ is defined then $\pathhs{\ttime}{\black_{\ttime}}$ begins with $\disp{\ttime}$.
Since the final forest has $2n$ vertices and each can die only once, we have the following: 

\mythmspace{}

\begin{myobservation}\label{obs:dieOnlyOnce}
$\sum_{\ttime:\dist{\ttime}{\black_{\ttime}}<\infinity}\length{\pathhp{\ttime}{\black_{\ttime}}} \leq 2n$.
\end{myobservation}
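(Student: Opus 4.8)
\textbf{Proof plan for Observation~\ref{obs:dieOnlyOnce}.}
The plan is to argue that the prefixes $\pathhp{\ttime}{\black_{\ttime}}$, taken over all turns $\ttime$ with $\dist{\ttime}{\black_{\ttime}}<\infinity$, are vertex-disjoint, and then simply count vertices. First I would fix a turn $\ttime$ with $\dist{\ttime}{\black_{\ttime}}<\infinity$ and recall from Definition~\ref{def:presuf} that $\pathh{\ttime}{\black_{\ttime}}=\pathhp{\ttime}{\black_{\ttime}}\cdot\pathhs{\ttime}{\black_{\ttime}}$, where $\pathhp{\ttime}{\black_{\ttime}}$ consists precisely of the vertices of $\pathh{\ttime}{\black_{\ttime}}$ that die in turn $\ttime$. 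The key point is that every vertex $v$ lying on $\pathhp{\ttime}{\black_{\ttime}}$ satisfies $v\in\Alive_{\ttime-1}\cap\Dead_{\ttime}$, i.e.\ $v$ dies exactly in turn $\ttime$; this is the content of "the prefix that dies", and it is consistent with Lemma~\ref{lem:dispatching}, which describes exactly the region of vertices that were alive in turn $\ttime-1$ and die in turn $\ttime$.

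Next I would invoke the monotonicity recorded right after Definition~\ref{def:deadWhite}: by Observation~\ref{obs:monotonic}, once a vertex dies it never comes back alive, so the set of turns $\ttime$ for which a fixed vertex $v$ lies on $\pathhp{\ttime}{\black_{\ttime}}$ has at most one element — namely the unique turn in which $v$ transitions from alive to dead. (A vertex that is never alive, such as a black leaf, and a vertex that never dies both contribute $0$; only vertices that die in some turn can be charged, and each such vertex is charged in exactly that one turn.) Consequently the multiset union $\biguplus_{\ttime:\dist{\ttime}{\black_{\ttime}}<\infinity}\Vertices\bra{\pathhp{\ttime}{\black_{\ttime}}}$ is in fact a set: each vertex of the final forest $\Forest_n$ appears in it at most once.

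Finally I would conclude by counting. Since $\length{\pathhp{\ttime}{\black_{\ttime}}}$ is the number of edges on $\pathhp{\ttime}{\black_{\ttime}}$, which is at most the number of its vertices, we get
$$
\sum_{\ttime:\dist{\ttime}{\black_{\ttime}}<\infinity}\length{\pathhp{\ttime}{\black_{\ttime}}}
\;\le\;
\sum_{\ttime:\dist{\ttime}{\black_{\ttime}}<\infinity}\Size{\Vertices\bra{\pathhp{\ttime}{\black_{\ttime}}}}
\;\le\;
\Size{\Vertices\bra{\Forest_n}}
\;=\;2n,
$$
using $\Size{\White}=\Size{\Black_n}=n$. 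I do not expect a genuine obstacle here: the only thing to be careful about is making precise that every vertex on $\pathhp{\ttime}{\black_{\ttime}}$ really dies in turn $\ttime$ (so that the "dies only once" principle applies vertex by vertex), which is immediate from Definition~\ref{def:presuf} together with Observation~\ref{obs:monotonic}; everything else is bookkeeping.
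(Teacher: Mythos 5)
Your proof is correct and follows exactly the argument the paper itself uses: the paper justifies the observation in the single sentence preceding it ("Since the final forest has $2n$ vertices and each can die only once, we have the following:"), and your write-up simply spells out this counting argument — each vertex on $\pathhp{\ttime}{\black_{\ttime}}$ dies in turn $\ttime$, monotonicity (Observation~\ref{obs:monotonic}) means a vertex dies at most once, so the prefixes are vertex-disjoint and their total number of edges is at most $\Size{\Vertices\bra{\Forest_n}}=2n$. Same approach, just written out in full.
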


\mythmspace{}

Thus, to bound $\sum_{\ttime:\dist{\ttime}{\black_{\ttime}}<\infinity}\length{\pathh{\ttime}{\black_{\ttime}}}$ it suffices to bound \LNCSVERSION{\linebreak} $\sum_{\ttime:\dist{\ttime}{\black_{\ttime}}<\infinity}\length{\pathhs{\ttime}{\black_{\ttime}}}$.
We conclude the list of properties of the dispatching vertex with the following observation.

\mythmspace{}

\begin{restatable}{myobservation}{obsdistPathDisp}
\begin{restatableobservation}[{{\rm Appendix~\refAppProof{} in \cite{DBLP:journals/corr/BosekLZS17}}}]
\label{obs:distPathDisp}
Let $\ttime \in \range{n}$ and $\dist{\ttime}{\black_{\ttime}} < \infinity$ and $\disp{\ttime}$ is defined. Then $\length{ \pathhs{\ttime}{\black_{\ttime}} }=\dist{\ttime}{\disp{\ttime}}$.
\end{restatableobservation}
\end{restatable}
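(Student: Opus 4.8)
The plan is to reduce the statement to a re-rooting identity for the revenue function. Let $\RootedTreeOne$ be the connected component of $\black_{\ttime}$ in $\Forest_{\ttime}$ rooted at $\black_{\ttime}$, so that $\pathh{\ttime}{\black_{\ttime}}=\mmpath{\RootedTreeOne}{\black_{\ttime}}$, and let $\RootedTreeThree$ be the same component re-rooted at $\disp{\ttime}$, so that $\dist{\ttime}{\disp{\ttime}}=\treeDist{\RootedTreeThree}{\disp{\ttime}}$ by Definition~\ref{def:distdir}. If $\disp{\ttime}=\black_{\ttime}$ the claim is immediate: then $\disp{\ttime}$ has at least two alive neighbours, hence $\black_{\ttime}$ is alive in turn $\ttime$, so $\pathhp{\ttime}{\black_{\ttime}}$ is empty and $\length{\pathhs{\ttime}{\black_{\ttime}}}=\length{\pathh{\ttime}{\black_{\ttime}}}=\dist{\ttime}{\black_{\ttime}}=\dist{\ttime}{\disp{\ttime}}$. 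So assume $\disp{\ttime}\neq\black_{\ttime}$; since $\Forest_{\ttime}$ is bipartite and both $\black_{\ttime}$ and $\disp{\ttime}$ are black, $\disp{\ttime}$ lies on $\pathh{\ttime}{\black_{\ttime}}$ at an even position $\geq2$, and the vertex $\white_{0}$ immediately preceding it on the path is white, with $\white_{0}=\parent{\RootedTreeOne}{\disp{\ttime}}$. By the recursive definition of the mini-max path (Definition~\ref{def:minimax}), the terminal segment of $\pathh{\ttime}{\black_{\ttime}}$ starting at $\disp{\ttime}$ is exactly $\mmpath{\RootedTreeOne}{\disp{\ttime}}$, and since $\disp{\ttime}$ is the first vertex of $\pathh{\ttime}{\black_{\ttime}}$ that survives turn $\ttime$ (as noted after Definition~\ref{def:dispatching}), this terminal segment is precisely $\pathhs{\ttime}{\black_{\ttime}}$. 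Because $\dist{\ttime}{\black_{\ttime}}<\infinity$, the revenue decreases by exactly $1$ at each step along $\pathh{\ttime}{\black_{\ttime}}$ down to its final white leaf, so $\treeDist{\RootedTreeOne}{\disp{\ttime}}=\length{\pathhs{\ttime}{\black_{\ttime}}}<\infinity$. It therefore remains to prove the re-rooting identity $\treeDist{\RootedTreeThree}{\disp{\ttime}}=\treeDist{\RootedTreeOne}{\disp{\ttime}}$.

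I would then carry out the re-rooting. In $\RootedTreeThree$ the children of $\disp{\ttime}$ are all of its neighbours in $\Forest_{\ttime}$, that is $\children{\RootedTreeOne}{\disp{\ttime}}$ together with $\white_{0}$. For every neighbour $w\neq\white_{0}$ the subtree hanging below $w$ is literally the same in $\RootedTreeOne$ and in $\RootedTreeThree$, and since the revenue of a vertex depends only on the subtree beneath it (immediate from Definition~\ref{def:minimax} by induction on subtree size), $\treeDist{\RootedTreeThree}{w}=\treeDist{\RootedTreeOne}{w}$. As $\disp{\ttime}$ is black and $\min_{w\in\children{\RootedTreeOne}{\disp{\ttime}}}\bra{\treeDist{\RootedTreeOne}{w}+1}=\treeDist{\RootedTreeOne}{\disp{\ttime}}$, this gives $\treeDist{\RootedTreeThree}{\disp{\ttime}}=\min\bra{\treeDist{\RootedTreeThree}{\white_{0}}+1, \treeDist{\RootedTreeOne}{\disp{\ttime}}}$, so it suffices to show $\treeDist{\RootedTreeThree}{\white_{0}}=\infinity$.

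This last equality is the heart of the argument, and I expect it to be the only place where the earlier theory is genuinely used; it comes from $\white_{0}$ being dead. Since $\disp{\ttime}$ is the first vertex of $\pathh{\ttime}{\black_{\ttime}}$ alive in turn $\ttime$, the vertex $\white_{0}$ lies on the dying prefix $\pathhp{\ttime}{\black_{\ttime}}$, hence $\white_{0}\in\Dead_{\ttime}$ and $\dist{\ttime}{\white_{0}}=\infinity$ by Definition~\ref{def:deadWhite}. Now evaluate $\dist{\ttime}{\white_{0}}$ by rooting the component at $\white_{0}$: its children are all of its neighbours, the subtree below $\disp{\ttime}$ coincides with the one below $\disp{\ttime}$ in $\RootedTreeOne$ (contributing $\treeDist{\RootedTreeOne}{\disp{\ttime}}+1$), and for every other neighbour $b$ the subtree below $b$ coincides with the one below $b$ in $\RootedTreeThree$ (contributing $\treeDist{\RootedTreeThree}{b}+1$); as $\white_{0}$ is white we take the maximum, so $\infinity=\dist{\ttime}{\white_{0}}=\max\bra{\treeDist{\RootedTreeOne}{\disp{\ttime}}+1, \treeDist{\RootedTreeThree}{\white_{0}}}$. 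Since $\treeDist{\RootedTreeOne}{\disp{\ttime}}<\infinity$, this forces $\treeDist{\RootedTreeThree}{\white_{0}}=\infinity$. Plugging back, $\dist{\ttime}{\disp{\ttime}}=\treeDist{\RootedTreeThree}{\disp{\ttime}}=\treeDist{\RootedTreeOne}{\disp{\ttime}}=\length{\pathhs{\ttime}{\black_{\ttime}}}$, which is the claim. The main obstacle is thus bookkeeping rather than depth: one must keep the two rootings apart and recognise that the branch of $\disp{\ttime}$ pointing back towards $\black_{\ttime}$ is blocked — which is exactly the content of $\white_{0}$'s deadness — combined with the purely structural fact that a vertex's revenue is determined by the subtree below it.
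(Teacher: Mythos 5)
Your proof is correct, and it proceeds by a genuinely different route than the paper's. The paper's argument is short: after noting that the predecessor $\white^{p}$ (your $\white_{0}$) of $\disp{\ttime}$ is dead, it appeals to Lemma~\ref{lem:mmPathsAlive} --- mini-max paths starting from alive vertices visit only alive vertices --- to conclude that $\pathh{\ttime}{\disp{\ttime}}$ avoids $\white^{p}$ and hence coincides with $\pathhs{\ttime}{\black_{\ttime}}$, whence the lengths agree. You avoid Lemma~\ref{lem:mmPathsAlive} altogether: instead you carry out an explicit re-rooting computation, using the elementary fact that $\treeDist{\RootedTreeOne}{\cdot}$ depends only on the subtree below a vertex to split the recursion across the edge $\set{\white_{0},\disp{\ttime}}$, and you derive $\treeDist{\RootedTreeThree}{\white_{0}}=\infinity$ from the deadness identity $\infinity = \dist{\ttime}{\white_{0}} = \max\bra{\treeDist{\RootedTreeOne}{\disp{\ttime}}+1,\,\treeDist{\RootedTreeThree}{\white_{0}}}$ together with $\treeDist{\RootedTreeOne}{\disp{\ttime}}<\infinity$, which blocks the back-branch and forces $\treeDist{\RootedTreeThree}{\disp{\ttime}}=\treeDist{\RootedTreeOne}{\disp{\ttime}}$. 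Both proofs rest on the same two ambient facts (that $\pathhs{\ttime}{\black_{\ttime}}$ begins at $\disp{\ttime}$, and that $\white_{0}\in\Dead_{\ttime}$, both from the discussion following Definition~\ref{def:dispatching}), but the paper's conclusion is a single invocation of the stronger Lemma~\ref{lem:mmPathsAlive}, whereas you re-derive exactly the special case needed by hand. Your route is more self-contained and more explicit about where the re-rooting structure enters, at the modest cost of keeping the three rootings $\RootedTreeOne$, $\RootedTreeThree$, and the one at $\white_{0}$ distinct throughout.
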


\mythmspace{}

\MANUSCRIPTVERSION{
\begin{proof}
See Appendix~\refAppProof{}.
\end{proof}
}

To proceed further, we introduce the crucial notion in our proof: the notion of a level.
The levels are some numbers assigned to vertices: each vertex is assigned its level.
The intuitive meaning of the level of a vertex is the following.
Consider $\pathh{\ttime}{\black_\ttime}$, which is the worst case shortest augmenting path for
a black vertex $\black_\ttime$ added in turn $\ttime$.
For a vertex $v$, if $\pathh{\ttime}{\black_\ttime}$ crosses $v$, level in $\Forest_{\ttime}$
returns the value representing the length of the suffix of $\pathh{\ttime}{\black_\ttime}$
starting in $v$.
Formally, the level function is defined in the following way.

\mythmspace{}

\begin{definition}
\label{def:level}
\MANUSCRIPTVERSION{
For $\firstVertex \in \White \cup \Black$ and $\ttime \in \set{0, \ldots,n}$ let
$$
\level{\ttime}{ \firstVertex}= \left \lbrace
\begin{array}
{ll} \secondDist{\ttime}{\firstVertex} & \firstVertex \in \White, \\
\dist{\ttime}{\firstVertex} & \firstVertex \in \Black_{\ttime}, \\
0 & \text{otherwise.
}
\end{array}
\right.
$$
}
\LNCSVERSION{
For $\firstVertex \in \White \cup \Black$ and $\ttime \in \set{0, \ldots,n}$ let
$\level{\ttime}{ \firstVertex}= \secondDist{\ttime}{\firstVertex}$ if $\firstVertex \in \White$, 
$\level{\ttime}{ \firstVertex}= \dist{\ttime}{\firstVertex}$ if $\firstVertex \in \Black_{\ttime}$, and
$\level{\ttime}{ \firstVertex}=0$ otherwise.
}
\end{definition}

\mythmspace{}

It may at first seem confusing that the level of a white vertex is the second maximum distance to a
leaf.
It is defined this way because, surprisingly, in every turn $\ttime$ the path
$\pathh{\ttime}{\black_\ttime}$ enters its white vertices through the edge determining the
maximum distance from the white vertex to a leaf.
We illustrate the introduced definitions in Figure~\ref{fig:levels}.  
We present there an example run of an online scenario together with changing levels of vertices.
We mark the dispatching vertices in each turn. 
An important property of the level function is that the levels of vertices drop by at most one along both $\pathh{\ttime}{\black_\ttime}$ and $\secondPath{\ttime}{\black_{\ttime}}$. 
\begin{figure}[htbp]\fontsize{10}{10}\selectfont

\myfigspace{}

  \centering \def\svgscale{0.9} \resizebox{0.7\textwidth}{!}{
\executeiffilenewer{levels.svg}{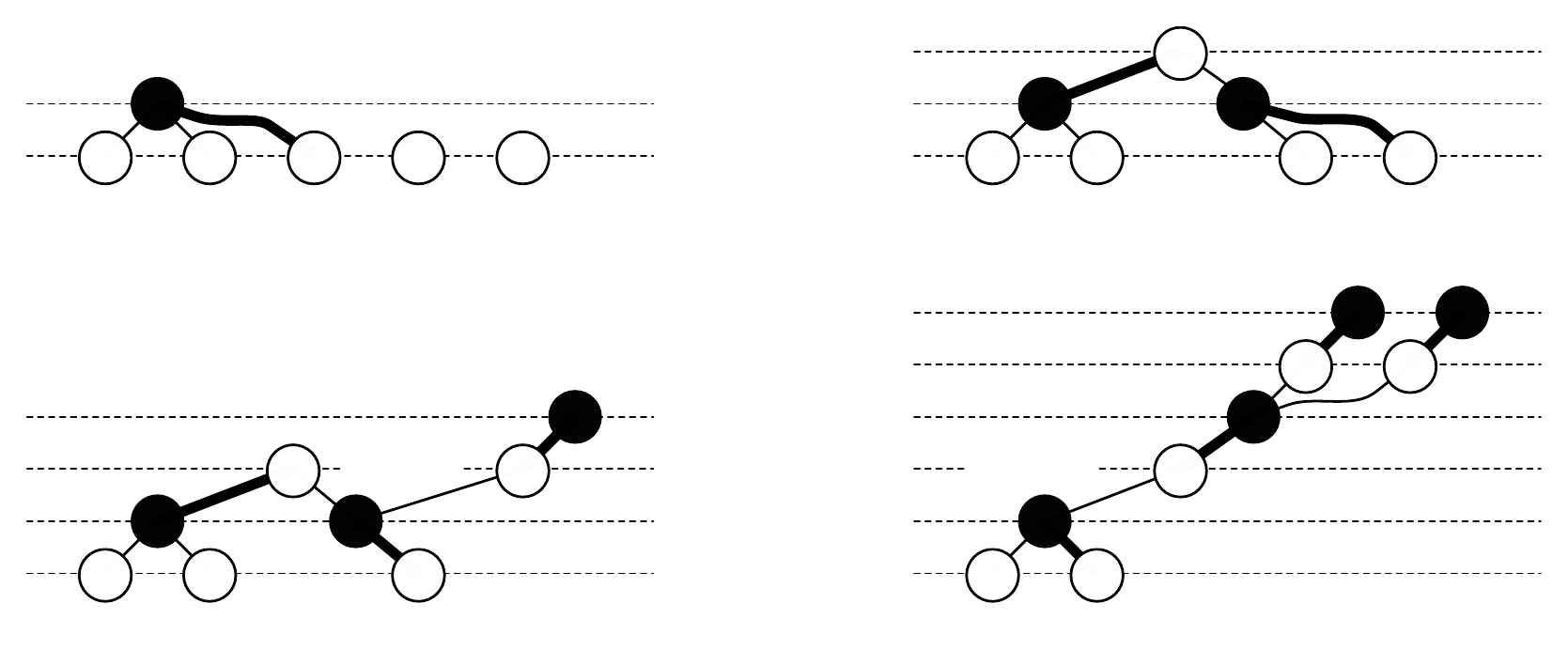} {inkscape -z -C --file=levels.svg --export-pdf=levels.pdf
--export-latex}
\begingroup
  \makeatletter
  \providecommand\color[2][]{
    \errmessage{(Inkscape) Color is used for the text in Inkscape, but the package 'color.sty' is not loaded}
    \renewcommand\color[2][]{}
  }
  \providecommand\transparent[1]{
    \errmessage{(Inkscape) Transparency is used (non-zero) for the text in Inkscape, but the package 'transparent.sty' is not loaded}
    \renewcommand\transparent[1]{}
  }
  \providecommand\rotatebox[2]{#2}
  \ifx\svgwidth\undefined
    \setlength{\unitlength}{480.56794739bp}
    \ifx\svgscale\undefined
      \relax
    \else
      \setlength{\unitlength}{\unitlength * \real{\svgscale}}
    \fi
  \else
    \setlength{\unitlength}{\svgwidth}
  \fi
  \global\let\svgwidth\undefined
  \global\let\svgscale\undefined
  \makeatother
  \begin{picture}(1,0.42410928)
    \put(0,0){\includegraphics[width=\unitlength,page=1]{levels.pdf}}
    \put(0.46504138,0.31792541){\color[rgb]{0,0,0}\makebox(0,0)[lb]{\smash{level 0}}}
    \put(0.46504138,0.35121935){\color[rgb]{0,0,0}\makebox(0,0)[lb]{\smash{level 1}}}
    \put(0.05719063,0.28796089){\color[rgb]{0,0,0}\makebox(0,0)[lb]{\smash{$w_1$}}}
    \put(0.12377852,0.28796089){\color[rgb]{0,0,0}\makebox(0,0)[lb]{\smash{$w_2$}}}
    \put(0.19036638,0.28796089){\color[rgb]{0,0,0}\makebox(0,0)[lb]{\smash{$w_3$}}}
    \put(0.25695425,0.28796089){\color[rgb]{0,0,0}\makebox(0,0)[lb]{\smash{$w_4$}}}
    \put(0.32354216,0.28796089){\color[rgb]{0,0,0}\makebox(0,0)[lb]{\smash{$w_5$}}}
    \put(0.12458549,0.36786632){\color[rgb]{0,0,0}\makebox(0,0)[lb]{\smash{$b_1=\check{b}_1$}}}
    \put(0.62318761,0.28796089){\color[rgb]{0,0,0}\makebox(0,0)[lb]{\smash{$w_1$}}}
    \put(0.68977547,0.28796089){\color[rgb]{0,0,0}\makebox(0,0)[lb]{\smash{$w_2$}}}
    \put(0.75636338,0.28796089){\color[rgb]{0,0,0}\makebox(0,0)[lb]{\smash{$w_3$}}}
    \put(0.8229512,0.28796089){\color[rgb]{0,0,0}\makebox(0,0)[lb]{\smash{$w_4$}}}
    \put(0.88953911,0.28796089){\color[rgb]{0,0,0}\makebox(0,0)[lb]{\smash{$w_5$}}}
    \put(0.81349012,0.36572599){\color[rgb]{0,0,0}\makebox(0,0)[lb]{\smash{$b_2=\check{b}_2$}}}
    \put(0.05719063,0.02160935){\color[rgb]{0,0,0}\makebox(0,0)[lb]{\smash{$w_1$}}}
    \put(0.12377852,0.02160935){\color[rgb]{0,0,0}\makebox(0,0)[lb]{\smash{$w_2$}}}
    \put(0.19036638,0.02160935){\color[rgb]{0,0,0}\makebox(0,0)[lb]{\smash{$w_3$}}}
    \put(0.25695425,0.02160935){\color[rgb]{0,0,0}\makebox(0,0)[lb]{\smash{$w_4$}}}
    \put(0.32354216,0.02160935){\color[rgb]{0,0,0}\makebox(0,0)[lb]{\smash{$w_5$}}}
    \put(0.22068763,0.11845902){\color[rgb]{0,0,0}\makebox(0,0)[lb]{\smash{$b_2=\check{b}_3$}}}
    \put(0.46504138,0.05157391){\color[rgb]{0,0,0}\makebox(0,0)[lb]{\smash{level 0}}}
    \put(0.46504138,0.08486782){\color[rgb]{0,0,0}\makebox(0,0)[lb]{\smash{level 1}}}
    \put(0.46504138,0.11816182){\color[rgb]{0,0,0}\makebox(0,0)[lb]{\smash{level 2}}}
    \put(0.46504138,0.15145573){\color[rgb]{0,0,0}\makebox(0,0)[lb]{\smash{level 3}}}
    \put(0.62318761,0.02160935){\color[rgb]{0,0,0}\makebox(0,0)[lb]{\smash{$w_1$}}}
    \put(0.68977547,0.02160935){\color[rgb]{0,0,0}\makebox(0,0)[lb]{\smash{$w_2$}}}
    \put(0.75636338,0.02160935){\color[rgb]{0,0,0}\makebox(0,0)[lb]{\smash{$w_3$}}}
    \put(0.8229512,0.02160935){\color[rgb]{0,0,0}\makebox(0,0)[lb]{\smash{$w_4$}}}
    \put(0.88953911,0.02160935){\color[rgb]{0,0,0}\makebox(0,0)[lb]{\smash{$w_5$}}}
    \put(0.62363716,0.11754228){\color[rgb]{0,0,0}\makebox(0,0)[lb]{\smash{$b_1=\check{b}_4$}}}
    \put(0.46504138,0.18474964){\color[rgb]{0,0,0}\makebox(0,0)[lb]{\smash{level 4}}}
    \put(0.46504138,0.21804364){\color[rgb]{0,0,0}\makebox(0,0)[lb]{\smash{level 5}}}
    \put(0.76064422,0.16620868){\color[rgb]{0,0,0}\makebox(0,0)[lb]{\smash{$b_2$}}}
    \put(0.82628063,0.23388355){\color[rgb]{0,0,0}\makebox(0,0)[lb]{\smash{$b_4$}}}
    \put(0.89286854,0.23388355){\color[rgb]{0,0,0}\makebox(0,0)[lb]{\smash{$b_3$}}}
    \put(0.62711153,0.36548819){\color[rgb]{0,0,0}\makebox(0,0)[lb]{\smash{$b_1$}}}
    \put(0.06069417,0.09877996){\color[rgb]{0,0,0}\makebox(0,0)[lb]{\smash{$b_1$}}}
    \put(0.32687155,0.18142029){\color[rgb]{0,0,0}\makebox(0,0)[lb]{\smash{$b_3$}}}
    \put(0.46504138,0.38451327){\color[rgb]{0,0,0}\makebox(0,0)[lb]{\smash{level 2}}}
  \end{picture}
\endgroup
  }
  \caption{Levels}
  \label{fig:levels}

\myfigspace{}

  \end{figure}

\mythmspace{}

\begin{restatable}{mylemma}{lemplusMinusOne}
\begin{restatablelemma}[{{\rm Appendix~\refAppProof{} in \cite{DBLP:journals/corr/BosekLZS17}}}]
\label{lem:plusMinusOne}
For $\firstVertex\in\White\cup\Black_{\ttime}$ and $\secondVertex\in\set{\dir{\ttime}{\firstVertex},\secondDir{\ttime}{\firstVertex}}$ it holds that $\abs{\level{\ttime}{\firstVertex}-\level{\ttime}{\secondVertex}}\leqslant 1$.
\end{restatablelemma}
\end{restatable}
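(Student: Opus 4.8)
The plan is to reduce the statement to a single re-rooting identity for edges and then run a short case analysis on the colour of $\firstVertex$. Fix $\firstVertex$ and $\secondVertex\in\set{\dir{\ttime}{\firstVertex},\secondDir{\ttime}{\firstVertex}}$; both directions point to children of $\firstVertex$ in the component $\RootedTreeOne$ of $\firstVertex$ in $\Forest_\ttime$ rooted at $\firstVertex$, so $\set{\firstVertex,\secondVertex}$ is an edge of $\Forest_\ttime$, both endpoints lie in $\RootedTreeOne$, and $\secondVertex$ has the colour opposite to $\firstVertex$. Deleting that edge splits $\RootedTreeOne$ into the side of $\firstVertex$ and the side of $\secondVertex$; root each side at its own endpoint and let $\beta$ be the mini-max revenue of the $\firstVertex$-side at $\firstVertex$ and $\alpha$ the mini-max revenue of the $\secondVertex$-side at $\secondVertex$ (keeping $\infty+1=\infty$). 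Unfolding Definition~\ref{def:minimax} once at $\firstVertex$ (its children in $\RootedTreeOne$ are $\secondVertex$, contributing $\alpha+1$, together with its children on its own side, contributing $\beta$ in aggregate under the same min/max) and once at $\secondVertex$ shows that $\dist{\ttime}{\firstVertex}$ is $\firstVertex$'s combine of $\alpha+1$ and $\beta$, and $\dist{\ttime}{\secondVertex}$ is $\secondVertex$'s combine of $\beta+1$ and $\alpha$, where "combine" means $\min$ at a black vertex and $\max$ at a white one; the corresponding second distances are obtained the same way after deleting the optimal child. The engine of the proof is the re-rooting identity: $\beta$, the revenue of $\firstVertex$'s side as seen from $\secondVertex$, equals $\secondDist{\ttime}{\firstVertex}$ when $\secondVertex=\dir{\ttime}{\firstVertex}$ (removing the optimal child of $\firstVertex$ turns its revenue into $\secondDist{\ttime}{\firstVertex}$) and equals $\dist{\ttime}{\firstVertex}$ when $\secondVertex=\secondDir{\ttime}{\firstVertex}$ (removing the second-best child leaves the optimal one in place); symmetrically $\alpha$ equals $\dist{\ttime}{\firstVertex}-1$ or $\secondDist{\ttime}{\firstVertex}-1$ in the two cases.

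With these identities I would dispatch three of the four cases uniformly: $\firstVertex$ black with $\secondVertex=\dir{\ttime}{\firstVertex}$; $\firstVertex$ white with $\secondVertex=\dir{\ttime}{\firstVertex}$; and $\firstVertex$ white with $\secondVertex=\secondDir{\ttime}{\firstVertex}$. In each of them the relation $\secondDist{\ttime}{\firstVertex}\ge\dist{\ttime}{\firstVertex}$ for black $\firstVertex$ (respectively $\leqslant$ for white $\firstVertex$) makes the contribution of $\firstVertex$ strictly dominate the other one in the combine defining $\dist{\ttime}{\secondVertex}$, hence $\dir{\ttime}{\secondVertex}=\firstVertex$, so $\level{\ttime}{\secondVertex}$ — which is $\secondDist{\ttime}{\secondVertex}$ when $\secondVertex$ is white and $\dist{\ttime}{\secondVertex}$ when it is black — collapses exactly to $\alpha$. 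Substituting $\alpha\in\set{\dist{\ttime}{\firstVertex}-1,\;\secondDist{\ttime}{\firstVertex}-1}$ and recalling that $\level{\ttime}{\firstVertex}$ is itself one of $\dist{\ttime}{\firstVertex}$, $\secondDist{\ttime}{\firstVertex}$, the monotone relation between these two quantities yields $\abs{\level{\ttime}{\firstVertex}-\level{\ttime}{\secondVertex}}\leqslant 1$ at once (in fact typically $\level{\ttime}{\secondVertex}=\level{\ttime}{\firstVertex}-1$). Each of these three cases should be only a couple of lines once the identity of the first paragraph is written down.

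The remaining case, $\firstVertex$ black with $\secondVertex=\secondDir{\ttime}{\firstVertex}$ — which is precisely the first edge of $\secondPath{\ttime}{\black_\ttime}$ — is the main obstacle and where I expect the real work. Here $\beta=\dist{\ttime}{\firstVertex}$ and $\alpha=\secondDist{\ttime}{\firstVertex}-1$, so $\dist{\ttime}{\secondVertex}=\max(\dist{\ttime}{\firstVertex}+1,\secondDist{\ttime}{\firstVertex}-1)$; as long as $\secondDist{\ttime}{\firstVertex}\leqslant\dist{\ttime}{\firstVertex}+2$ the argument of the previous paragraph still applies after a short tie‑breaking check and gives $\level{\ttime}{\secondVertex}=\dist{\ttime}{\firstVertex}\pm1$, but a larger gap $\secondDist{\ttime}{\firstVertex}-\dist{\ttime}{\firstVertex}$ would only yield $\level{\ttime}{\secondVertex}\in[\dist{\ttime}{\firstVertex}+1,\secondDist{\ttime}{\firstVertex}-1]$, which is not enough for the upper side of the bound. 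Closing this case requires exploiting the context of the lemma — that $\firstVertex=\black_\ttime$ is a freshly arrived vertex with $\secondDist{\ttime}{\black_\ttime}<\infty$, together with the monotonicity of Observation~\ref{obs:monotonic} and the description of which vertices die in turn $\ttime$ (Lemmas~\ref{lem:ifatleasttwoalive} and~\ref{lem:dispatching}) — to control the mini‑max revenues of the non‑optimal neighbours of $\black_\ttime$ and so bound this gap; I would also note that the weaker, one‑sided inequality $\level{\ttime}{\secondVertex}\geqslant\level{\ttime}{\firstVertex}-1$ holds unconditionally from the same identities, which suffices for the charging argument even if the full symmetric bound needs the extra structural input.
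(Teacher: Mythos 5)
Your re-rooting identity is exactly the engine the paper uses: its appendix packages it as the determined distance $\detDist{\ttime}{\cdot}{\cdot}$, your $\alpha$ and $\beta$ are $\detDist{\ttime}{\firstVertex}{\secondVertex}$ and $\detDist{\ttime}{\secondVertex}{\firstVertex}$, and your two identities are Observations~\ref{obs:distSecdist} and~\ref{obs:detdist}. However, the uniform ``strict domination'' step for the three easy cases has a sign error. When $\firstVertex$ is white, $\secondVertex$ is black and the combine at $\secondVertex$ is a $\min$, so the white-side relation $\secondDist{\ttime}{\firstVertex}\leqslant\dist{\ttime}{\firstVertex}$ argues \emph{against} $\firstVertex$ dominating, not for it: in $\firstVertex$ white, $\secondVertex=\secondDir{\ttime}{\firstVertex}$ one gets $\alpha=\secondDist{\ttime}{\firstVertex}-1\leqslant\dist{\ttime}{\firstVertex}-1<\beta+1$, so $\dir{\ttime}{\secondVertex}\neq\firstVertex$, the opposite of your claim; and in $\firstVertex$ white, $\secondVertex=\dir{\ttime}{\firstVertex}$ neither term of $\min(\beta+1,\alpha)$ is forced. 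There is a second slip in the same place: for $\secondVertex$ black with $\dir{\ttime}{\secondVertex}=\firstVertex$ one has $\level{\ttime}{\secondVertex}=\dist{\ttime}{\secondVertex}=\beta+1$, not $\alpha$ --- the collapse to $\alpha$ is what happens when $\dir{\ttime}{\secondVertex}\neq\firstVertex$. The bound $\abs{\level{\ttime}{\firstVertex}-\level{\ttime}{\secondVertex}}\leqslant 1$ does still hold in (a), (b), (c), but it needs a short split on which term realizes the combine, which is what the paper's appendix does by splitting on whether $\dir{\ttime}{\black}=\white$.

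Your unease about the remaining case, $\firstVertex$ black with $\secondVertex=\secondDir{\ttime}{\firstVertex}$, is justified and is not a mere ``needs more work'' gap. The paper spells out four subcases for $\firstVertex$ white and asserts the black case is symmetric, but the level function is not black/white symmetric ($\level{\ttime}{\white}=\secondDist{\ttime}{\white}$ while $\level{\ttime}{\black}=\dist{\ttime}{\black}$); translating the paper's fourth subcase to $\firstVertex$ black, $\secondVertex=\secondDir{\ttime}{\firstVertex}$, $\dir{\ttime}{\secondVertex}\neq\firstVertex$ gives only $\level{\ttime}{\secondVertex}\in[\dist{\ttime}{\firstVertex}+1,\,\secondDist{\ttime}{\firstVertex}-1]$, exactly as you computed, and when $\secondDist{\ttime}{\firstVertex}-\dist{\ttime}{\firstVertex}>2$ the upper inequality $\level{\ttime}{\secondVertex}\leqslant\level{\ttime}{\firstVertex}+1$ genuinely need not hold (one can build such trees, so the extra structural input you hoped to invoke will not rescue it --- and note the lemma is applied to arbitrary consecutive pairs on the mini-max paths, not only to the first edge from $\black_{\ttime}$). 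Your closing remark is the right resolution: in all four cases $\level{\ttime}{\secondVertex}\geqslant\min(\alpha,\beta+1)$, and combining this with the sign of $\secondDist{\ttime}{\firstVertex}-\dist{\ttime}{\firstVertex}$ at $\firstVertex$ gives the one-sided bound $\level{\ttime}{\secondVertex}\geqslant\level{\ttime}{\firstVertex}-1$ unconditionally; that is the only direction the paper ever uses (``levels drop by at most one'' in the proof of Claim~\ref{clm:size-of-path}).
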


\mythmspace{}

\MANUSCRIPTVERSION{
\begin{proof}
See Appendix~\refAppProof{}.
\end{proof}
}

We are ready to move on to the proof of Theorem \ref{thm:mainthm}.
We consider two cases:
\begin{enumerate}
\item
\label{enu:easy case}
the level of a dispatching vertex in turn $\ttime$ grows by at most a factor of $\constDist$
\item
\label{enu:hard case}
the level of a dispatching vertex in turn $\ttime$ grows by more than a factor of $\constDist$
\end{enumerate}
where $\constDist$ is some constant value greater than $1$ which we reveal later on.
The total length of paths $\pathhs{\ttime}{\black_{\ttime}}$ satisfying case (\ref{enu:easy case}) is bounded by Lemma \ref{lem:easy case new}
while the total length of paths $\pathhs{\ttime}{\black_{\ttime}}$ satisfying case (\ref{enu:hard case}) is bounded by Lemma \ref{lem:hard case}.  
 
\mythmspace{}

\begin{mylemma}
\label{lem:easy case new}For cases when $\dist{\ttime}{\black_{\ttime}}<\infinity$, $\disp{\ttime}$ is defined and
$\level{\ttime}{\disp{\ttime}}< \constDist \level{ \ttime-1}{\disp{\ttime}}$
the total length of paths $\pathhs{\ttime}{\black_{\ttime}}$  is bounded by
$2 \constDist n + \constDist n \log_2 n$.
\end{mylemma}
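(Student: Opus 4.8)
The plan is to charge the length of each path $\pathhs{\ttime}{\black_{\ttime}}$ to the vertices lying on it, in the spirit of the charging scheme used in Lemma~\ref{lem:deggeq2}, but organized by \emph{levels} rather than by component sizes. Recall from Observation~\ref{obs:distPathDisp} that $\length{\pathhs{\ttime}{\black_{\ttime}}}=\dist{\ttime}{\disp{\ttime}}=\level{\ttime}{\disp{\ttime}}$, and that $\disp{\ttime}$ is the dispatching vertex, which has two alive neighbours in turn $\ttime$, hence two alternative mini-max paths branching from it. By Lemma~\ref{lem:plusMinusOne} the level drops by at most one as one walks along $\pathhs{\ttime}{\black_{\ttime}}$ (and also along $\secondPath{\ttime}{\disp{\ttime}}$), so along each of the two branches out of $\disp{\ttime}$ all levels $0,1,\dots$ up to roughly $\level{\ttime}{\disp{\ttime}}$ are realized. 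The key point is that the shorter of the two branches has length at least $\tfrac12\level{\ttime}{\disp{\ttime}}$-ish, so I can spend the cost of $\pathhs{\ttime}{\black_{\ttime}}$ by charging a bounded number of tokens to each of the vertices on the branch \emph{not} taken, namely to vertices realizing each level value. Concretely, I would charge, for each level value $\ell$, a constant number of tokens to a vertex $v_\ell$ on the off-branch with $\level{\ttime}{v_\ell}=\ell$; summing over $\ell$ recovers $\Theta(\level{\ttime}{\disp{\ttime}})=\Theta(\length{\pathhs{\ttime}{\black_{\ttime}}})$.

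Next I would bound how often any fixed vertex $v$ can be charged \emph{at a fixed level value $\ell$}. This is where the hypothesis of case~(\ref{enu:easy case}) enters. If $v$ is charged at level $\ell$ in turn $\ttime$, that means $\level{\ttime}{v}=\ell$ and $v$ sits on the off-branch of the dispatching vertex, so its level contributed to the path; I want to argue that between two consecutive turns in which $v$ is charged at the same level value $\ell$, the level of the relevant dispatching vertex must have grown. Using monotonicity of the level function (Observation~\ref{obs:monotonic}, which gives monotonicity of $\dist{}{}$ and $\secondDist{}{}$, hence of $\level{}{}$) together with the case assumption $\level{\ttime}{\disp{\ttime}}<\constDist\,\level{\ttime-1}{\disp{\ttime}}$, I would show that the level of $v$ (or of the local dispatching structure) increases by a bounded \emph{multiplicative} factor each time $v$ is charged, so $v$ can be charged at a fixed level value only $O(\log_{\constDist} n)=O(\constDist\log_2 n)$ times — here I use that levels are bounded by $2n$. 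Since there are at most $2n$ distinct level values and $2n$ vertices, and each charge is a constant number of tokens, the total is $O(\constDist\, n\log_2 n)$, and a careful accounting of the constants yields exactly the claimed bound $2\constDist n+\constDist n\log_2 n$; the additive $2\constDist n$ term absorbs a base case / off-by-one contribution (e.g. the first time each vertex appears, or the level-$0$ endpoints).

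The main obstacle I anticipate is making the charging \emph{disjoint and well-defined across turns}: I must pick, for each turn $\ttime$ and each level value $\ell$, a canonical vertex $v_\ell(\ttime)$ on the off-branch of $\disp{\ttime}$ to receive the charge, and then prove the ``level must grow between consecutive charges'' statement for that specific vertex. The subtlety is that the off-branch, and even which neighbour of $\disp{\ttime}$ counts as the off-branch, changes from turn to turn; moreover $\disp{\ttime}$ itself changes. I expect the cleanest route is to charge not an arbitrary witness but the vertex determined by the $\secondDir$/$\dir$ pointers out of $\disp{\ttime}$ (the \emph{second} mini-max direction gives the off-branch), walk down it deterministically to the vertex at level $\ell$, and then invoke Lemma~\ref{lem:plusMinusOne} to guarantee existence of such a vertex and Observation~\ref{obs:monotonic} to guarantee that re-charging that vertex at the same level forces $\level{}{\disp{}}$ to have strictly increased — indeed to have increased by a factor bounded below by something like $\constDist/(\constDist-1)$ or a fixed constant, using the case hypothesis to rule out the ``grew a lot'' scenario. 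Pinning down that multiplicative gap, and checking it survives the $\pm1$ slack from Lemma~\ref{lem:plusMinusOne}, is the technical heart of the argument.
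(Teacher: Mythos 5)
There is a genuine gap in the proposed charging scheme. The core difficulty is that the case hypothesis $\level{\ttime}{\disp{\ttime}}<\constDist\,\level{\ttime-1}{\disp{\ttime}}$ is an \emph{upper} bound on the growth of the dispatching vertex's level; it places no lower bound on that growth, and in fact allows the level to stay flat across many turns. You want to bound the number of times a fixed vertex $v$ can be re-charged at a fixed level value $\ell$ by arguing that the level of $v$, or of ``the local dispatching structure'', must increase multiplicatively between charges. But nothing in the easy-case hypothesis forces any increase at all: one can arrange turns where $\disp{\ttime}$ is a different vertex each time, its level is roughly constant (so the hypothesis is satisfied trivially), yet the second-direction branch out of $\disp{\ttime}$ passes through the same vertex $v$ at the same level $\ell$ again and again. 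The ``grew a lot'' scenario you intend to rule out is exactly what would make a multiplicative-growth bound \emph{work}, not what blocks it; the multiplicative-growth machinery is precisely what powers the paper's proof of the \emph{other} case, Lemma~\ref{lem:hard case}, where one is promised that $\level{\ttime}{\disp{\ttime}} \ge \constDist\,\level{\ttime-1}{\disp{\ttime}}$. Invoking the same style of argument under the opposite hypothesis is circular.

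The paper's proof of the easy case relies on a mechanism your plan does not mention at all: the component structure of the subforest $\InducedTree{\Alive_{\ttime}}$ of the \emph{final} forest induced on vertices still alive in turn $\ttime$. Adding $\black_{\ttime}$ causes the component $C$ of $\disp{\ttime}$ in $\InducedTree{\Alive_{\ttime-1}}$ to split into a dying region $D$ plus components $C_1,\ldots,C_k$, with $C_1$ largest, and by Lemma~\ref{lem:dispatching} the suffix $\pathhs{\ttime}{\black_{\ttime}}$ lives inside a single $C_i$. If $i\neq 1$ one charges the vertices of $C_i$ and each vertex is charged at most $\log_2 n$ times because its component at least halves each time it pays. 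If $i=1$, the case hypothesis is used to bound $\length{\pathhs{\ttime}{\black_{\ttime}}}=\dist{\ttime}{\disp{\ttime}} < \constDist\,\dist{\ttime-1}{\disp{\ttime}} \le \constDist\,\treeDist{\RootedTreeOne}{\white^p}+\constDist$ where $\white^p$ is the predecessor of $\disp{\ttime}$ on $\pathh{\ttime}{\black_{\ttime}}$ and $\RootedTreeOne$ is the component of $\Forest_{\ttime-1}$ rooted at $\disp{\ttime}$; the path $\mmpath{\RootedTreeOne}{\white^p}$ certifying $\treeDist{\RootedTreeOne}{\white^p}$ is disjoint from $C_1$ by construction, so it sits entirely in $D\cup C_2\cup\cdots\cup C_k$, and one pays for it by charging dying vertices (each dies at most once, contributing $\constDist n$ overall) and vertices of $C_2,\ldots,C_k$ (again the halving argument, contributing $\constDist n\log_2 n$). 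The size-halving and dies-once mechanisms are what bound re-charging, and both are absent from your plan. As written, your scheme has no finite bound on how many tokens a single vertex could absorb.
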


\mythmspace{}

\mythmspace{}

\begin{proof}
Let $\ttime$ be such that it satisfies the assumptions of the lemma.
First observe that $\black_{\ttime} \neq \disp{\ttime}$, otherwise $\level{\ttime-1}{\disp{\ttime}}=0 < \level{\ttime}{\disp{\ttime}}/\constDist$.
Due to Lemma~\ref{lem:aliveneighbour} and Lemma~\ref{lem:ifatleasttwoalive}, $\black_{\ttime}$ has precisely one neighbour alive in turn $\ttime-1$.

In order to show an appropriate charging scheme, consider the final forest $\Forest=\Forest_n$.
We study the connected components of a subforest $\InducedTree{\Alive_{\ttime}}$ of $\Forest$ induced on vertices alive in turn $\ttime$.
Recall that vertices not yet presented are considered alive.
In turn $\ttime$ some vertices, in particular $\black_{\ttime}$, die.
Due to Lemma~\ref{lem:dispatching} vertices that die in turn $\ttime$ form a connected component $D$ of $\InducedTree{\Alive_{\ttime-1}}$.
Then the connected component $C$ of $\disp{\ttime}$ in $\InducedTree{\Alive_{\ttime-1}}$ splits into $D$ and components $C_1,\ldots,C_k$ in $\InducedTree{\Alive_{\ttime}}$. 
Let $C_1$ be the largest component among $C_1,\ldots,C_k$.
Due to Lemma~\ref{lem:dispatching}, $\pathhs{\ttime}{\black_{\ttime}}$ is contained entirely in one of the components $C_1,\ldots,C_k$, say $\pathhs{\ttime}{\black_{\ttime}}$ is contained in $C_i$.
If $i \neq 1$, we can charge the length of $\pathhs{\ttime}{\black_{\ttime}}$ by charging $1$ token to the vertices of $C_i$.
A particular vertex can be charged at most $\log_2 n$ tokens this way, as each time it is charged its component halves the size.
It remains to deal with the case when $\pathhs{\ttime}{\black_{\ttime}}$ is contained in the largest component $C_1$.
For the reference see Figure~\ref{pic:easy case}.
\begin{figure}[htbp]\fontsize{10}{10}\selectfont

\myfigspace{}

\centering \def\svgscale{0.9} \resizebox{0.5\textwidth}{!}{
\executeiffilenewer{smallinc.svg}{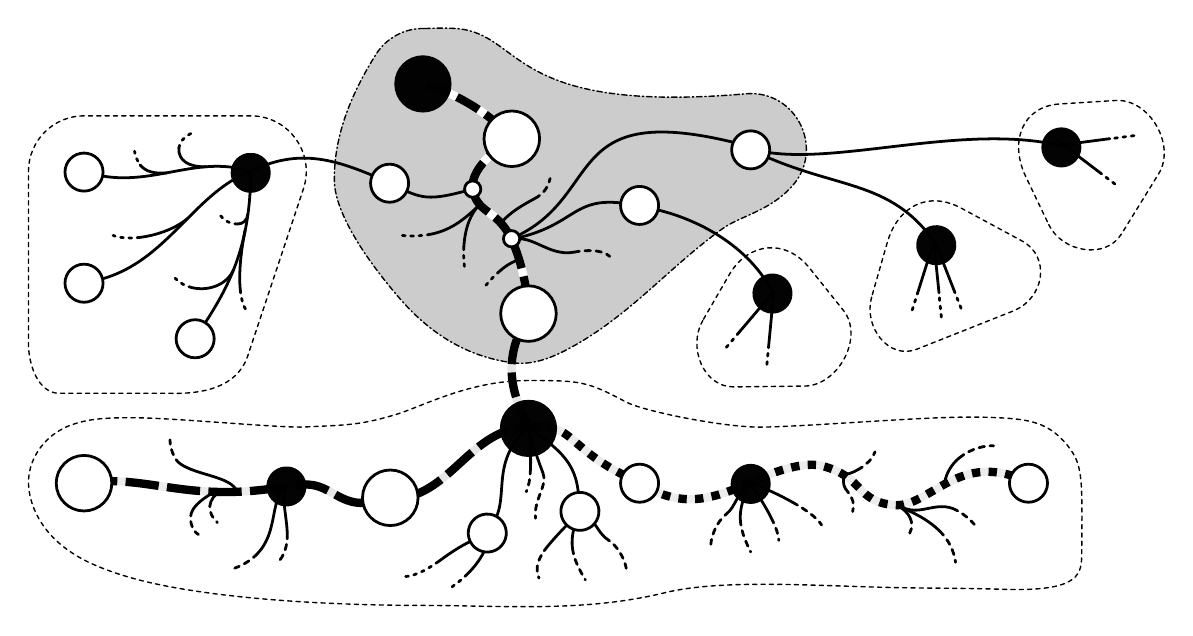} {inkscape -z -C --file=smallinc.svg --export-pdf=smallinc.pdf
--export-latex}
\begingroup
  \makeatletter
  \providecommand\color[2][]{
    \errmessage{(Inkscape) Color is used for the text in Inkscape, but the package 'color.sty' is not loaded}
    \renewcommand\color[2][]{}
  }
  \providecommand\transparent[1]{
    \errmessage{(Inkscape) Transparency is used (non-zero) for the text in Inkscape, but the package 'transparent.sty' is not loaded}
    \renewcommand\transparent[1]{}
  }
  \providecommand\rotatebox[2]{#2}
  \ifx\svgwidth\undefined
    \setlength{\unitlength}{343.49954224bp}
    \ifx\svgscale\undefined
      \relax
    \else
      \setlength{\unitlength}{\unitlength * \real{\svgscale}}
    \fi
  \else
    \setlength{\unitlength}{\svgwidth}
  \fi
  \global\let\svgwidth\undefined
  \global\let\svgscale\undefined
  \makeatother
  \begin{picture}(1,0.53253305)
    \put(0,0){\includegraphics[width=\unitlength,page=1]{smallinc.pdf}}
    \put(0.852987,0.06046234){\color[rgb]{0,0,0}\makebox(0,0)[lb]{\smash{$C_1$}}}
    \put(0.05182138,0.22814812){\color[rgb]{0,0,0}\makebox(0,0)[lb]{\smash{$C_2$}}}
    \put(0.66036465,0.23445281){\color[rgb]{0,0,0}\makebox(0,0)[lb]{\smash{$C_3$}}}
    \put(0.82503926,0.29335921){\color[rgb]{0,0,0}\makebox(0,0)[lb]{\smash{$C_4$}}}
    \put(0.89083271,0.34488768){\color[rgb]{0,0,0}\makebox(0,0)[lb]{\smash{$C_5$}}}
    \put(0.38340847,0.46774083){\color[rgb]{0,0,0}\makebox(0,0)[lb]{\smash{$b_t$}}}
    \put(0.47540276,0.17487289){\color[rgb]{0,0,0}\makebox(0,0)[lb]{\smash{$\disp{\ttime}$}}}
    \put(0.47336488,0.27065177){\color[rgb]{0,0,0}\makebox(0,0)[lb]{\smash{$\white^p$}}}
  \end{picture}
\endgroup
  }
\caption{Splitting the component $C$ into $C_1,\ldots,C_5$ and $D$.}
\label{pic:easy case}

\myfigspace{}

\end{figure}
Let $\white^p$ be the predecessor of $\disp{\ttime}$ on $\pathh{\ttime}{\black_{\ttime}}$.
Let $\RootedTreeOne$ be the connected component of $\disp{\ttime}$ in $\Forest_{\ttime-1}$ rooted at $\disp{\ttime}$ and let $\RootedTreeROne$ be the connected component of $\disp{\ttime}$ in $\Forest_{\ttime-1}$ rooted at $\white^p$.
For the reference see Figure~\ref{pic:easy case 1}.
\begin{figure}[htbp]\fontsize{10}{10}\selectfont

\myfigspace{}

\centering \def\svgscale{0.9} \resizebox{0.7\textwidth}{!}{
\executeiffilenewer{easy_case_1-popr.svg}{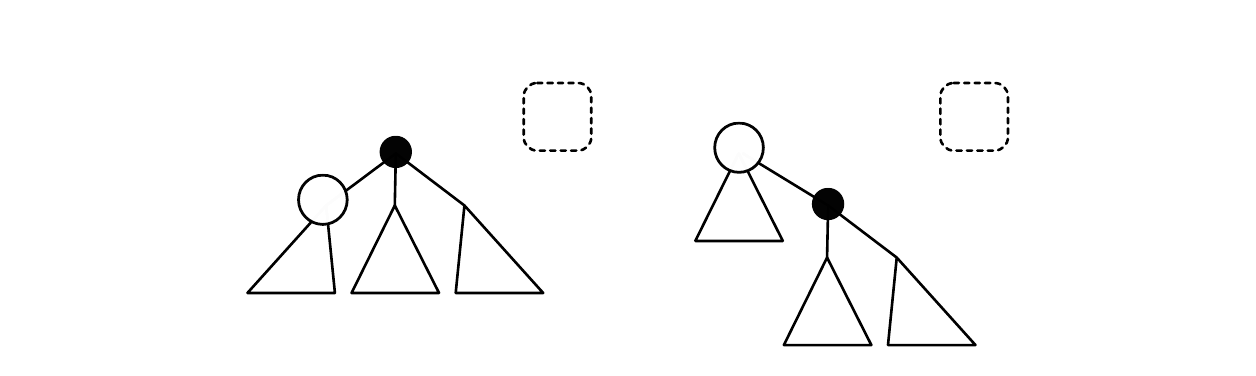} {inkscape -z -C --file=easy_case_1-popr.svg --export-pdf=easy_case_1-popr.pdf
--export-latex}
\begingroup
  \makeatletter
  \providecommand\color[2][]{
    \errmessage{(Inkscape) Color is used for the text in Inkscape, but the package 'color.sty' is not loaded}
    \renewcommand\color[2][]{}
  }
  \providecommand\transparent[1]{
    \errmessage{(Inkscape) Transparency is used (non-zero) for the text in Inkscape, but the package 'transparent.sty' is not loaded}
    \renewcommand\transparent[1]{}
  }
  \providecommand\rotatebox[2]{#2}
  \ifx\svgwidth\undefined
    \setlength{\unitlength}{361.53280904bp}
    \ifx\svgscale\undefined
      \relax
    \else
      \setlength{\unitlength}{\unitlength * \real{\svgscale}}
    \fi
  \else
    \setlength{\unitlength}{\svgwidth}
  \fi
  \global\let\svgwidth\undefined
  \global\let\svgscale\undefined
  \makeatother
  \begin{picture}(1,0.29950054)
    \put(0,0){\includegraphics[width=\unitlength,page=1]{easy_case_1-popr.pdf}}
    \put(0.30788684,0.20231764){\color[rgb]{0,0,0}\makebox(0,0)[lb]{\smash{$\disp{\ttime}$}}}
    \put(0.30226969,0.07784737){\color[rgb]{0,0,0}\makebox(0,0)[lb]{\smash{$B$}}}
    \put(0.37280296,0.07784737){\color[rgb]{0,0,0}\makebox(0,0)[lb]{\smash{$C$}}}
    \put(0.23173762,0.07784737){\color[rgb]{0,0,0}\makebox(0,0)[lb]{\smash{$A$}}}
    \put(0.2207572,0.16290296){\color[rgb]{0,0,0}\makebox(0,0)[lb]{\smash{$\white^p$}}}
    \put(0.65225468,0.16082771){\color[rgb]{0,0,0}\makebox(0,0)[lb]{\smash{$\disp{\ttime}$}}}
    \put(0.64663753,0.03635744){\color[rgb]{0,0,0}\makebox(0,0)[lb]{\smash{$B$}}}
    \put(0.71717079,0.03635744){\color[rgb]{0,0,0}\makebox(0,0)[lb]{\smash{$C$}}}
    \put(0.57610486,0.1193373){\color[rgb]{0,0,0}\makebox(0,0)[lb]{\smash{$A$}}}
    \put(0.5526776,0.20439289){\color[rgb]{0,0,0}\makebox(0,0)[lb]{\smash{$\white^p$}}}
    \put(0.42924232,0.19401951){\color[rgb]{0,0,0}\makebox(0,0)[lb]{\smash{$\RootedTreeOne$}}}
    \put(0.75597651,0.19401951){\color[rgb]{0,0,0}\makebox(0,0)[lb]{\smash{$\RootedTreeROne$}}}
    \put(0.20416142,0.25003117){\color[rgb]{0,0,0}\makebox(0,0)[lb]{\smash{root $\disp{\ttime}$:}}}
    \put(0.56512503,0.25003117){\color[rgb]{0,0,0}\makebox(0,0)[lb]{\smash{root $w^p$:}}}
  \end{picture}
\endgroup
  }
\caption{The connected component of $\Forest_{\ttime-1}$ rooted in $\disp{\ttime}$ and $\white^{p}$.}
\label{pic:easy case 1}

\myfigspace{}

\end{figure}
By Observation~\ref{obs:distPathDisp} and our assumptions it holds that
\MANUSCRIPTVERSION{
\begin{multline*}
\length{\pathhs{\ttime}{\black_{\ttime}}}=
\dist{\ttime}{\disp{\ttime}}<
\constDist\dist{\ttime-1}{\disp{\ttime}}=\\
=
\constDist\min_{\white \in \children{\RootedTreeOne}{\disp{\ttime}}} \treeDist{\RootedTreeOne}{\white} + \constDist \leqslant
\constDist\treeDist{\RootedTreeOne}{\white^p} +\constDist.
\end{multline*}
}
\LNCSVERSION{
$
\length{\pathhs{\ttime}{\black_{\ttime}}}=
\dist{\ttime}{\disp{\ttime}}<
\constDist\dist{\ttime-1}{\disp{\ttime}}
=
\constDist\min_{\white \in \children{\RootedTreeOne}{\disp{\ttime}}} \treeDist{\RootedTreeOne}{\white} + \constDist \leqslant
\constDist\treeDist{\RootedTreeOne}{\white^p} +\constDist.
$
}
The constant cost of $\constDist$ gives a total cost of $\constDist n$ over all turns.
What remains to show is how to charge the cost of $\constDist\treeDist{\RootedTreeOne}{\white^p}$.
Since $\white^p\in \Alive_{\ttime-1}$ it holds that
\MANUSCRIPTVERSION{
\begin{multline*}
\treeDist{\RootedTreeOne}{\white^p}=
\max_{\black \in \children{\RootedTreeROne}{\white^p}\setminus\set{\disp{\ttime}}}\treeDist{\RootedTreeROne}{\black}+1\leqslant\\
\leqslant \max_{\black \in \children{\RootedTreeROne}{\white^p} } \treeDist{\RootedTreeROne}{\black} + 1=
\dist{\ttime-1}{\white^p}<
\infinity.
\end{multline*}
}
\LNCSVERSION{
$
\treeDist{\RootedTreeOne}{\white^p}=
\max_{\black \in \children{\RootedTreeROne}{\white^p}\setminus\set{\disp{\ttime}}}\treeDist{\RootedTreeROne}{\black}+1
\leqslant \max_{\black \in \children{\RootedTreeROne}{\white^p} } \treeDist{\RootedTreeROne}{\black} + 1=
\dist{\ttime-1}{\white^p}<
\infinity.
$
}
Thus $\treeDist{\RootedTreeOne}{\white^p}=\length{\mmpath{\RootedTreeOne}{\white^p}}$. We charge the vertices of $\mmpath{\RootedTreeOne}{\white^p}$ to pay for the cost given by its length.
Due to Lemma~\ref{lem:mmPathsAlive} it holds that $\mmpath{\RootedTreeOne}{\white^p}$ visits only vertices that are alive in turn $\ttime-1$.
By definition $V(\mmpath{\RootedTreeOne}{\white^p}) \cap C_1 = \emptyset$, so each vertex of $\mmpath{\RootedTreeOne}{\white^p}$ either dies in turn $\ttime$ or is contained in $C_i$ for $i>1$.
To pay for that, we charge $\constDist$ tokens to every vertex that dies in turn $\ttime$ and we charge $\constDist$ tokens to each vertex of components $C_2 \ldots C_k$.
The total charge for this case sums up to $\constDist n + \constDist n \log_2 n$.
If we add the charge we needed for other cases, we obtain a total of $2 \constDist n + \constDist n \log_2 n$.
\LNCSVERSION{\qed}\end{proof}

\mythmspace{}
 
\mythmspace{}

\begin{mylemma}
\label{lem:hard case}
For cases when $\dist{\ttime}{\black_{\ttime}} < \infinity$, $\disp{\ttime}$ is defined and $\level{\ttime}{\disp{\ttime}} \geqslant \constDist
\level{\ttime-1}{\disp{\ttime}}$ the sum of the lengths of paths $\pathhs{\ttime}{\black_{\ttime}}$ is bounded by
$\frac{\constDist\bra{\constDist+1}}{\bra{\constDist-1}^2} n (2 \ln n +3.4) +n$.
\end{mylemma}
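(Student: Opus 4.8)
The plan is to reuse the skeleton of Lemma~\ref{lem:easy case new}, but since now the level of $\disp{\ttime}$ \emph{jumps} by a factor $\ge\constDist$, to pay for $\pathhs{\ttime}{\black_{\ttime}}$ out of the vertices of $\pathhs{\ttime}{\black_{\ttime}}$ itself (a side path as in Lemma~\ref{lem:easy case new} would only be of length $L'_\ttime\le L_\ttime/\constDist$, which is not enough), at the price of a logarithmic overcharge. First I would reduce the quantity to be bounded: by Observation~\ref{obs:distPathDisp} and Definition~\ref{def:level} (as $\disp{\ttime}\in\Black_{\ttime}$) we have $\length{\pathhs{\ttime}{\black_{\ttime}}}=\dist{\ttime}{\disp{\ttime}}=\level{\ttime}{\disp{\ttime}}$, so it suffices to bound $\sum_{\ttime}\level{\ttime}{\disp{\ttime}}$ over the turns $\ttime$ in the statement. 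Write $L_\ttime=\level{\ttime}{\disp{\ttime}}$ and $L'_\ttime=\level{\ttime-1}{\disp{\ttime}}$, so the hypothesis reads $L_\ttime\ge\constDist L'_\ttime$. Next I would record the shape of $\pathhs{\ttime}{\black_{\ttime}}$: the level decreases by exactly $1$ at each step of $\pathhs{\ttime}{\black_{\ttime}}$ (a consequence of Lemma~\ref{lem:plusMinusOne} together with the fact, noted just after Definition~\ref{def:level}, that the mini-max path leaves every black vertex through its minimum edge and enters every white vertex through its maximum edge), so for each integer $j$ with $0\le j\le L_\ttime$ there is exactly one vertex $v^\ttime_j$ on $\pathhs{\ttime}{\black_{\ttime}}$ with $\level{\ttime}{v^\ttime_j}=j$, sitting at distance $L_\ttime-j$ from $\disp{\ttime}$. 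Finally, recall (Lemma~\ref{lem:mmPathsAlive}, since $\disp{\ttime}\in\Alive_{\ttime}$, as observed before Observation~\ref{obs:dieOnlyOnce}) that every vertex of $\pathhs{\ttime}{\black_{\ttime}}$ is alive in turn $\ttime$, hence also in turn $\ttime-1$ by Observation~\ref{obs:monotonic}; as none of them is $\black_{\ttime}$, every initial segment of $\pathhs{\ttime}{\black_{\ttime}}$ starting at $\disp{\ttime}$ already exists inside $\Forest_{\ttime-1}$.

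The charging scheme I propose is: fix a threshold $\alpha=\alpha(\constDist)\in(0,1)$ close enough to $1$, and for every turn $\ttime$ as in the statement and every integer $j$ with $\alpha L_\ttime<j\le L_\ttime$, charge $\tfrac1{1-\alpha}$ tokens to $v^\ttime_j$. There are at least $(1-\alpha)L_\ttime$ such $j$, so the tokens charged in turn $\ttime$ total at least $L_\ttime=\length{\pathhs{\ttime}{\black_{\ttime}}}$; thus the sum we want is at most the total number of tokens. (Turns in which $L_\ttime$ is below a fixed constant, or in which $L'_\ttime=0$, are handled separately by charging one token to $\black_{\ttime}$; these contribute at most $n$ overall, which is the additive $n$ in the statement.) It remains to bound the number of tokens a single vertex $v$ can accumulate.

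The heart of the argument --- and the step I expect to be the main obstacle --- is the monotone-growth claim: \emph{if $v$ is charged in turns $\ttime<\hat\ttime$, then $\level{\hat\ttime}{v}$ exceeds $\level{\ttime}{v}$ by at least a fixed factor $g=g(\constDist)>1$.} By Observation~\ref{obs:monotonic} it is enough to show that in the turn $\ttime$ in which $v$ is charged one has $\level{\ttime-1}{v}\le g^{-1}\level{\ttime}{v}$, and this is where the dead-vertex analysis of Section~\ref{sec:dead} enters. When $v=v^\ttime_j$ is charged we know $\level{\ttime}{v}=j>\alpha L_\ttime\ge\alpha\constDist L'_\ttime$, that $v$ sits at distance $L_\ttime-j$ from $\disp{\ttime}$ on $\pathhs{\ttime}{\black_{\ttime}}$, and that the segment of $\pathhs{\ttime}{\black_{\ttime}}$ from $\disp{\ttime}$ to $v$ already lives in $\Forest_{\ttime-1}$, where $\disp{\ttime}$ has level only $L'_\ttime$. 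Using Lemma~\ref{lem:plusMinusOne} to bound how fast the level changes along this surviving segment, together with monotonicity and Lemma~\ref{lem:dispatching} (which pins down the region that dies in turn $\ttime$, hence the detour that inflated $\level{}{\disp{\ttime}}$), one should obtain an upper bound on $\level{\ttime-1}{v}$ which, for $\alpha$ chosen sufficiently close to $1$ as a function of $\constDist$, is at most $g^{-1}\level{\ttime}{v}$ with $g$ of order $\bigl(\tfrac{\constDist+1}{2}\bigr)^{2}$. Getting the right window $\alpha$ and, above all, the correct propagation estimate for $\level{\ttime-1}{v}$ along the part of $\pathhs{\ttime}{\black_{\ttime}}$ that predates $\black_{\ttime}$ is the delicate point; everything else is bookkeeping.

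Granting the claim, the wrap-up is routine. Since the levels of $v$ at the turns where it is charged grow by at least a factor $g$ and lie in $\{1,\dots,2n\}$, the vertex $v$ is charged at most $\log_{g}(2n)+1$ times. Summing $\tfrac1{1-\alpha}$ tokens per charge over all at most $2n$ vertices, estimating $\ln g\ge\tfrac{2(\constDist-1)}{\constDist+1}$ via the elementary inequality $\ln x\ge 1-1/x$, substituting the value of $\tfrac1{1-\alpha}$ forced above, and folding in the $n$ tokens of the boundary turns, gives the stated bound $\tfrac{\constDist(\constDist+1)}{(\constDist-1)^2}\,n\,(2\ln n+3.4)+n$.
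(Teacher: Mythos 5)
Your plan departs substantially from the paper's, and the step you yourself flag as ``the delicate point'' is a genuine gap --- in fact the monotone-growth claim is false as stated, so the argument as proposed cannot be completed.

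The claim you need is: whenever $v$ is charged in turn $\ttime$ (so $v=v^\ttime_j$ with $j>\alpha L_\ttime$), one has $\level{\ttime-1}{v}\leq g^{-1}\level{\ttime}{v}$. Here is a counterexample. Let $\disp{\ttime}=b_0$ have exactly three neighbours alive in turn $\ttime-1$, namely $w_a,w_b,w_c$, with $\detDist{\ttime-1}{b_0}{w_a}=1$ and $\detDist{\ttime-1}{b_0}{w_b}=\detDist{\ttime-1}{b_0}{w_c}=M$. Give $w_b$ a third neighbour $b_1'$ with $\detDist{\ttime-1}{w_b}{b_1'}=M-1$, matching the spine value $\detDist{\ttime-1}{w_b}{b_1}=M-1$. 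In turn $\ttime$ the vertex $\black_\ttime$ arrives in the subtree behind $w_a$ and kills it. Then $L'_\ttime=\dist{\ttime-1}{b_0}=2$ and $L_\ttime=\dist{\ttime}{b_0}=M+1$, so the jump condition holds for any $\constDist\leq(M+1)/2$, and the first vertex of $\pathhs{\ttime}{\black_\ttime}$ after $\disp{\ttime}$ is $w_b$, with $\level{\ttime}{w_b}=L_\ttime-1=M$. But in turn $\ttime-1$ we have $\dir{\ttime-1}{w_b}\in\{b_1,b_1'\}$, so $\secondDist{\ttime-1}{w_b}$ is realised by the \emph{other} of $b_1,b_1'$, giving $\level{\ttime-1}{w_b}=(M-1)+1=M$ as well. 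So $\level{\ttime-1}{w_b}=\level{\ttime}{w_b}$ and there is no multiplicative (or even additive) growth, although $w_b$ is charged under your scheme for every $\alpha<M/(M+1)$. The obstruction is exactly the one your proposed tool cannot see: Lemma~\ref{lem:plusMinusOne} only constrains the level change along the $\dir_{\ttime}$/$\secondDir_{\ttime}$ edges of turn $\ttime$, but $\pathhs{\ttime}{\black_{\ttime}}$ is a turn-$\ttime$ mini-max path and need not follow turn-$(\ttime-1)$ $\dir$/$\secondDir$ edges; across an arbitrary edge the level of a white vertex ($\secondDist{}{}$) is not controlled by the level of a nearby black vertex ($\dist{}{}$).

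The paper circumvents this entirely. It never claims the levels of individual vertices on $\pathhs{\ttime}{\black_{\ttime}}$ grow. Instead it fixes a level $\lev$, looks at the subforest $\TreeLevell{\ttime}{\lev}$ induced by $\{v:\level{\ttime}{v}\ge\lev\}$, and charges a vertex only when its connected component in this subforest first reaches size $\constThreshold\lev$ (at most once per level, contributing $\frac{\constTokens}{\constThreshold\lev}$). The combinatorial event used to pay for $\length{\pathhs{\ttime}{\black_\ttime}}$ is \emph{component merging}: by Claim~\ref{clm:size-of-path}, for a constant fraction of the levels $\lev\in(\levlow,\levhi]$, the addition of $\disp{\ttime}$ to $\TreeLevell{\ttime}{\lev}$ fuses two components, each containing a prefix of $\pathh{\ttime}{\disp{\ttime}}$ or $\secondPath{\ttime}{\disp{\ttime}}$ of length $\levhi-\lev\geq\constThreshold\lev$, so both already carry $\constTokens$ tokens, and one bundle is spent on the path. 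In my counterexample $w_b$'s level is stuck at $M$, but the component of $w_b$ in $\TreeLevell{\ttime}{\lev}$ still merges with that of $w_c$ across $\disp{\ttime}$ for every $\lev\in(2,\lev_1]$, which is what the paper's accounting uses. If you want to rescue a vertex-level charging scheme you would need a substitute for the monotone-growth claim that survives the example above; the prefix of $\pathhs{\ttime}{\black_{\ttime}}$ near $\disp{\ttime}$ is not a safe place to park tokens.
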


\mythmspace{}

\mythmspace{}

\begin{proof}
Given the level function, we want to consider the vertices of $\Forest=\Forest_{n}$ whose level in turn
$\ttime$ is above a certain value $\lev$.
To be more precise, we need to consider the subforest of $\Forest$ induced by such vertices.
This forest changes dynamically as the turns pass by.
We describe it more formally below.

\mythmspace{}

\begin{definition}
For $t\in \range{n}$ and $l \in \mathbb{N}$ we define
$\TreeLevell{\ttime}{l}= \InducedTree{\set{\firstVertex \in \White \cup \Black:
l \leqslant \level{\ttime}{\firstVertex}}}$.
\end{definition}

\mythmspace{}

Recall that if $\black \in \Black \setminus \Black_{\ttime}$ then $\level{\ttime}{\black}=0$.
For a subforest $\Forest' = \tuple{\Vertices', \Edges'}$ of $\Forest =
\tuple{\White \cup \Black, \Edges}$, we denote a connected component of vertex $\firstVertex
\in \Vertices'$ as $\comp{\firstVertex}{\Forest'}$.
The family of all connected components is denoted as $\Comp{\Forest'}=
\set{\comp{\firstVertex}{\Forest'}: \firstVertex \in \Vertices'}$.
For a fixed $\lev$ we observe how $\TreeLevell{\ttime}{\lev}$ changes from turn $\ttime-1$ to
$\ttime$.
Since the level function is monotonic, i.e., it satisfies $\level{\ttime-1}{v} \leqslant \level{\ttime}{v}$ (see
Observation~\ref{obs:monotonic}), the following hold:

\mythmspace{}

\begin{myobservation}
\label{obs:levelProps:subforest}
$\TreeLevell{\ttime-1}{\lev}$ is a subforest of $\TreeLevell{\ttime}{\lev}$. Also,
\MANUSCRIPTVERSION{
$$\VerticesOf{\TreeLevell{\ttime}{\lev}}= \VerticesOf{\TreeLevell{\ttime-1}{\lev}} \cup \set{v \in \VerticesOf{\Forest}:
\level{\ttime-1}{v} < \lev \leqslant \level{\ttime}{v}}.$$
}
\LNCSVERSION{
$\VerticesOf{\TreeLevell{\ttime}{\lev}}= \VerticesOf{\TreeLevell{\ttime-1}{\lev}} \cup \set{v \in \VerticesOf{\Forest}:
\level{\ttime-1}{v} < \lev \leqslant \level{\ttime}{v}}.$
}
\end{myobservation}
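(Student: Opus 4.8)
The plan is to reduce both assertions to a single fact, namely the monotonicity of the level function, which is almost immediate from Observation~\ref{obs:monotonic}. First I would check that $\level{\ttime-1}{\firstVertex} \leqslant \level{\ttime}{\firstVertex}$ for every $\firstVertex \in \White \cup \Black$, distinguishing the cases in Definition~\ref{def:level}. If $\firstVertex \in \White$, then $\firstVertex \in \VerticesOf{\Forest_{\ttime}}$ for all $\ttime$, so $\level{\ttime-1}{\firstVertex}=\secondDist{\ttime-1}{\firstVertex}\leqslant\secondDist{\ttime}{\firstVertex}=\level{\ttime}{\firstVertex}$ by Observation~\ref{obs:monotonic}. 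If $\firstVertex \in \Black$, I would split further: when $\firstVertex \in \Black_{\ttime-1}$ we have $\level{\ttime-1}{\firstVertex}=\dist{\ttime-1}{\firstVertex}\leqslant\dist{\ttime}{\firstVertex}=\level{\ttime}{\firstVertex}$, again by Observation~\ref{obs:monotonic}; when $\firstVertex = \black_{\ttime}$, we have $\level{\ttime-1}{\firstVertex}=0\leqslant\dist{\ttime}{\firstVertex}=\level{\ttime}{\firstVertex}$; and when $\firstVertex \notin \Black_{\ttime}$, both levels are $0$. Hence the level function is monotonic, exactly as the text preceding the observation asserts.

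Given monotonicity, the subforest claim is one line. Both $\TreeLevell{\ttime-1}{\lev}$ and $\TreeLevell{\ttime}{\lev}$ are subgraphs of $\Forest=\Forest_n$ induced on $\set{\firstVertex : \lev \leqslant \level{\ttime-1}{\firstVertex}}$ and $\set{\firstVertex : \lev \leqslant \level{\ttime}{\firstVertex}}$ respectively; monotonicity yields $\VerticesOf{\TreeLevell{\ttime-1}{\lev}} \subseteq \VerticesOf{\TreeLevell{\ttime}{\lev}}$, and since an induced subgraph on a larger vertex set contains every edge of an induced subgraph on a smaller one, $\TreeLevell{\ttime-1}{\lev}$ is a subforest of $\TreeLevell{\ttime}{\lev}$.

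For the vertex-set identity I would prove the two inclusions separately. The inclusion $\supseteq$ holds because $\set{\firstVertex:\lev\leqslant\level{\ttime-1}{\firstVertex}} \subseteq \set{\firstVertex:\lev\leqslant\level{\ttime}{\firstVertex}}$ by monotonicity, while $\set{\firstVertex:\level{\ttime-1}{\firstVertex}<\lev\leqslant\level{\ttime}{\firstVertex}}$ is trivially a subset of $\set{\firstVertex:\lev\leqslant\level{\ttime}{\firstVertex}}$. For the inclusion $\subseteq$: if $\lev\leqslant\level{\ttime}{\firstVertex}$, then either $\lev\leqslant\level{\ttime-1}{\firstVertex}$, so $\firstVertex\in\VerticesOf{\TreeLevell{\ttime-1}{\lev}}$, or $\level{\ttime-1}{\firstVertex}<\lev$, and then $\level{\ttime-1}{\firstVertex}<\lev\leqslant\level{\ttime}{\firstVertex}$, placing $\firstVertex$ in the second set. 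This is pure bookkeeping, so there is no genuine obstacle; the only step meriting a moment's care is the arrival turn of a black vertex, where its level jumps from $0$ to $\dist{\ttime}{\black_{\ttime}}$ --- harmless because $\dist{\ttime}{\black_{\ttime}}\geqslant 0$.
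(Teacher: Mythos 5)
Your proof is correct and follows the same route the paper takes: the paper treats the observation as an immediate consequence of the monotonicity of the level function (itself reduced to Observation~\ref{obs:monotonic}), and you simply make the monotonicity check explicit via the case analysis on Definition~\ref{def:level} before doing the set-theoretic bookkeeping.
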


\mythmspace{}

We fix a turn $\ttime$ for which $\dist{\ttime}{\black_{\ttime}} < \infinity$, $\disp{\ttime}$ is defined, and $\level{\ttime}{\disp{\ttime}} \geqslant \constDist \level{\ttime-1}{\disp{\ttime}}$.
The idea is the following.
We let $\levlow=\level{\ttime-1}{\disp{\ttime}}$ and
$\levhi=\level{\ttime}{\disp{\ttime}}$.
For the purpose of the proof we need a function that describes some intermediate level between
$\level{\ttime-1}{}$ and $\level{\ttime}{}$.
We thus extend the level function to rational indices:
\MANUSCRIPTVERSION{
$$
\level{\ttime-1/2}{\firstVertex}= \left \lbrace
\begin{array}
{ll} \level{\ttime-1}{\firstVertex}& \text{if} \ \firstVertex= \disp{\ttime} \\
\level{\ttime}{\firstVertex}& \text{otherwise}.
\end{array}
\right.
$$
}
\LNCSVERSION{
$\level{\ttime-1/2}{\firstVertex}= \level{\ttime-1}{\firstVertex}$ if $\firstVertex= \disp{\ttime}$ and
$\level{\ttime-1/2}{\firstVertex}=\level{\ttime}{\firstVertex}$ otherwise.
}
Observe that $\level{\ttime}{\firstVertex}$ function is still monotonic in $\ttime$ after the
extension. We illustrate these definitions by example in Figure \ref{fig:fraclev}.
 \begin{figure}[htbp]\fontsize{10}{10}\selectfont
 
 \myfigspace{}

 \centering \def\svgscale{0.5} \resizebox{\textwidth}{!}{
\executeiffilenewer{fraclev.svg}{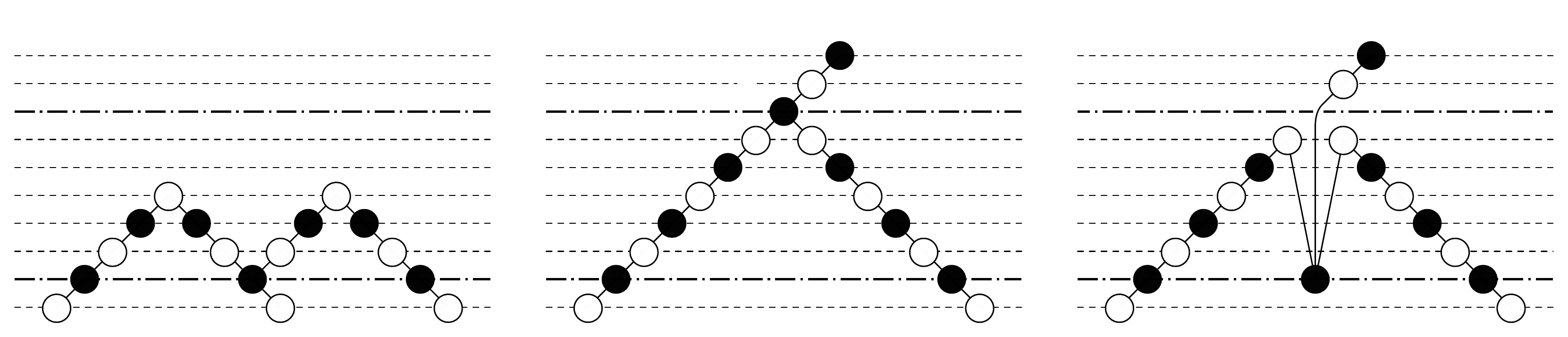} {inkscape -z -C --file=fraclev.svg --export-pdf=fraclev.pdf
--export-latex}
\begingroup
  \makeatletter
  \providecommand\color[2][]{
    \errmessage{(Inkscape) Color is used for the text in Inkscape, but the package 'color.sty' is not loaded}
    \renewcommand\color[2][]{}
  }
  \providecommand\transparent[1]{
    \errmessage{(Inkscape) Transparency is used (non-zero) for the text in Inkscape, but the package 'transparent.sty' is not loaded}
    \renewcommand\transparent[1]{}
  }
  \providecommand\rotatebox[2]{#2}
  \ifx\svgwidth\undefined
    \setlength{\unitlength}{956.59882813bp}
    \ifx\svgscale\undefined
      \relax
    \else
      \setlength{\unitlength}{\unitlength * \real{\svgscale}}
    \fi
  \else
    \setlength{\unitlength}{\svgwidth}
  \fi
  \global\let\svgwidth\undefined
  \global\let\svgscale\undefined
  \makeatother
  \begin{picture}(1,0.21493069)
    \put(0,0){\includegraphics[width=\unitlength]{fraclev.pdf}}
    \put(0.0352423,0.19376586){\color[rgb]{0,0,0}\makebox(0,0)[lb]{\smash{$\level{\ttime-1}{\bullet}$}}}
    \put(0.37422103,0.19376586){\color[rgb]{0,0,0}\makebox(0,0)[lb]{\smash{$\level{\ttime}{\bullet}$}}}
    \put(0.71319976,0.19376586){\color[rgb]{0,0,0}\makebox(0,0)[lb]{\smash{$\level{\ttime-1/2}{\bullet}$}}}
    \put(0.53211375,0.19376586){\color[rgb]{0,0,0}\makebox(0,0)[lb]{\smash{$b_t$}}}
    \put(0.47323853,0.14938865){\color[rgb]{0,0,0}\makebox(0,0)[lb]{\smash{$\check{b}_t$}}}
    \put(0.81043318,0.04234275){\color[rgb]{0,0,0}\makebox(0,0)[lb]{\smash{$\check{b}_t$}}}
    \put(0.87109253,0.19376586){\color[rgb]{0,0,0}\makebox(0,0)[lb]{\smash{$b_t$}}}
    \put(0.83184234,0.16522026){\color[rgb]{0,0,0}\makebox(0,0)[lb]{\smash{$w$}}}
    \put(0.49286362,0.16522026){\color[rgb]{0,0,0}\makebox(0,0)[lb]{\smash{$w$}}}
    \put(0.18956683,0.02249241){\color[rgb]{0,0,0}\makebox(0,0)[lb]{\smash{$w$}}}
    \put(0.32605039,0.03319699){\color[rgb]{0,0,0}\makebox(0,0)[lb]{\smash{$\levlow$}}}
    \put(0.32605039,0.1402429){\color[rgb]{0,0,0}\makebox(0,0)[lb]{\smash{$\levhi$}}}
    \put(0.66502911,0.03319699){\color[rgb]{0,0,0}\makebox(0,0)[lb]{\smash{$\levlow$}}}
    \put(0.66502911,0.1402429){\color[rgb]{0,0,0}\makebox(0,0)[lb]{\smash{$\levhi$}}}
  \end{picture}
\endgroup
  }
 \caption{Fractional levels}
 \label{fig:fraclev}

\myfigspace{}

 \end{figure}
We observe that on levels $\lev$ from $\levlow$ to $\levhi-1$ the
separate components of~$\TreeLevell{\ttime-1/2}{\lev}$ are merged in $\TreeLevell{\ttime}{\lev}$.
It is this merging that allows us to provide the charging scheme.
The level defined for fractional indices may be interpreted as an additional fractional turn
between $\ttime-1$ and $\ttime$.

Let us fix a level $\lev$.
In every turn $\ttime = 0, \frac{1}{2}, 1, 1\frac{1}{2}, \ldots,n$ a number of $\constTokens$
tokens is assigned to every connected component $\component$ in $\TreeLevell{\ttime}{\lev}$ such that
$\size{\component} \geqslant \constThreshold \lev$, where $\constTokens$ and
$\constThreshold$ are constants that we compute later.
Smaller components are not assigned any tokens.
Note that if $\lev$ is large, only large components are assigned tokens.
We plan to use these tokens to pay for the mini-max paths in each turn.
First, however, we describe how we maintain such an assignment on level $\lev$.

First we consider moving from turn $\ttime-1$ to turn $\ttime - 1/2$.
The forest $\TreeLevell{\ttime-1/2}{\lev}$ is obtained by adding the set of vertices
$
\new{\ttime-1/2} = \VerticesOf{\TreeLevell{\ttime-1/2}{\lev}} \setminus
\VerticesOf{\TreeLevell{\ttime-1}{\lev}} = \set{v \in \Black_{\ttime}\cup \White:
\level{\ttime-1}{v} < \lev \leqslant \level{\ttime - 1/2}{v}}
$
to $\TreeLevell{\ttime-1}{\lev}$ (see Observation~\ref{obs:levelProps:subforest}).
We want to add some structure to this process.
We divide transformation from $\TreeLevell{\ttime-1}{\lev}$ to
$\TreeLevell{\ttime-1/2}{\lev}$ into two sub-phases.
In the first sub-phase, the vertices of~$\new{\ttime-1/2}$ form new singleton components:
$\CComp' = \bigcup_{v \in \new{\ttime-1/2}} \set{ \tuple{ \set{v},\emptyset } }$, where
$\tuple{ \set{v},\emptyset }$ is a graph with only one vertex $v$ and without edges.
Together with the set of connected components of~$\TreeLevell{\ttime-1}{\lev}$ (referred to as
$\Comp{\TreeLevell{\ttime-1}{\lev}}$) they form a family
$\mathcal{I}=\Comp{\TreeLevell{\ttime-1}{\lev}} \cup \CComp'$.
In the second sub-phase, components in $\mathcal{I}$ merge whenever there is an edge of~$\Forest$
connecting them, finally becoming the connected components of $\TreeLevell{\ttime-1/2}{\lev}$.
Every component $\component \in \Comp{\TreeLevell{\ttime-1/2}{\lev}}$ can be assigned a set
of components $\mathcal{I}_{\component}=\set{\component_1 \ldots \component_k} \subseteq
\mathcal{I}$ that merged into $\component$.
There are two possible options:
\begin{enumerate}
[label=(\alph*)]
\item \label{case:there-is-big} there is the component $\component_i \in \mathcal{I}$ with size
$\size{\component_i} \geqslant \constThreshold \lev$, so $\component_i$ is already assigned
$\constTokens$ tokens
\item \label{case:all-small} every $\component_i \in \mathcal{I}$ satisfies $\size{\component_i}
< \constThreshold \lev$, so none of them is assigned any tokens.
\end{enumerate}
In case~\ref{case:there-is-big}, $\constTokens$ assigned to $\component_i$, which ceases to
exist, is now transferred to $\component$.
In case~\ref{case:all-small}, if $\size{\component} \geqslant \constThreshold \lev$, every
vertex $\firstVertex \in \component$ chips in with a
payment~of~$\frac{\constTokens}{\constThreshold \lev}$, so the vertices of~$\component$ pay in
total at least $\constTokens$.
We count the total amount that is paid at the end of the proof.

We now consider the transition from turn $\ttime-1/2$ to turn $\ttime$.
There is only one vertex, mainly $\disp{\ttime}$, which changes its level.
Its level increases from $\levlow$ to $\levhi$.
Level $\lev$ is only affected by this transition if $\levlow < \lev \leqslant \levhi$.
So, the forest $\TreeLevell{\ttime}{\lev}$ is formed from $\TreeLevell{\ttime - 1/2}{\lev}$
by adding $\disp{\ttime}$.
The only difference between $\Comp{\TreeLevell{\ttime-1/2}{\lev}}$ and
$\Comp{\TreeLevell{\ttime}{\lev}}$ is that some family of separate components of
$\Comp{\TreeLevell{\ttime-1/2}{\lev}}$ becomes connected by $\disp{\ttime}$ and forms a new connected component $\comp{\disp{\ttime}}{\TreeLevell{\ttime}{\lev}}$.
The set of components that merge into $\comp{\disp{\ttime}}{\TreeLevell{\ttime}{\lev}}$ is
precisely $\mathcal{I'}=\Comp{\comp{\disp{\ttime}} {\TreeLevell{\ttime}{\lev}} \setminus
\disp{\ttime}} \cup \set{ \tuple{ \disp{\ttime}, \emptyset } }$.
The way of assigning $\constTokens$ to $\comp{\disp{\ttime}}{\TreeLevell{\ttime}{\lev}}$ if
$\size{\comp{\disp{\ttime}}{\TreeLevell{\ttime}{\lev}}} \geqslant \constThreshold \lev$
is the same as in the transition from $\ttime-1$ to $\ttime-1/2$.
The difference is that now we want to utilize some of the assigned tokens to pay for the mini-max
path in turn $\ttime$.
Thus, we distinguish three cases now:
\begin{enumerate}
[label=(\roman*)]
\item \label{case:exactly-one} exactly one of the components $\component' \in \mathcal{I'}$
satisfies $\size{\component'} \geqslant \constThreshold \lev$, so $\component'$ is
assigned $\constTokens$ tokens,
\item \label{case:every-smaller} every $\component \in \mathcal{I'}$ satisfies
$\size{\component} < \constThreshold \lev$ so none of them is assigned tokens,
\item \label{case:two-big} two or more components $\component',\component'' \in
\mathcal{I'}$ satisfy $\size{\component'} \geqslant \constThreshold \lev$ and
$\size{\component''} \geqslant \constThreshold \lev$, so $\component'$ and $\component''$
are both already assigned $\constTokens$ tokens.
\end{enumerate}
Cases~\ref{case:exactly-one} and~\ref{case:every-smaller} are handled in the exactly same manner
as in the transition from $\ttime-1$ to $\ttime-1/2$.
The difference is that in case~\ref{case:two-big} we utilize $\constTokens$ tokens assigned to
$\component'$ while $\constTokens$ tokens assigned to $\component''$ transfer to
$\comp{\disp{\ttime}}{\TreeLevell{\ttime}{\lev}}$.

It remains to prove that the tokens utilized in turn $\ttime$ suffice to pay for
$\length{\pathhs{\ttime}{\black_{\ttime}}}$.
Observation~\ref{obs:distPathDisp} shows that $\level{\ttime}{\disp{\ttime}} =
\length{\pathhs{\ttime}{\black_{\ttime}}}$.
So we need to pay $\level{\ttime}{\disp{\ttime}}$ tokens when moving from turn $\ttime-1/2$ to
$\ttime$.
Let $\constThreshold:=( \constDist-1)/(\constDist+1)$.
By Claim~\ref{clm:size-of-path}, proved later on, case~\ref{case:two-big} occurs on at least $\bra{\levhi-\levlow}/2$ levels.
Since $\levlow \leqslant \levhi / \constDist$, we utilize at least $\constTokens
\bra{\levhi-\levhi / \constDist}/2= \frac{\constTokens
\bra{1-1/\constDist}}{2}\level{\ttime}{\disp{\ttime}}$ tokens.
Setting
$\constTokens:=2/(1-1/\constDist)$
allows paying the desired amount.

Now we count the sum of lengths of $\pathhs{\ttime}{\black_{\ttime}}$.
Every vertex pays $\frac{\constTokens}{\constThreshold l}$ at most once per level
and the highest level is not greater than $2n$, so the total
amount paid by a vertex over all the turns is bounded by
$\frac{\constTokens}{\constThreshold}\sum_{\lev=1}^{2n}
\frac{1}{\lev}\leqslant\frac{\constTokens}{\constThreshold} \bra{\ln \bra{2n} + 1}$.
Hence, the total amount paid by all vertices is at most $\frac{\constTokens}{\constThreshold} n
(\ln n + 1.7)$.
If we plug in the constants $\constThreshold = (\constDist-1)/(\constDist+1)$ and $\constTokens=2/(1-1/\constDist)$ into above bound, we obtain that
$
\sum_{\ttime \in \mathcal{T}}\length{\pathhs{\ttime}{\black_{\ttime}}} \leqslant
\frac{\constDist\bra{\constDist+1}}{\bra{\constDist -1} ^2} n (2 \ln n +3.4).
$\LNCSVERSION{\qed}\end{proof}

\mythmspace{}

To complete the proof of Lemma~\ref{lem:hard case} we move on to proving the following Claim.

\mythmspace{}

\begin{myclaim}
\label{clm:size-of-path} For a fixed $\ttime \in \range{n}$ let
$\lev_0 = \level{\ttime-1}{\disp{\ttime}}+1$ and
$\lev_1 = \floor{\bra{\level{\ttime-1}{\disp{\ttime}} +
\level{\ttime}{\disp{\ttime}}}/2}$.
For $\constThreshold=\frac{ \constDist-1}{\constDist+1}$ and $\lev \in \set{\lev_0,
\ldots,\lev_1}$ there exist two different vertices $\white_1,\white_2 \in \neighbours{\ttime}{\disp{\ttime}} \cap \Alive_{\ttime}$ such that $\comp{\white_1}{\TreeLevell{\ttime-1/2}{\lev}}$ and $\comp{\white_2}{\TreeLevell{\ttime-1/2}{\lev}}$ are two separate components of $\TreeLevell{\ttime-1/2}{\lev}$ and
$
\size{\comp{\white_i}{\TreeLevell{\ttime-1/2}{\lev}}} \geqslant \constThreshold\lev \text{ for } \ i\in \set{1,2}.
$
\end{myclaim}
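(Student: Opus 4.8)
The plan is to take for $\white_1,\white_2$ the two ``best'' directions out of the dispatching vertex, exhibit a long mini-max path in each of the two subtrees they span, and observe that levels decay slowly along such a path.

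\smallskip
\emph{Setup and choice of $\white_1,\white_2$.}
If $\lev_0>\lev_1$ the statement is vacuous, so assume $\lev_0\leqslant\lev_1$; since $\lev_0=\levlow+1$ and $\lev_1=\floor{(\levlow+\levhi)/2}$ with $\levlow=\level{\ttime-1}{\disp{\ttime}}$ and $\levhi=\level{\ttime}{\disp{\ttime}}=\dist{\ttime}{\disp{\ttime}}$, this forces $\levhi\geqslant\levlow+2$. As $\disp{\ttime}$ is a dispatching vertex it has at least two neighbours alive in turn $\ttime$ (Definition~\ref{def:dispatching}), hence $\disp{\ttime}$ is alive and both $\dist{\ttime}{\disp{\ttime}}$ and $\secondDist{\ttime}{\disp{\ttime}}$ are finite; in particular $\white_1:=\dir{\ttime}{\disp{\ttime}}$ and $\white_2:=\secondDir{\ttime}{\disp{\ttime}}$ are well defined, distinct, white, and lie in $\neighbours{\ttime}{\disp{\ttime}}$. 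The vertex $\white_1$ is the one following $\disp{\ttime}$ on the mini-max path out of $\disp{\ttime}$, so $\white_1\in\Alive_{\ttime}$ by Lemma~\ref{lem:mmPathsAlive}. For $\white_2$ I re-root the connected component of $\disp{\ttime}$ at $\white_1$: a short computation of revenues shows the mini-max path from $\white_1$ visits first $\disp{\ttime}$ (whose revenue rooted at $\white_1$ equals $\secondDist{\ttime}{\disp{\ttime}}\geqslant\levhi$, exceeding the revenue $\leqslant\levhi-2$ of every other child of $\white_1$) and then $\white_2$ (the revenue-minimizing child of $\disp{\ttime}$ once $\white_1$ is discarded), so Lemma~\ref{lem:mmPathsAlive} again yields $\white_2\in\Alive_{\ttime}$.

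\smallskip
\emph{The two components are distinct.}
For every $\lev\geqslant\lev_0=\levlow+1$ we have $\level{\ttime-1/2}{\disp{\ttime}}=\levlow<\lev$, so $\disp{\ttime}\notin\VerticesOf{\TreeLevell{\ttime-1/2}{\lev}}$. Since $\Forest$ is a tree and $\white_1,\white_2$ are both neighbours of $\disp{\ttime}$, every path in $\Forest$ between $\white_1$ and $\white_2$ runs through $\disp{\ttime}$; hence $\comp{\white_1}{\TreeLevell{\ttime-1/2}{\lev}}$ and $\comp{\white_2}{\TreeLevell{\ttime-1/2}{\lev}}$ are different connected components.

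\smallskip
\emph{Lower bound on the sizes.}
Fix $i\in\set{1,2}$ and let $\RootedTreeThree$ be the connected component of $\white_i$ in $\Forest_{\ttime}$ after deleting the edge $\set{\disp{\ttime},\white_i}$, rooted at $\white_i$; then $\disp{\ttime}\notin\VerticesOf{\RootedTreeThree}$ and $\treeDist{\RootedTreeThree}{\white_i}$ equals the revenue of $\white_i$ in the component of $\disp{\ttime}$ rooted at $\disp{\ttime}$. Because $\white_1$ realizes the minimum defining $\dist{\ttime}{\disp{\ttime}}$ and $\white_2$ is the next-smallest such child, $\treeDist{\RootedTreeThree}{\white_1}=\levhi-1$ and $\treeDist{\RootedTreeThree}{\white_2}\geqslant\levhi-1$, both finite. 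Consider $\pathThree_i:=\mmpath{\RootedTreeThree}{\white_i}$; it has $\treeDist{\RootedTreeThree}{\white_i}+1\geqslant\levhi$ vertices, all in $\VerticesOf{\RootedTreeThree}$ and hence all distinct from $\disp{\ttime}$. Two facts drive the estimate. First, $\level{\ttime}{\white_i}\geqslant\levhi-1$, by Lemma~\ref{lem:plusMinusOne} applied at $\disp{\ttime}$ with $\white_i\in\set{\dir{\ttime}{\disp{\ttime}},\secondDir{\ttime}{\disp{\ttime}}}$ and $\level{\ttime}{\disp{\ttime}}=\levhi$. Second, consecutive vertices of $\pathThree_i$ differ in level by at most $1$: at a vertex $\firstVertex$ of $\pathThree_i$ the successor is $\treeDir{\RootedTreeThree}{\firstVertex}$, the extremal child of $\firstVertex$ inside $\RootedTreeThree$, which differs from $\dir{\ttime}{\firstVertex}$ (the extremal neighbour of $\firstVertex$ in the whole component) at most by the $\RootedTreeThree$-parent of $\firstVertex$; a two-case check then gives $\treeDir{\RootedTreeThree}{\firstVertex}\in\set{\dir{\ttime}{\firstVertex},\secondDir{\ttime}{\firstVertex}}$, so Lemma~\ref{lem:plusMinusOne} applies. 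Consequently the first $\levhi-\lev$ vertices of $\pathThree_i$ all have level at least $\lev$ (they exist and the bound is nontrivial because $0\leqslant\lev\leqslant\lev_1\leqslant\levhi-1$ in the non-vacuous case), and since they differ from $\disp{\ttime}$ their $\level{\ttime-1/2}{\cdot}$ coincides with $\level{\ttime}{\cdot}\geqslant\lev$; they thus lie in $\VerticesOf{\TreeLevell{\ttime-1/2}{\lev}}$ and, forming a connected subpath through $\white_i$, all belong to $\comp{\white_i}{\TreeLevell{\ttime-1/2}{\lev}}$. Therefore $\size{\comp{\white_i}{\TreeLevell{\ttime-1/2}{\lev}}}\geqslant\levhi-\lev$ for $i\in\set{1,2}$.

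\smallskip
\emph{Arithmetic, and the main obstacle.}
For $\lev\leqslant\lev_1\leqslant(\levlow+\levhi)/2$ we have $\levhi-\lev\geqslant(\levhi-\levlow)/2$; and in the present (hard) case $\levhi\geqslant\constDist\levlow$, i.e.\ $\levlow\leqslant\levhi/\constDist$, so $(\levhi-\levlow)/(\levhi+\levlow)\geqslant(1-1/\constDist)/(1+1/\constDist)=(\constDist-1)/(\constDist+1)=\constThreshold$, whence $(\levhi-\levlow)/2\geqslant\constThreshold(\levhi+\levlow)/2\geqslant\constThreshold\lev$. Combining, $\size{\comp{\white_i}{\TreeLevell{\ttime-1/2}{\lev}}}\geqslant\levhi-\lev\geqslant\constThreshold\lev$ for $i\in\set{1,2}$, which is exactly the claim. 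I expect the delicate part to be the ``levels drop by at most one along $\pathThree_i$'' step — precisely, the case analysis certifying $\treeDir{\RootedTreeThree}{\firstVertex}\in\set{\dir{\ttime}{\firstVertex},\secondDir{\ttime}{\firstVertex}}$ for every $\firstVertex$ on $\pathThree_i$ (according to whether the $\RootedTreeThree$-parent of $\firstVertex$ does or does not realize $\dir{\ttime}{\firstVertex}$), together with making the re-rooting argument for $\white_2\in\Alive_{\ttime}$ fully rigorous; the rest is bookkeeping with Definitions~\ref{def:minimax}--\ref{def:level}.
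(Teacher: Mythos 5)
Your proof is correct and follows essentially the same approach as the paper's: both take $\white_1=\dir{\ttime}{\disp{\ttime}}$ and $\white_2=\secondDir{\ttime}{\disp{\ttime}}$, both observe that the initial $\levhi-\lev$ vertices of the two mini-max paths out of $\disp{\ttime}$ have level at least $\lev$ via Lemma~\ref{lem:plusMinusOne}, and both finish with the same arithmetic. The only cosmetic difference is that you re-root the relevant subtree at $\white_i$ rather than taking prefixes of $\pathh{\ttime}{\disp{\ttime}}$ and $\secondPath{\ttime}{\disp{\ttime}}$, and you make explicit (as the paper does not) the step that each successive vertex on these paths is $\dir{\ttime}{\cdot}$ or $\secondDir{\ttime}{\cdot}$ of its predecessor so that Lemma~\ref{lem:plusMinusOne} applies.
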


\mythmspace{}

\mythmspace{}

\begin{proof} Fix $\lev \in \set{\lev_0,
\ldots,\lev_1}$.
By definition $\neighbours{\ttime}{\disp{\ttime}} \cap \Alive_{\ttime} \geq 2$. Thus, $\dist{\ttime}{\disp{\ttime}} < \infinity$ and $\secondDist{\ttime}{\disp{\ttime}} < \infinity$. We show that $\white_1 = \dir{\ttime}{\disp{\ttime}}$ and $\white_2=\secondDir{\ttime}{\disp{\ttime}}$ satisfy the desired conditions.

First note that $\white_1,\white_2 \in \VerticesOf{\TreeLevell{\ttime-1/2}{\lev}}$, because $\level{\ttime}{\white_i} \geq \level{\ttime}{\disp{\ttime}}-1 \geq \lev_1$ for $i \in \{1,2 \}$ due to Lemma~\ref{lem:plusMinusOne}. Note also that $\comp{\white_1}{\TreeLevell{\ttime-1/2}{\lev}}$ and $\comp{\white_2}{\TreeLevell{\ttime-1/2}{\lev}}$ are two separate components of $\TreeLevell{\ttime-1/2}{\lev}$ because the only path connecting $\white_1$ and $\white_2$ in $\Forest$ is through $\disp{\ttime}$ and $\disp{\ttime} \notin V(\TreeLevell{\ttime}{\lev})$ because $\level{\ttime-1/2}{\disp{\ttime}}=\level{\ttime-1}{\disp{\ttime}}< \lev_0 \leq \lev$. 

Due to Lemma~\ref{lem:plusMinusOne} the levels of vertices drop by at most one along $\pathh{\ttime}{\disp{\ttime}}$ and $\secondPath{\ttime}{\disp{\ttime}}$. Let $\pathOne_1$ be the prefix of $\pathh{\ttime}{\disp{\ttime}}$ of length $\levhi-\lev$ and $\pathOne_2$ be the prefix of $\secondPath{\ttime}{\disp{\ttime}}$ of length $\levhi-\lev$, where $\levhi=\level{\ttime}{\disp{\ttime}}$. It holds that if $v \in V(\pathOne_i)$ then $\level{\ttime-1/2}{v}=\level{\ttime}{v} \geq \lev$.
Because
$
\lev_0-1=
\level{\ttime-1}{\disp{\ttime}}\leqslant
\level{\ttime}{\disp{\ttime}}/\constDist=
\levhi/\constDist
$
and $\lev \leqslant \lev_1 = \floor{\bra{\lev_0-1+\levhi}/2}$ we have
$
\lev \leqslant \bra{\levhi/ \constDist + \levhi}/2= \bra{1/ \constDist+1} \cdot \levhi /2.
$
This implies
$
\size{\comp{\white}{\TreeLevell{\ttime}{\lev}}} \geqslant \levhi-\lev= \frac{2
\constDist}{\constDist+1} \cdot \frac{1+1/ \constDist}{2} \cdot \levhi-\lev \geqslant \frac{2
\constDist}{\constDist+1} \cdot \lev-\lev= \frac{ \constDist-1}{\constDist+1} \cdot \lev.
$
Setting $\constThreshold=(\constDist-1)/(\constDist+1)$ completes the proof of the claim.
\LNCSVERSION{\qed}\end{proof}

We can now put all the pieces together to prove our main result.

\begin{mytheorem}\label{thm:final}
$\sum_{\ttime \in \range{n}:\dist{\ttime}{\black_{\ttime}}<\infinity} \dist{\ttime}{\black_{\ttime}} \in \bigo{n \log n}.$ 
\end{mytheorem}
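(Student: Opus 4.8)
The plan is to read off the bound by assembling the machinery built up in the previous sections; Theorem~\ref{thm:final} itself carries essentially no new content beyond bookkeeping. First I would rewrite the target sum in terms of path lengths. By Definition~\ref{def:path} and the observation immediately following it, for every turn $\ttime$ with $\dist{\ttime}{\black_{\ttime}} < \infinity$ we have $\dist{\ttime}{\black_{\ttime}} = \length{\pathh{\ttime}{\black_{\ttime}}}$, and Definition~\ref{def:presuf} splits this as $\length{\pathh{\ttime}{\black_{\ttime}}} = \length{\pathhp{\ttime}{\black_{\ttime}}} + \length{\pathhs{\ttime}{\black_{\ttime}}}$. Summing over all turns with $\dist{\ttime}{\black_{\ttime}} < \infinity$ and invoking Observation~\ref{obs:dieOnlyOnce}, the dying prefixes contribute at most $2n$ in total, so it remains to bound $\sum_{\ttime:\dist{\ttime}{\black_{\ttime}}<\infinity}\length{\pathhs{\ttime}{\black_{\ttime}}}$.

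Next I would dispose of the turns in which $\disp{\ttime}$ is undefined. As noted after Definition~\ref{def:dispatching}, in that case $\pathh{\ttime}{\black_{\ttime}}$ contains no life portal, so by Lemma~\ref{lem:dispatching} the whole of $\pathh{\ttime}{\black_{\ttime}}$ dies in turn $\ttime$; hence $\pathhs{\ttime}{\black_{\ttime}}$ is empty and such turns contribute nothing to the remaining sum. This leaves only the turns with $\dist{\ttime}{\black_{\ttime}} < \infinity$ and $\disp{\ttime}$ defined, which I would split according to the dichotomy used to state the two main lemmas: case~(\ref{enu:easy case}), $\level{\ttime}{\disp{\ttime}} < \constDist\,\level{\ttime-1}{\disp{\ttime}}$, and case~(\ref{enu:hard case}), $\level{\ttime}{\disp{\ttime}} \ge \constDist\,\level{\ttime-1}{\disp{\ttime}}$. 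These two cases are exactly the ones bounded by Lemma~\ref{lem:easy case new} (total at most $2\constDist n + \constDist n\log_2 n$) and Lemma~\ref{lem:hard case} (total at most $\frac{\constDist(\constDist+1)}{(\constDist-1)^2}n(2\ln n + 3.4) + n$).

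Finally I would add the three contributions --- the $2n$ from the dying prefixes, the easy-case bound, and the hard-case bound --- and fix any constant $\constDist > 1$ (say $\constDist = 2$). Every resulting term is then $\bigo{n}$ or $\bigo{n\log n}$, so $\sum_{\ttime:\dist{\ttime}{\black_{\ttime}}<\infinity}\dist{\ttime}{\black_{\ttime}} \in \bigo{n\log n}$, which is the claim; combined with Lemma~\ref{lem:pathineq} this also recovers Theorem~\ref{thm:mainthm}.

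Since all the real work is delegated to the earlier lemmas, the only things to get right here are the structural steps: that $\dist{\ttime}{\black_{\ttime}} = \length{\pathh{\ttime}{\black_{\ttime}}}$ in the finite case, that the prefix/suffix decomposition together with Observation~\ref{obs:dieOnlyOnce} applies, that the ``$\disp{\ttime}$ undefined'' turns really are harmless --- this is where I would be most careful, checking that an empty $\pathhs{\ttime}{\black_{\ttime}}$ genuinely follows from Lemma~\ref{lem:dispatching} --- and that the level dichotomy together with ``$\disp{\ttime}$ defined'' partitions all remaining contributing turns so that Lemmas~\ref{lem:easy case new} and~\ref{lem:hard case} between them cover everything. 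I expect no genuine obstacle: the constants $\constThreshold = (\constDist-1)/(\constDist+1)$ and $\constTokens = 2/(1-1/\constDist)$ used inside the hard-case argument are legitimate for any $\constDist > 1$, so no further tuning of $\constDist$ is forced.
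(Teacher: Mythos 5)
Your proposal is correct and follows exactly the same route as the paper's own proof: write each finite $\dist{\ttime}{\black_{\ttime}}$ as $\length{\pathhp{\ttime}{\black_{\ttime}}}+\length{\pathhs{\ttime}{\black_{\ttime}}}$, bound the prefixes by $2n$ via Observation~\ref{obs:dieOnlyOnce}, note that an undefined $\disp{\ttime}$ makes the suffix empty, and then split the remaining turns by the level dichotomy handled by Lemmas~\ref{lem:easy case new} and~\ref{lem:hard case}. No gap; the argument is exactly the bookkeeping the paper performs.
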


\mythmspace{}

\begin{proof}
By definition~\ref{def:presuf} we have $\pathh{\ttime}{\black_{\ttime}}=\pathhp{\ttime}{\black_{\ttime}}\cdot\pathhs{\ttime}{\black_{\ttime}}$. By Observation~\ref{obs:dieOnlyOnce} it holds that $\sum_{\ttime \in \range{n}:\dist{\ttime}{\black_{\ttime}}<\infinity} \length{\pathhp{\ttime}{\black_{\ttime}}} \leq 2n$. If $\disp{\ttime}$ is undefined, then $\pathhs{\ttime}{\black_{\ttime}}$ is empty so its length is $0$. For the cases when $\disp{\ttime}$ is defined and $\level{\ttime}{\disp{\ttime}} < \constDist \level{\ttime-1}{\disp{\ttime}}$ we have $\sum_{\ttime \in \range{n}:\dist{\ttime}{\black_{\ttime}}<\infinity} \length{\pathhs{\ttime}{\black_{\ttime}}} \leq 2 \constDist n+\constDist n \log n$. For the cases when $\disp{\ttime}$ is defined and $\level{\ttime}{\disp{\ttime}} \geq \constDist \level{\ttime-1}{\disp{\ttime}}$ we have $\sum_{\ttime \in \range{n}:\dist{\ttime}{\black_{\ttime}}<\infinity} \length{\pathhs{\ttime}{\black_{\ttime}}} \leq \frac{\constDist(\constDist+1)}{(\constDist-1)^2} n (2 \ln n +3.4) +n$. This gives the theorem statement for any $\constDist > 1$.   
\LNCSVERSION{\qed}\end{proof}

\mythmspace{}
 \bibliographystyle{alpha}

\begin{thebibliography}{GKKV95}

\bibitem[BHR17]{DBLP:journals/corr/BernsteinHR17}
Aaron Bernstein, Jacob Holm, and Eva Rotenberg.
\newblock Online bipartite matching with amortized {$O(\log^2 n)$}
  replacements.
\newblock {\em CoRR}, abs/1707.06063, 2017.

\bibitem[BLSZ14]{DBLP:conf/focs/BosekLSZ14}
Bart\l{}omiej Bosek, Dariusz Leniowski, Piotr Sankowski, and Anna Zych.
\newblock Online bipartite matching in offline time.
\newblock In {\em 55th {IEEE} Annual Symposium on Foundations of Computer
  Science, {FOCS} 2014, Philadelphia, PA, USA, October 18-21, 2014}, pages
  384--393, 2014.

\bibitem[BLSZ15]{DBLP:conf/waoa/BosekLSZ15}
Bart\l{}omiej Bosek, Dariusz Leniowski, Piotr Sankowski, and Anna Zych.
\newblock Shortest augmenting paths for online matchings on trees.
\newblock In {\em Approximation and Online Algorithms - 13th International
  Workshop, {WAOA} 2015, Patras, Greece, September 17-18, 2015. Revised
  Selected Papers}, pages 59--71, 2015.

\bibitem[BM08]{DBLP:journals/sigact/BirnbaumM08}
Benjamin~E. Birnbaum and Claire Mathieu.
\newblock On-line bipartite matching made simple.
\newblock {\em {SIGACT} News}, 39(1):80--87, 2008.

\bibitem[BS15]{DBLP:conf/icalp/BernsteinS15}
Aaron Bernstein and Cliff Stein.
\newblock Fully dynamic matching in bipartite graphs.
\newblock In {\em Automata, Languages, and Programming - 42nd International
  Colloquium, {ICALP} 2015, Kyoto, Japan, July 6-10, 2015, Proceedings, Part
  {I}}, pages 167--179, 2015.

\bibitem[CDKL09]{DBLP:conf/infocom/ChaudhuriDKL09}
Kamalika Chaudhuri, Constantinos Daskalakis, Robert~D. Kleinberg, and Henry
  Lin.
\newblock Online bipartite perfect matching with augmentations.
\newblock In {\em {INFOCOM} 2009. 28th {IEEE} International Conference on
  Computer Communications, Joint Conference of the {IEEE} Computer and
  Communications Societies, 19-25 April 2009, Rio de Janeiro, Brazil}, pages
  1044--1052, 2009.

\bibitem[DJK13]{DBLP:conf/soda/DevanurJK13}
Nikhil~R. Devanur, Kamal Jain, and Robert~D. Kleinberg.
\newblock Randomized primal-dual analysis of {RANKING} for online bipartite
  matching.
\newblock In {\em Proceedings of the Twenty-Fourth Annual {ACM-SIAM} Symposium
  on Discrete Algorithms, {SODA} 2013, New Orleans, Louisiana, USA, January
  6-8, 2013}, pages 101--107, 2013.

\bibitem[EK72]{DBLP:journals/jacm/EdmondsK72}
Jack Edmonds and Richard~M. Karp.
\newblock Theoretical improvements in algorithmic efficiency for network flow
  problems.
\newblock {\em J. {ACM}}, 19(2):248--264, 1972.

\bibitem[GKKV95]{DBLP:conf/wads/GroveKKV95}
Edward~F. Grove, Ming{-}Yang Kao, P.~Krishnan, and Jeffrey~Scott Vitter.
\newblock Online perfect matching and mobile computing.
\newblock In {\em Algorithms and Data Structures, 4th International Workshop,
  {WADS} '95, Kingston, Ontario, Canada, August 16-18, 1995, Proceedings},
  pages 194--205, 1995.

\bibitem[GKS14]{DBLP:conf/soda/GuptaKS14}
Anupam Gupta, Amit Kumar, and Cliff Stein.
\newblock Maintaining assignments online: Matching, scheduling, and flows.
\newblock In {\em Proceedings of the Twenty-Fifth Annual {ACM-SIAM} Symposium
  on Discrete Algorithms, {SODA} 2014, Portland, Oregon, USA, January 5-7,
  2014}, pages 468--479, 2014.

\bibitem[GP13]{DBLP:conf/focs/GuptaP13}
Manoj Gupta and Richard Peng.
\newblock Fully dynamic {(1+} e)-approximate matchings.
\newblock In {\em 54th Annual {IEEE} Symposium on Foundations of Computer
  Science, {FOCS} 2013, 26-29 October, 2013, Berkeley, CA, {USA}}, pages
  548--557, 2013.

\bibitem[KVV90]{DBLP:conf/stoc/KarpVV90}
Richard~M. Karp, Umesh~V. Vazirani, and Vijay~V. Vazirani.
\newblock An optimal algorithm for on-line bipartite matching.
\newblock In {\em Proceedings of the 22nd Annual {ACM} Symposium on Theory of
  Computing, May 13-17, 1990, Baltimore, Maryland, {USA}}, pages 352--358,
  1990.

\bibitem[NS13]{DBLP:conf/stoc/NeimanS13}
Ofer Neiman and Shay Solomon.
\newblock Simple deterministic algorithms for fully dynamic maximal matching.
\newblock In {\em Symposium on Theory of Computing Conference, STOC'13, Palo
  Alto, CA, USA, June 1-4, 2013}, pages 745--754, 2013.

\bibitem[San07]{DBLP:conf/soda/Sankowski07}
Piotr Sankowski.
\newblock Faster dynamic matchings and vertex connectivity.
\newblock In {\em Proceedings of the Eighteenth Annual {ACM-SIAM} Symposium on
  Discrete Algorithms, {SODA} 2007, New Orleans, Louisiana, USA, January 7-9,
  2007}, pages 118--126, 2007.

\end{thebibliography}

\clearpage

\appendix

\section{The mini-max game}\label{app:minimax}

\obsmonotonic*

\begin{proof}
We only prove monotonicity of $\dist{\ttime}{v}$ here, the proof of monotonicity of $\secondDist{\ttime}{v}$ is analogous.
Fix $\ttime \in \range{n} $. We claim that $\dist{\ttime-1}{v} \leq \dist{\ttime}{v}$ provided that $v \in V(\Forest_{\ttime-1})$.
Let $\RootedTreeOne$ be a component of $v$ in $\Forest_{\ttime-1}$ rooted at $v$ and $\RootedTreeTwo$ be a component of $v$ in $\Forest_{\ttime}$ rooted at $v$. Note that $\RootedTreeTwo$ is a subtree of $\RootedTreeOne$.
By definition it holds that $\dist{\ttime-1}{v}=\treeDist{\RootedTreeOne}{v}$ and $\dist{\ttime}{v}=\treeDist{\RootedTreeTwo}{v}$.
We prove our claim by a bottom-up induction on $\RootedTreeOne$. The inductive hypothesis we prove is that $\treeDist{\RootedTreeOne}{u} \leq \treeDist{\RootedTreeTwo}{u}$ for $u \in V(\RootedTreeOne)$. 

First assume that $u$ is black. The neighborhood of $u$ does not change from $t-1$ to $t$, since after $u$ is presented, it never changes its neighborhood. Therefore $\children{\RootedTreeOne}{u} = \children{\RootedTreeTwo}{u}$. If $u$ is a leaf of $\RootedTreeOne$, then  $\treeDist{\RootedTreeOne}{u} =
\infinity = \treeDist{\RootedTreeTwo}{u}$.
If $u$ has children then by induction hypothesis on children it follows that
$$
\treeDist{\RootedTreeOne}{u} = \min_{\white \in \children{\RootedTreeOne}{u}} \treeDist{\RootedTreeOne}{\white}+1 \leqslant^{\text{ind}} \min_{\white \in \children{\RootedTreeTwo}{u}}
\treeDist{\RootedTreeTwo}{\white}+1
=\treeDist{\RootedTreeTwo}{u}.
$$

Now assume that $u$ is white. The neighborhood of $u$ can change from $t-1$ to $t$, but it can only increase, meaning that $\children{\RootedTreeOne}{u} \subseteq \children{\RootedTreeTwo}{u}$.
If $u$ is a leaf in $\RootedTreeOne$, then by the fact that
$\treeDist{\RootedTreeTwo}{u}$ is non-negative it holds that
$\treeDist{\RootedTreeOne}{u} = 0 \leqslant
\treeDist{\RootedTreeTwo}{u}$.
If $u$ has children in $\RootedTreeOne$, then by induction hypothesis we infer that
\begin{multline*}
\treeDist{\RootedTreeOne}{u} = 
\max_{\black \in \children{\RootedTreeOne}{u}} \treeDist{\RootedTreeOne}{\black}+1\leqslant^{\text{ind}}
\max_{\black \in \children{\RootedTreeOne}{u}} \treeDist{\RootedTreeTwo}{\black}+1
\leqslant\\
\leqslant \max_{\black \in \children{\RootedTreeTwo}{u}} \treeDist{\RootedTreeTwo}{\black}+1=\treeDist{\RootedTreeTwo}{u}.
\end{multline*}
\end{proof}

\lempathineq*

\begin{proof}
Fix $\ttime_0\in\range{n} $ and $\ttime \in \set{\ttime_0, \ldots,n }$.
Let $\RootedTreeOne$ be the connected component of $\black_{\ttime_0}$ in $\Forest_t$ rooted at $\black_{\ttime_0}$. 
We prove our lemma by a bottom-up induction on $\RootedTreeOne$.

Before we state inductive hypothesis, we need to introduce a few more definitions.
We denote as $\pathTwo_{\RootedTreeOne}\bra{\firstVertex}$ the shortest alternating path from $\firstVertex$ to a free white vertex in the subtree of $\RootedTreeOne$ rooted at $\firstVertex$.
Here, an augmenting path is taken with respect to $M$ and a white vertex is considered free if it is free in the subtree.
If there is no such alternating path we write $\pathTwo_{\RootedTreeOne}\bra{\firstVertex}=\bot$ for which length $\length{\bot}=\infinity$.
It is not hard to see that such a path is either a single free white vertex, or it starts with an unmatched edge if $\firstVertex$ is black and with a matched edge if $\firstVertex$ is white.
When we say that a vertex is unmatched from below, we mean that all its incident edges leading to its children in $\RootedTreeOne$ are free from matching.
Similarly, we say that a vertex is unmatched from above, if the edge leading to its parent in $\RootedTreeOne$ is free from matching.

Our inductive hypothesis states the following. If $\firstVertex$ is 
\begin{enumerate}
\item[(a)] black and unmatched from below or
\item[(b)] white and unmatched from above
\end{enumerate}
then $\length{\pathTwo_{\RootedTreeOne}\bra{\firstVertex}} \leqslant \treeDist{\RootedTreeOne}{\firstVertex}$.

Let us first assume that $\firstVertex$ is black.
If $\firstVertex$ is a leaf in $\RootedTreeOne$, then $\pathTwo_{\RootedTreeOne}\bra{\firstVertex}=\bot$, so $\length{\pathTwo_{\RootedTreeOne}\bra{\firstVertex}} = \infinity = \treeDist{\RootedTreeOne}{\firstVertex}$, so the hypothesis is fulfilled. Let us consider the case when $\firstVertex$ is not a leaf. We consider two sub-cases of this case.
First assume that $\pathTwo_{\RootedTreeOne}\bra{\firstVertex}=\bot$. This implies that $\pathTwo_{\RootedTreeOne}(\white)=\bot$ for all $\white \in \children{\RootedTreeOne}{\firstVertex}$.
By induction it holds that
$$
\treeDist{\RootedTreeOne}{\firstVertex} = \min_{\white \in \children{\RootedTreeOne}{\firstVertex}} \treeDist{\RootedTreeOne}{\white} +1 =^{\text{ind}} \infinity+1=\infinity.
$$ 
Now assume that $\pathTwo_{\RootedTreeOne}\bra{\firstVertex} \neq \bot$.
Let $\white$ be a successor of $\firstVertex$ on $\pathTwo_{\RootedTreeOne}\bra{\firstVertex}$ and let $\white'=\treeDir{\RootedTreeOne}{\firstVertex}$.
Then by the power of induction we infer that
$$
\treeDist{\RootedTreeOne}{\firstVertex} = \treeDist{\RootedTreeOne}{\white'}+1 \geqslant^{\text{ind}} \length{\pathTwo_{\RootedTreeOne}\bra{\white'}}+1 \geqslant \length{\pathTwo_{\RootedTreeOne}\bra{\white}}+1=
\length{\pathTwo_{\RootedTreeOne}\bra{\firstVertex}}.
$$ 

Let us now assume that $\firstVertex$ is white.
If $\firstVertex$ is a leaf, then $\length{\pathTwo_{\RootedTreeOne}\bra{\firstVertex}}=0=\treeDist{\RootedTreeOne}{\firstVertex}$ so the hypothesis is fulfilled.
If $\firstVertex$ is not a leaf, we again consider two cases.
First assume that $\pathTwo_{\RootedTreeOne}\bra{\firstVertex}=\bot$.
In such case there has to be a vertex $\black' \in \children{\RootedTreeOne}{\firstVertex}$ matched to $\firstVertex$, otherwise $\firstVertex$ is free and the path $\pathTwo_{\RootedTreeOne}\bra{\firstVertex}$ exists.
In such case $\pathTwo_{\RootedTreeOne}\bra{\black'}=\bot$ otherwise there would exist augmenting path also from $\firstVertex$.
Then by inductive assumption $\treeDist{\RootedTreeOne}{\black'} = \infinity$.
From that it follows that
$$
\treeDist{\RootedTreeOne}{\firstVertex} = \max_{\black \in \children{\RootedTreeOne}{\firstVertex}} \treeDist{\RootedTreeOne}{\black}+1 \geqslant\treeDist{\RootedTreeOne}{\black'}+1 =^{\text{ind}} \infinity+1= \infinity.
$$
Now assume that $\pathTwo_{\RootedTreeOne}\bra{\firstVertex} \neq \bot$.
In this case either $\firstVertex$ is free in its subtree and $\length{\pathTwo_{\RootedTreeOne}\bra{\firstVertex}} = 0 = \treeDist{\RootedTreeOne}{\firstVertex}$, or there is a vertex $\black' \in \children{\RootedTreeOne}{\firstVertex}$ matched to $\firstVertex$.
But then by the power of induction we infer that
$$
\treeDist{\RootedTreeOne}{\firstVertex} = \max_{\black \in \children{\RootedTreeOne}{\firstVertex}} \treeDist{\RootedTreeOne}{\black}  +1
\geqslant \treeDist{\RootedTreeOne}{\black'} +1
\geqslant^{\text{ind}} \length{\pathTwo_{\RootedTreeOne}\bra{\black'}}+1
=\length{\pathTwo_{\RootedTreeOne}\bra{\firstVertex}}.
$$

This proves the hypothesis, in particular this proves that $\length{\pathTwo_{\RootedTreeOne}\bra{\black_{\ttime_0}}} \leqslant \treeDist{\RootedTreeOne}{\black_{\ttime_0}}=\dist{\ttime}{\black_{\ttime_0}}$. Note that if $\dist{\ttime}{\black_{\ttime_0}} < \infinity$, then $\length{\pathTwo_{\RootedTreeOne}\bra{\black_{\ttime_0}}} < \infinity$, and that by definition implies that $\pathTwo_t = \pathTwo_{\RootedTreeOne}\bra{\black_{\ttime_0}}$ exists and $\length{\pathTwo_{\ttime}}=\length{\pathTwo_{\RootedTreeOne}\bra{\black_{\ttime_0}}} \leqslant \dist{\ttime}{\black_{\ttime_0}}$.
\end{proof}

\clearpage

\section{Dead vertices}\label{app:dead}

\lemdistBlinf*

\begin{proof}
\item
\paragraph{"$\Leftarrow$":} 
Assume that $\dist{\ttime}{\black_{\ttime_0}}<\infinity$. Then, by Lemma~\ref{lem:pathineq}, an augmenting path exists from $\black_{\ttime_0}$ for a maximum matching $M$ where $\black_{\ttime_0}$ is free. By Hall's theorem this implies that $\black_{\ttime_0}$ does not break \hc{}.
\paragraph{"$\Rightarrow$":}
Let $\RootedTreeOne$ be the connected component of $\black_{\ttime_0}$ in $\Forest_{\ttime}$ rooted at $\black_{\ttime_0}$. Assume $\dist{\ttime}{\black_{\ttime_0}}=\infinity$. We are going to define a subtree $\RootedTreeTwo$ of $\RootedTreeOne$ such that $V(\RootedTreeTwo) \cap \Black_{\ttime}$ is precisely the set $X$ that certifies that $\black_{\ttime_0}$ breaks \hc{} in time $\ttime$. 

For that we first define a forest $\RootedForest$ that is a subgraph of $\RootedTreeOne$ and pick one component of $\RootedForest$ to be $\RootedTreeTwo$. Let $\RootedForest=(\White_{\Forest} \cup \Black_{\Forest},E_{\Forest})$, where $\White_{\Forest}=\White \cap V(\RootedTreeOne)$ are the white vertices of $\RootedTreeOne$, $\Black_{\Forest}=\Black_{\ttime} \cap V(\RootedTreeOne)$ are the black vertices of $\RootedTreeOne$ and $E_{\Forest}=E_1 \cup E_2$, where
\begin{align*}
E_1&=\set{ (\parent{\RootedTreeOne}{\white},\white): \white \in \White_{\Forest} }, \\
E_2&=\set{ (\white,\treeDir{\RootedTreeOne}{\white}): \white \in \White_{\Forest} \text{ and } \treeDir{\RootedTreeOne}{\white} \text{ exists} }.
\end{align*}
Next we let $\RootedTreeTwo$ be the connected component of $\black_{\ttime_0}$ in $\RootedForest$ rooted at $\black_{\ttime_0}$. The construction of $\RootedTreeTwo$ is shown in Figure~\ref{fig:tree_e1_e2}.
\begin{figure}[htbp]\fontsize{10}{10}\selectfont
\centering \def\svgscale{0.9} \resizebox{\textwidth}{!}{
\executeiffilenewer{tree_e1_e2.svg}{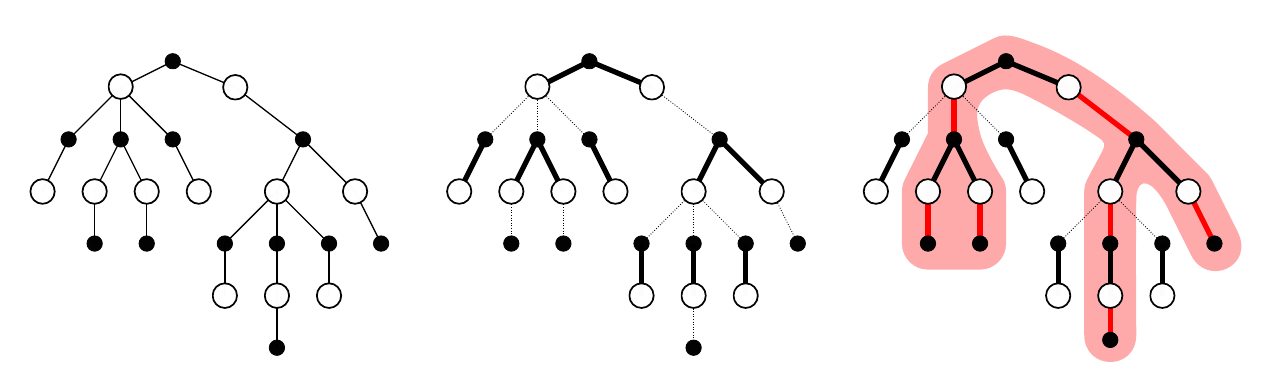} {inkscape -z -C --file=tree_e1_e2.svg --export-pdf=tree_e1_e2.pdf
--export-latex}
\begingroup
  \makeatletter
  \providecommand\color[2][]{
    \errmessage{(Inkscape) Color is used for the text in Inkscape, but the package 'color.sty' is not loaded}
    \renewcommand\color[2][]{}
  }
  \providecommand\transparent[1]{
    \errmessage{(Inkscape) Transparency is used (non-zero) for the text in Inkscape, but the package 'transparent.sty' is not loaded}
    \renewcommand\transparent[1]{}
  }
  \providecommand\rotatebox[2]{#2}
  \ifx\svgwidth\undefined
    \setlength{\unitlength}{366.03279174bp}
    \ifx\svgscale\undefined
      \relax
    \else
      \setlength{\unitlength}{\unitlength * \real{\svgscale}}
    \fi
  \else
    \setlength{\unitlength}{\svgwidth}
  \fi
  \global\let\svgwidth\undefined
  \global\let\svgscale\undefined
  \makeatother
  \begin{picture}(1,0.30806236)
    \put(0,0){\includegraphics[width=\unitlength,page=1]{tree_e1_e2.pdf}}
    \put(0.12767437,0.0302336){\color[rgb]{0,0,0}\makebox(0,0)[lb]{\smash{$\RootedTreeOne$}}}
    \put(0.45551404,0.0302336){\color[rgb]{0,0,0}\makebox(0,0)[lb]{\smash{$E_1$}}}
    \put(0.78335289,0.0302336){\color[rgb]{0,0,0}\makebox(0,0)[lb]{\smash{$E_2$}}}
    \put(0.1322848,0.27201539){\color[rgb]{0,0,0}\makebox(0,0)[lb]{\smash{$\black_{t_0}$}}}
  \end{picture}
\endgroup
  }
\caption{The construction of rooted tree $\RootedTreeTwo$ in the proof of Lemma~\ref{lem:distBlinf}.}
\label{fig:tree_e1_e2}
\end{figure}
Before we proceed, we prove a helpful observation that $\treeDist{\RootedTreeTwo}{u} =\infinity$ holds for all $u \in V(\RootedTreeTwo)$. To that end we assume for the sake of contradiction that there is $u \in V(\RootedTreeTwo)$ such that $\treeDist{\RootedTreeTwo}{u}<\infinity$ and we pick the closest to the root such vertex. Then $\treeDist{\RootedTreeTwo}{\parent{\RootedTreeTwo}{u}} = \infinity$. We consider two cases. If $u$ is white then $v=\parent{\RootedTreeTwo}{u}$ is black and thus
$$
\infinity = \treeDist{\RootedTreeTwo}{v} = \min_{\white \in \children{\RootedTreeTwo}{v}}\treeDist{\RootedTreeTwo}{\white}+1\leqslant \treeDist{\RootedTreeTwo}{u} + 1 <\infinity.
$$
This gives a contradiction. If, on the other hand, $u$ is black then $v=\parent{\RootedTreeTwo}{u}$ is white and hence, due to the way $\RootedTreeTwo$ is constructed, $v$ has only one child $u$ in $\RootedTreeTwo$. Thus
$$
 \infinity = \treeDist{\RootedTreeTwo}{v} = \max_{\black \in \children{\RootedTreeTwo}{v}}\treeDist{\RootedTreeTwo}{\black}+1= \treeDist{\RootedTreeTwo}{u} + 1 <\infinity.
$$
This also gives a contradiction.

Let us denote white vertices of $\RootedTreeTwo$ as $\White_{\TreeTwo}=V(\RootedTreeTwo) \cap \White$ and black vertices of $\RootedTreeTwo$ as $\Black_{\TreeTwo}=V(\RootedTreeTwo) \cap \Black_{\ttime}$.
Now let $X=\Black_{\TreeTwo}$. We prove that conditions (1)-(3) of Definition~\ref{def:breakHC} hold for $X$.
Condition (1) trivially holds.
Let us consider condition~(2). Since $\treeDist{\RootedTreeTwo}{\white} =\infinity$ for all $\white \in \White_{\TreeTwo}$, every $\white \in \White_{\TreeTwo}$ has a child in $\RootedTreeTwo$ and by the way $\RootedTreeTwo$ is constructed this child is unique. On the other hand each black vertex $\black \in \Black_{\TreeTwo} \setminus \{ \black_{\ttime_0} \}$ has a unique parent. This gives a one-to-one map $f:\Black_{\TreeTwo} \setminus \{ \black_{\ttime_0} \} \rightarrow \White_{\TreeTwo}$ (such that $f\bra{\black}:=\parent{\RootedTreeTwo}{\black}$) and implies $\abs{\White_{\TreeTwo}}=\abs{\Black_{\TreeTwo} \setminus \{ \black_{\ttime_0} \}}$. 

We move on to proving condition (3). Let $X' \varsubsetneq X$.
If $\black_{\ttime_0}\notin X'$ then $\abs{X'}=\abs{f\bra{X'}}\leqslant\abs{\neighbours{\ttime}{X'}}$.
So, we can assume that $\black_{\ttime_0}\in X'$.
Then there exists $\black' \in X \setminus X'$ such that $\parent{\RootedTreeTwo}{\parent{\RootedTreeTwo}{\black'}}\in X'$.
Thus
$$
\abs{X'}=\abs{X' \setminus \set{\black_{\ttime_0}}}+1=\abs{f\bra{X'\setminus\set{\black_{\ttime_0}}}}+1 = \abs{f\bra{X'\setminus\set{\black_{\ttime_0}}}\cup\set{\parent{\RootedTreeTwo}{\black'}}} \leqslant         \abs{\neighbours{\ttime}{X'}}.
$$
This proves the minimality of $X$ and ends the proof of the lemma.
\end{proof}

\lemmmPathsAlive*

\begin{proof}
Observe that
\begin{equation}
\label{eq:minimaxSecond}
\treeDist{\RootedTreeOne}{\black'}\leqslant\secondDist{\ttime}{\black'}\quad\text{for} \ \black'\in \Black_\ttime.
\end{equation}
Let $\mmpath{\RootedTreeOne}{\firstVertex}=\bra{\firstVertex^{\bra{1}},\firstVertex^{\bra{2}},\ldots,\firstVertex^{\bra{k}}}$.
We prove that $\firstVertex^{\bra{i}}$ is alive by induction on $i=1,2,\ldots,k$.
Vertex $\firstVertex^{\bra{1}}=\firstVertex$ is alive by the assumptions of the lemma.
We now assume that $\firstVertex^{\bra{i}}$ is alive and aim to prove that $\firstVertex^{\bra{i
+1}}$ is alive too.
If $\firstVertex^{\bra{i}}\in\White$ then by Observation~\ref{obs:aliveness} we know that all neighbours of $\firstVertex^{\bra{i}}$ are alive.
In particular $\firstVertex^{\bra{i+1}}\in\Alive_{\ttime}$.
We are left with the case when $\black:=\firstVertex^{\bra{i}}\in\Black_{\ttime}$ and  $\white:=\firstVertex^{\bra{i+1}}=\treeDir{\RootedTreeOne}{\black}\in\White$.
We aim to prove that $\white$ is alive, i.e. $\dist{\ttime}{\white}<\infinity$.
Let $\RootedTreeTwo$ be the connected component of $\firstVertex$ in $\Forest_{\ttime}$ rooted in $\white$. For the reference see Figure~\ref{pic:easy case 2}.
\begin{figure}[htbp]\fontsize{10}{10}\selectfont
\centering \def\svgscale{0.9} \resizebox{\textwidth}{!}{
\executeiffilenewer{easy_case_2.svg}{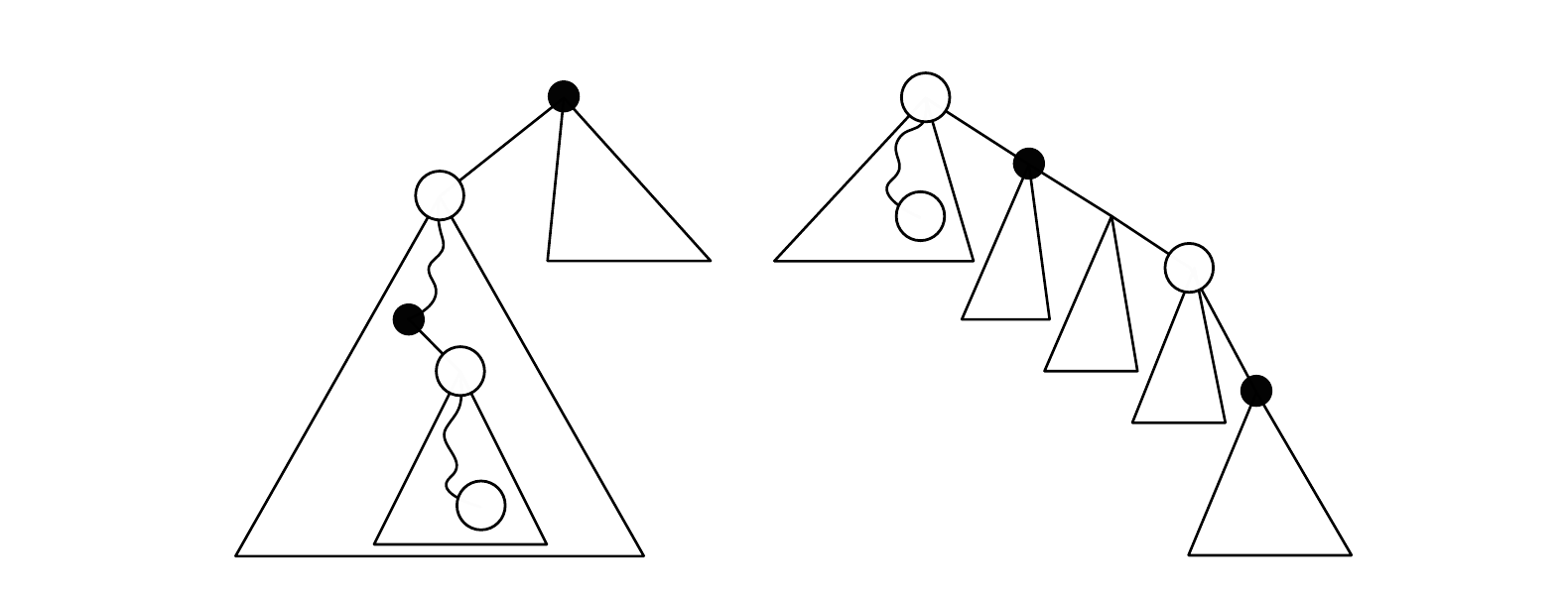} {inkscape -z -C --file=easy_case_2.svg --export-pdf=easy_case_2.pdf
--export-latex}
\begingroup
  \makeatletter
  \providecommand\color[2][]{
    \errmessage{(Inkscape) Color is used for the text in Inkscape, but the package 'color.sty' is not loaded}
    \renewcommand\color[2][]{}
  }
  \providecommand\transparent[1]{
    \errmessage{(Inkscape) Transparency is used (non-zero) for the text in Inkscape, but the package 'transparent.sty' is not loaded}
    \renewcommand\transparent[1]{}
  }
  \providecommand\rotatebox[2]{#2}
  \ifx\svgwidth\undefined
    \setlength{\unitlength}{454.97710941bp}
    \ifx\svgscale\undefined
      \relax
    \else
      \setlength{\unitlength}{\unitlength * \real{\svgscale}}
    \fi
  \else
    \setlength{\unitlength}{\svgwidth}
  \fi
  \global\let\svgwidth\undefined
  \global\let\svgscale\undefined
  \makeatother
  \begin{picture}(1,0.38077735)
    \put(0,0){\includegraphics[width=\unitlength,page=1]{easy_case_2.pdf}}
    \put(0.35372081,0.34146813){\color[rgb]{0,0,0}\makebox(0,0)[lb]{\smash{$r$}}}
    \put(0.13612657,0.15354627){\color[rgb]{0,0,0}\makebox(0,0)[lb]{\smash{$\RootedTreeOne$:}}}
    \put(0.37350128,0.23267149){\color[rgb]{0,0,0}\makebox(0,0)[lb]{\smash{$B$}}}
    \put(0.25151734,0.2738828){\color[rgb]{0,0,0}\makebox(0,0)[lb]{\smash{$v$}}}
    \put(0.31745478,0.13706262){\color[rgb]{0,0,0}\makebox(0,0)[lb]{\smash{$w$}}}
    \put(0.25151686,0.04145233){\color[rgb]{0,0,0}\makebox(0,0)[lb]{\smash{$A$}}}
    \put(0.42625143,0.0546401){\color[rgb]{0,0,0}\makebox(0,0)[lb]{\smash{$\mmpath{\RootedTreeOne}{v}$}}}
    \put(0.58450139,0.34146813){\color[rgb]{0,0,0}\makebox(0,0)[lb]{\smash{$w$}}}
    \put(0.52515712,0.22607737){\color[rgb]{0,0,0}\makebox(0,0)[lb]{\smash{$A$}}}
    \put(0.4790011,0.15354627){\color[rgb]{0,0,0}\makebox(0,0)[lb]{\smash{$\RootedTreeTwo$:}}}
    \put(0.81198639,0.14695262){\color[rgb]{0,0,0}\makebox(0,0)[lb]{\smash{$r$}}}
    \put(0.79879776,0.04474963){\color[rgb]{0,0,0}\makebox(0,0)[lb]{\smash{$B$}}}
    \put(0.77901624,0.20629689){\color[rgb]{0,0,0}\makebox(0,0)[lb]{\smash{$v$}}}
    \put(0.23833004,0.1436558){\color[rgb]{0,0,0}\makebox(0,0)[lb]{\smash{$b$}}}
    \put(0.67351709,0.27882751){\color[rgb]{0,0,0}\makebox(0,0)[lb]{\smash{$b$}}}
    \put(0.46911015,0.0117809){\color[rgb]{0,0,0}\makebox(0,0)[lb]{\smash{$\Forest_{\ttime-1}$}}}
    \put(0,0){\includegraphics[width=\unitlength,page=2]{easy_case_2.pdf}}
  \end{picture}
\endgroup
  }
\caption{Aliveness of $\mmpath{\ttime}{\firstVertex}$}
\label{pic:easy case 2}
\end{figure}

Then
$$
\dist{\ttime}{\white}=\treeDist{\RootedTreeTwo}{\white}=\max_{\black'\in\children{\RootedTreeTwo}{\white}}\treeDist{\RootedTreeTwo}{\black'}.
$$
So it suffices to prove that $\treeDist{\RootedTreeTwo}{\black'}<\infinity$ for any child $\black'$ of $\white$ in $\RootedTreeTwo$. 
If $\black'=\black$ then
$$
\treeDist{\RootedTreeTwo}{\black}\leqslant^{\text{by(\ref{eq:minimaxSecond})}}
\secondDist{\ttime}{\black}<
\infinity
$$
where the last inequality holds by aliveness of $\black$.
If $\black'\neq\black$ then the subtree of $\black'$ is identical for $\RootedTreeOne$ and $\RootedTreeTwo$ and as a consequence
\begin{multline*}
\treeDist{\RootedTreeTwo}{\black'}=
\treeDist{\RootedTreeOne}{\black'}\leqslant
\max_{\black''\in\children{\RootedTreeOne}{\white}}\treeDist{\RootedTreeOne}{\black''}=
\\=
\treeDist{\RootedTreeOne}{\white}-1=
\treeDist{\RootedTreeOne}{\black}-2\leqslant^{\text{by(\ref{eq:minimaxSecond})}}
\secondDist{\ttime}{\black}-2<
\infinity.
\end{multline*}
\end{proof}

\lemaliveneighbour*

\begin{proof} Let $\RootedTreeOne$ be the component of $\black_{\ttime}$ in $\Forest_{\ttime}$ rooted at $\black_{\ttime}$.
By Lemma~\ref{lem:distBlinf} we have that
$$
\infinity> \dist{\ttime}{\black_{\ttime}} = \min_{\white \in
\children{\RootedTreeOne}{\black_{\ttime}}} \treeDist{\RootedTreeOne}{\white} + 1 =
\min_{\white \in
\children{\RootedTreeOne}{\black_{\ttime}}} \dist{\ttime-1}{\white} + 1.
$$
Hence, the new vertex $\black_{\ttime}$ has a white neighbour $\white'$ such that
$\dist{\ttime-1}{\white'} <\infinity$.
By definition $\white'$ is alive in turn $\ttime-1$.
\end{proof}

\lemifatleasttwoalive*

\begin{proof}
We first prove that no vertex dies in turn $\ttime$.
Let $u \in \Black_{\ttime-1} \cup \White$ for some $\ttime \in \range{n} $ and assume that $\black_{\ttime}$ is as stated in the lemma. 
Let $\RootedTreeOne$ be the connected component of $u$ in $\Forest_{\ttime-1}$ rooted at $u$ and let $\RootedTreeTwo$ be the component of $u$ in $\Forest_{\ttime}$ rooted at $u$. If $u$ is not connected to $\black_{\ttime}$ in $\Forest_{\ttime}$, then by definition $\secondDist{\ttime-1}{u}=\secondDist{\ttime}{u}$ since $\RootedTreeOne=\RootedTreeTwo$. Hence, $u$ does not die in turn $\ttime$.

Otherwise $\neighbours{\ttime}{\black_{\ttime}}$ contains a vertex of $\RootedTreeOne$, let it be $\white$. 
For this case, to prove that $u$ does not die in turn $\ttime$, we show the following:
\begin{align*}
\dist{\ttime-1}{u} < \infinity &\Rightarrow \dist{\ttime}{u} < \infinity \text{ and}\\
\secondDist{\ttime-1}{u} < \infinity &\Rightarrow \secondDist{\ttime}{u} < \infinity.
\end{align*}

Let $\Forest_{\ttime}- \set{\white, \black_{\ttime} }$ be the graph obtained by removing edge $\set{\white, \black_{\ttime}}$ from $\Forest_{\ttime}$. Let $\RootedTreeThree$ be the connected component of $\black_{\ttime}$ in $\Forest_{\ttime} - \set{\white, \black_{\ttime} }$, rooted at $\black_{\ttime}$. The only difference between $\RootedTreeOne$ and $\RootedTreeTwo$ is that $\black_{\ttime}$ becomes a child of $\white$ in $\RootedTreeTwo$ and $\RootedTreeThree$ is attached to $\RootedTreeOne$ via edge $(\white,\black_{\ttime})$. This is illustrated in Figure~\ref{fig:ststr12} to the left.
Note that the only vertices who change their mini-max revenue are on the path $\pathOne_{u \white}$ from $u$ to $\white$ in $\RootedTreeOne$. Also, in the proof of Observation~\ref{obs:monotonic} we proved that $\treeDist{\RootedTreeOne}{v} \leq \treeDist{\RootedTreeTwo}{v}$ for $v \in V(\RootedTreeOne)$.
These two observations imply that it suffices to show that $\treeDist{\RootedTreeOne}{\white} < \infinity \Rightarrow \treeDist{\RootedTreeTwo}{\white} < \infinity$. 
Due to the assumption of the lemma, there exists $\white' \in \children{\RootedTreeTwo}{\black_{\ttime}}$ such that $\treeDist{\RootedTreeTwo}{\white'}=\dist{\ttime-1}{\white'} < \infinity$. Thus
$$
\treeDist{\RootedTreeTwo}{\black_{\ttime}}=\min_{\white'' \in \children{\RootedTreeTwo}{\black_{\ttime}}}\treeDist{\RootedTreeTwo}{\white''} + 1 \leqslant \treeDist{\RootedTreeTwo}{\white'}+1< \infinity.
$$
For the reference see Figure~\ref{fig:ststr12} to the right.
\begin{figure}[htbp]\fontsize{10}{10}\selectfont
\centering \def\svgscale{0.9} \resizebox{\textwidth}{!}{
\executeiffilenewer{ststr12.svg}{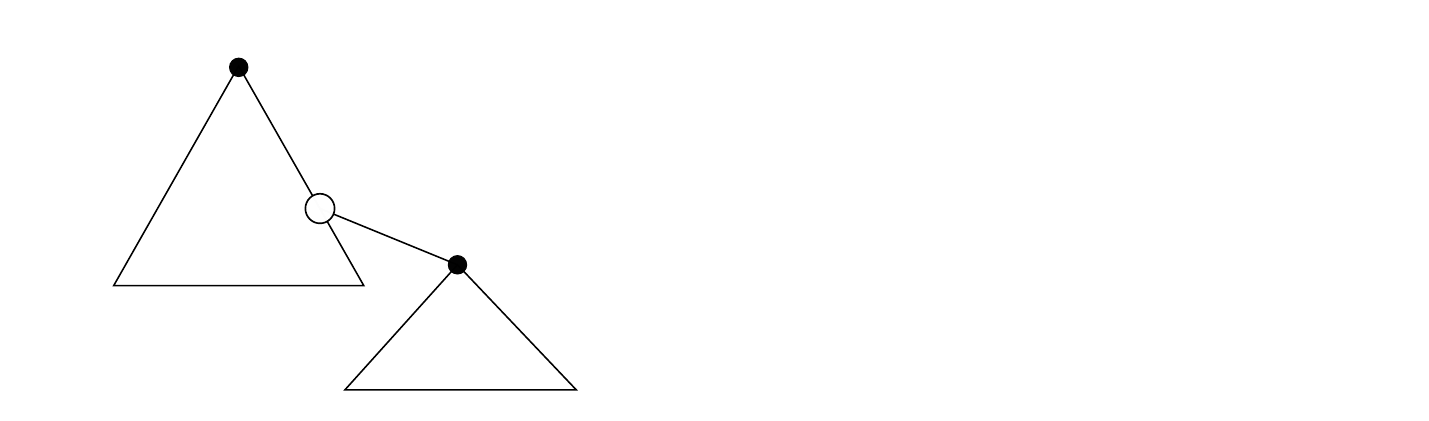} {inkscape -z -C --file=ststr12.svg --export-pdf=ststr12.pdf
--export-latex}
\begingroup
  \makeatletter
  \providecommand\color[2][]{
    \errmessage{(Inkscape) Color is used for the text in Inkscape, but the package 'color.sty' is not loaded}
    \renewcommand\color[2][]{}
  }
  \providecommand\transparent[1]{
    \errmessage{(Inkscape) Transparency is used (non-zero) for the text in Inkscape, but the package 'transparent.sty' is not loaded}
    \renewcommand\transparent[1]{}
  }
  \providecommand\rotatebox[2]{#2}
  \ifx\svgwidth\undefined
    \setlength{\unitlength}{411.42985667bp}
    \ifx\svgscale\undefined
      \relax
    \else
      \setlength{\unitlength}{\unitlength * \real{\svgscale}}
    \fi
  \else
    \setlength{\unitlength}{\svgwidth}
  \fi
  \global\let\svgwidth\undefined
  \global\let\svgscale\undefined
  \makeatother
  \begin{picture}(1,0.29407927)
    \put(0,0){\includegraphics[width=\unitlength,page=1]{ststr12.pdf}}
    \put(0.16321481,0.25731906){\color[rgb]{0,0,0}\makebox(0,0)[lb]{\smash{$u$}}}
    \put(0.22008956,0.16325688){\color[rgb]{0,0,0}\makebox(0,0)[lb]{\smash{$\white$}}}
    \put(0.15519408,0.13992358){\color[rgb]{0,0,0}\makebox(0,0)[lb]{\smash{$\RootedTreeOne$}}}
    \put(0.31633948,0.11950697){\color[rgb]{0,0,0}\makebox(0,0)[lb]{\smash{$\black_{\ttime}$}}}
    \put(0.04071467,0.15888162){\color[rgb]{0,0,0}\makebox(0,0)[lb]{\smash{$\RootedTreeTwo$:}}}
    \put(0.31196453,0.04950717){\color[rgb]{0,0,0}\makebox(0,0)[lb]{\smash{$\RootedTreeThree{}$}}}
    \put(0,0){\includegraphics[width=\unitlength,page=2]{ststr12.pdf}}
    \put(0.54598826,0.25758672){\color[rgb]{0,0,0}\makebox(0,0)[lb]{\smash{$u$}}}
    \put(0.60286291,0.16352454){\color[rgb]{0,0,0}\makebox(0,0)[lb]{\smash{$\white$}}}
    \put(0.45192608,0.23862868){\color[rgb]{0,0,0}\makebox(0,0)[lb]{\smash{$\RootedTreeOne$}}}
    \put(0.69911304,0.11977463){\color[rgb]{0,0,0}\makebox(0,0)[lb]{\smash{$\black_{\ttime}$}}}
    \put(0.39796688,0.15914927){\color[rgb]{0,0,0}\makebox(0,0)[lb]{\smash{$\RootedTreeTwo$:}}}
    \put(0.74286252,0.08039957){\color[rgb]{0,0,0}\makebox(0,0)[lb]{\smash{$\white'$}}}
    \put(0.67790181,0.13694416){\color[rgb]{0,0,0}\rotatebox{30}{\makebox(0,0)[lb]{\smash{$\treeDist{\RootedTreeTwo}{\black_{\ttime}} < \infty$}}}}
    \put(0.73664258,0.0971415){\color[rgb]{0,0,0}\rotatebox{30}{\makebox(0,0)[lb]{\smash{$\treeDist{\RootedTreeTwo}{\white'} < \infty$}}}}
  \end{picture}
\endgroup
  }
\caption{The case when $\black_{\ttime}$ has at least two neighbours which are alive.}
\label{fig:ststr12}
\end{figure}

Hence
$$
\treeDist{\RootedTreeTwo}{\white}=\max \set{ \treeDist{\RootedTreeOne}{\white}, \treeDist{\RootedTreeTwo}{\black_{\ttime}}+1} \  
\begin{cases}
= \infinity \text{ if } \treeDist{\RootedTreeOne}{\white}=\infinity,\\
< \infinity \text{ otherwise.}
\end{cases}
$$
To complete the proof it remains to show that $\black_{\ttime} \in \Alive_\ttime$. Let $\RootedTreeOne$ be a component of $\Forest_{\ttime}$ rooted at $\black_{\ttime}$. Note that $\dist{\ttime-1}{\white}=\treeDist{\RootedTreeOne}{\white}$ for $\white \in \children{\RootedTreeOne}{\black_{\ttime}}$. Since there exists $\white_1,\white_2 \in \children{\RootedTreeOne}{\black_{\ttime}}$ such that $\dist{\ttime-1}{\white_i}=\treeDist{\RootedTreeOne}{\white_i}  < \infinity$ for $i \in \{1,2 \}$, we have $\dist{\ttime}{\black_{\ttime}} < \infinity$ and $\secondDist{\ttime}{\black_{\ttime}}< \infinity$.
\end{proof}

\lemdispatching*

\begin{proof}
The illustration for this lemma is provided in Figure~\ref{fig:dead_parts}.
By monotonicity of $\dist{\ttime}{v}$ and $\secondDist{\ttime}{v}$ functions with respect to $\ttime$ (see Observation \ref{obs:monotonic}), the vertices dead in turn $\ttime-1$ remain dead in turn $\ttime$.
By Definitions~\ref{def:distdir}
and~\ref{def:deadBlack} vertex $\black_{\ttime}$ dies in turn $\ttime$.
By Observation~\ref{obs:aliveness} all other vertices on $\pathOne$ also die in turn $\ttime$.
What we need to prove is that the remaining vertices stay alive, i.e., do not die in turn $\ttime$.

For the sake of this proof we introduce notation for a rooted subtree. For a rooted tree $\RootedTreeOne$ and a vertex $u \in V(\RootedTreeOne)$ we let $\RootedSubTree{\RootedTreeOne}{u}$ denote the subtree of $\RootedTreeOne$ rooted at $u$ containing $u$ and all its descendants in $\RootedTreeOne$.
We first observe that $\lp{\ttime} \subseteq \Alive_{\ttime-1}$ due to Observation~\ref{obs:aliveness}. For $p \in \lp{\ttime}$ let $\WA{\ttime-1}{p}=\neighbours{\ttime-1}{p} \cap \Alive_{\ttime-1}$. Due to Definition~\ref{def:lifePortals}, $\size{\WA{\ttime-1}{p}} \geq 3$. We first prove that life portals remain alive, i.e., $\lp{\ttime}\subseteq \Alive_{\ttime}$. Let $\RootedTreeOne$ be the connected component of $\Forest_{\ttime-1}$ rooted in $p$ and let $\RootedTreeTwo$ be the connected component of $\Forest_{\ttime}$ rooted in $p$. Observe that $\WA{\ttime-1}{p} \subseteq \children{\RootedTreeOne}{p}$. There is only on vertex $\white \in \children{\RootedTreeOne}{p}$, for which $\RootedSubTree{\RootedTreeOne}{\white}\neq \RootedSubTree{\RootedTreeTwo}{\white}$. Hence, there exist two vertices $\white_1, \white_2 \in \WA{\ttime-1}{p}$, for whom 
$\RootedSubTree{\RootedTreeOne}{\white_i} = \RootedSubTree{\RootedTreeTwo}{\white_i}$ for $i \in \{1,2 \}$.
It then holds that for $i \in \{1,2 \}$: $$\infinity > \dist{\ttime-1}{\white_i} \geq \treeDist{\RootedTreeOne}{\white_i}=\treeDist{\RootedTreeTwo}{\white_i}.$$ Also,
$\children{\RootedTreeOne}{p} \setminus \{ \treeDir{\RootedTreeOne}{p} \}$
contains either $\white_1$ or $\white_2$, say it contains $\white_1$. Then
$$
\secondDist{\ttime}{p}=\min_{\white \in \children{\RootedTreeTwo}{p} \setminus \{ \treeDir{\RootedTreeOne}{p} \} } \treeDist{\RootedTreeTwo}{\white}+1 \leqslant \treeDist{\RootedTreeTwo}{\white_1}+1 < \infinity.
$$   
Now let $u \in \Black_{\ttime} \cup \White \setminus \lp{t}$ such that $u$ was alive in turn $\ttime-1$. It remains to prove that $u$ does not die in turn $\ttime$. If $u$ is not connected to $\black_{\ttime}$ in $\Forest_{\ttime}$, then $u$ obviously stays alive in turn $\ttime$.  Otherwise there is a unique path from $u$ to $\black_{\ttime}$ in $\Forest_{\ttime}$ and this path contains a life portal, let it be $p \in \lp{t}$. Let now $\RootedTreeOne$ be a component of $u$ in $\Forest_{\ttime-1}$ rooted in $u$ and let $\RootedTreeTwo$ be a component of $u$ in $\Forest_{\ttime}$ rooted in $u$.

Let $\Forest_{\ttime}- \set{\white, \black_{\ttime} }$ be the graph obtained by removing edge $\set{\white, \black_{\ttime}}$ from $\Forest_{\ttime}$. Let $\RootedTreeThree$ be the connected component of $\black_{\ttime}$ in $\Forest_{\ttime} - \set{\white, \black_{\ttime} }$, rooted at $\black_{\ttime}$. The only difference between $\RootedTreeOne$ and $\RootedTreeTwo$ is that $\black_{\ttime}$ becomes a child of some vertex $\white \in V(\RootedSubTree{\RootedTreeOne}{p})$ and $\RootedTreeThree$ is attached to $\RootedTreeOne$ via edge $(\white,\black_{\ttime})$. Also, in the proof of Observation~\ref{obs:monotonic} we proved that $\treeDist{\RootedTreeOne}{v} \leq \treeDist{\RootedTreeTwo}{v}$ for $v \in V(\RootedTreeOne)$.

Based on these observations, it suffices to show that 
$\treeDist{\RootedTreeOne}{p} < \infinity \Rightarrow \treeDist{\RootedTreeTwo}{p} < \infinity$.
For that it suffices to show that $\treeDist{\RootedTreeTwo}{p} < \infinity$. Now observe that $\WA{\ttime-1}{p} \cap \children{\RootedTreeTwo}{p} \geq 2$ (one of the vertices in $\WA{\ttime-1}{p}$ could be $p$'s parent). Let $\white_1,\white_2 \in \WA{\ttime-1}{p} \cap \children{\RootedTreeTwo}{p}$. Then there is $\white_i \in \{ \white_1,\white_2 \}$ such that $\RootedSubTree{\RootedTreeOne}{\white_i}= \RootedSubTree{\RootedTreeTwo}{\white_i}$. Since $\white_i \in \WA{\ttime-1}{p}$, it holds that $$ \infinity > \dist{\ttime-1}{\white_i} \geq \treeDist{\RootedTreeOne}{\white_i} = \treeDist{\RootedTreeTwo}{\white_i}.$$ This implies that $$\treeDist{\RootedTreeTwo}{p} = \min_{\white \in \children{\RootedTreeTwo}{p}} \treeDist{\RootedTreeTwo}{\white} + 1 \leq \treeDist{\RootedTreeTwo}{\white_i} + 1  < \infinity.$$  
\end{proof}

\clearpage

\section{The proof}\label{app:proof}

\obsdistPathDisp*

\begin{proof}
The statement of the observation holds by definition if $\disp{\ttime}=\black_{\ttime}$. Otherwise the only possible scenario is the one covered by Lemma~\ref{lem:dispatching}. Let $\white^p$ be the predecessor of $\disp{\ttime}$ on $\pathh{\ttime}{\black_{\ttime}}$. According to Lemma~\ref{lem:dispatching} it holds that $\white^p \in \Dead_{\ttime}$. By Lemma~\ref{lem:mmPathsAlive} path $\pathh{\ttime}{\disp{\ttime}}$ cannot visit $\white^p$. Therefore $\pathh{\ttime}{\disp{\ttime}}=\pathhs{\ttime}{\black_{\ttime}}$, and this implies the desired claim.
\end{proof}

This section is devoted to proving Lemma~\ref{lem:plusMinusOne}. The proof requires analyzing many trees rooted at the vertices of the considered mini-max paths. Such analysis becomes much simpler if we reformulate some of the definitions introduced in the paper. We start by reformulating the definition of the $\treeDist{}{}$ function.

\begin{definition}
\label{def:det-dist} For $\white\in\White$, $\black\in\Black_{\ttime}$ such that $\set{\white, \black } \in \InducedEdges{\White \cup \Black_{\ttime}}$ the determined mini-max distance is defined as
\begin{align*}
\detDist{ \ttime}{\white}{\black} = &\left\lbrace
\begin{array}
{ll} \min_{\white' \in \neighbours{\ttime}{\black} \setminus \set{\white}}
\detDist{\ttime}{\black}{\white'}+1 & \text{if} \ \neighbours{\ttime}{\black} \setminus
\set{\white}\neq \emptyset \\
\infinity & \text{otherwise,}
\end{array}
\right.\\
\detDist{ \ttime}{\black}{\white} = & \left\lbrace
\begin{array}
{ll} \max_{\black' \in \neighbours{\ttime}{\white} \setminus \set{\black}}
\detDist{\ttime}{\white}{\black'}+1 & \text{if} \ \neighbours{\ttime}{\white}
\setminus \set{\black}\neq \emptyset \\
0 & \text{otherwise.}
\end{array}
\right.
\end{align*}
 Vertex $\black'\in \neighbours{\ttime}{\white}\setminus \set{\black}$
which determines the maximum is denoted as $\detDir{\ttime}{\black}{\white}$.
If there are more such vertices we choose the first one in some predefined order on the vertices $\Black \cup \White$.
If $\neighbours{\ttime}{\white}\setminus \set{\black}$ is empty then
$\detDir{\ttime}{\black}{\white}$ is not defined.
 Vertex $\white'\in \neighbours{\ttime}{\black}\setminus \set{\white}$ which determines the
minimum is denoted as $\detDir{\ttime}{\white}{\black}$.
If there are more such vertices we choose the first one in some predefined order on the vertices $\Black \cup \White$.
If $\neighbours{\ttime}{\black}\setminus \set{\white}$ is empty then
$\detDir{\ttime}{\white}{\black}$ is not defined.
\end{definition}

It is easy to observe that if we take a connected component of a vertex $\firstVertex\in\White\cup\Black_{\ttime}$ and we indicate $\firstVertex$ as a root then we obtain a rooted tree $\RootedTreeOne$ and the following holds
$$
\treeDist{\RootedTreeOne}{\secondVertex}=\Cases{
\detDist{\ttime}{\parent{\RootedTreeOne}{\secondVertex}}{\secondVertex}
}{
if $\secondVertex\neq\firstVertex$,
}{
\dist{\ttime}{\secondVertex}
}{
otherwise
}
$$
for all $\secondVertex\in\Vertices\bra{\RootedTreeOne}$. As a consequence we obtain two observations that follow.
\begin{myobservation}
\label{obs:dist}
 For $\white\in\White$, $\black\in\Black_{\ttime}$ the mini-max distance equals
\begin{align*}
\dist{ \ttime}{\black} = &\left\lbrace
\begin{array}
{ll} \min_{\white' \in \neighbours{\ttime}{\black}}
\detDist{\ttime}{\black}{\white'}+1 & \text{if} \ \neighbours{\ttime}{\black} \neq \emptyset \\
\infinity & \text{otherwise.}
\end{array}
\right.\\
\dist{ \ttime}{\white} = & \left\lbrace
\begin{array}
{ll} \max_{\black' \in \neighbours{\ttime}{\white}}
\detDist{\ttime}{\white}{\black'}+1 & \text{if} \ \neighbours{\ttime}{\white}\neq \emptyset \\
0 & \text{otherwise.}
\end{array}
\right.
\end{align*}
Vertex $\white'\in \neighbours{\ttime}{\black}$ which determines the minimum equals $\dir{\ttime}{\black}$.
If there are more such vertices this is the first one in some predefined order on the vertices $\Black \cup \White$.
If $\neighbours{\ttime}{\black}$ is empty than $\dir{\ttime}{\black}$ is not defined.
Vertex $\black'\in \neighbours{\ttime}{\white}$ which determines the maximum equals $\dir{\ttime}{\white}$.
If there are more such vertices this is the first one in some predefined order on the vertices $\Black \cup \White$.
If $\neighbours{\ttime}{\white}$ is empty than $\dir{\ttime}{\white}$ is not defined.
\end{myobservation}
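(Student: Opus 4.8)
The plan is to derive Observation~\ref{obs:dist} directly from the recursive definition of $\treeDist{}{}$ in Definition~\ref{def:minimax}, using the identification stated just above between $\treeDist{\RootedTreeOne}{\cdot}$ on a rooted component and the determined mini-max distances $\detDist{\ttime}{\cdot}{\cdot}$ of Definition~\ref{def:det-dist}. First I would fix $\firstVertex\in\White\cup\Black_{\ttime}$, let $\RootedTreeOne$ be the connected component of $\firstVertex$ in $\Forest_{\ttime}$ rooted at $\firstVertex$, and record two structural facts that rely on $\Forest_{\ttime}$ being a forest: the root $\firstVertex$ has no parent in $\RootedTreeOne$, so $\children{\RootedTreeOne}{\firstVertex}=\neighbours{\ttime}{\firstVertex}$; and every non-root vertex $\secondVertex\in\VerticesOf{\RootedTreeOne}$ satisfies $\children{\RootedTreeOne}{\secondVertex}=\neighbours{\ttime}{\secondVertex}\setminus\set{\parent{\RootedTreeOne}{\secondVertex}}$.

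The key step is to verify the identity $\treeDist{\RootedTreeOne}{\secondVertex}=\detDist{\ttime}{\parent{\RootedTreeOne}{\secondVertex}}{\secondVertex}$ for every non-root $\secondVertex$, which I would prove by bottom-up induction on $\RootedTreeOne$. In the base case $\secondVertex$ is a leaf: if $\secondVertex$ is black then $\children{\RootedTreeOne}{\secondVertex}=\emptyset$ gives $\treeDist{\RootedTreeOne}{\secondVertex}=\infinity$, and $\neighbours{\ttime}{\secondVertex}\setminus\set{\parent{\RootedTreeOne}{\secondVertex}}=\emptyset$ gives $\detDist{\ttime}{\parent{\RootedTreeOne}{\secondVertex}}{\secondVertex}=\infinity$; if $\secondVertex$ is white, both sides are $0$. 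For an internal vertex $\secondVertex$ with parent $\secondVertex'$, the children of $\secondVertex$ in $\RootedTreeOne$ are precisely $\neighbours{\ttime}{\secondVertex}\setminus\set{\secondVertex'}$, so applying the inductive hypothesis to each child rewrites the $\min$ (resp.\ $\max$) over $\children{\RootedTreeOne}{\secondVertex}$ appearing in $\treeDist{\RootedTreeOne}{\secondVertex}$ as the $\min$ (resp.\ $\max$) over $\neighbours{\ttime}{\secondVertex}\setminus\set{\secondVertex'}$ appearing in $\detDist{\ttime}{\secondVertex'}{\secondVertex}$, term by term and with the same $+1$. Running the very same computation at the root, now with $\children{\RootedTreeOne}{\firstVertex}=\neighbours{\ttime}{\firstVertex}$ and $\dist{\ttime}{\firstVertex}=\treeDist{\RootedTreeOne}{\firstVertex}$ (Definition~\ref{def:distdir}), yields exactly the two displayed formulas for $\dist{\ttime}{\black}$ and $\dist{\ttime}{\white}$, with the empty-neighbourhood cases producing $\infinity$ and $0$ respectively.

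For the direction vertices I would invoke Definition~\ref{def:distdir}: $\dir{\ttime}{\firstVertex}=\treeDir{\RootedTreeOne}{\firstVertex}$ is the child of $\firstVertex$ realizing the extremum defining $\treeDist{\RootedTreeOne}{\firstVertex}$, and by the identity just established this child is precisely the neighbour of $\firstVertex$ realizing the extremum in the formula for $\dist{\ttime}{\firstVertex}$; since both definitions break ties by the same fixed order on $\Black\cup\White$, the two selections agree, which is the remaining part of the statement. I do not expect a genuine obstacle here — the whole content is a routine translation between the rooted-tree recursion and the edge-oriented recursion, and the identity $\treeDist{\RootedTreeOne}{\secondVertex}=\detDist{\ttime}{\parent{\RootedTreeOne}{\secondVertex}}{\secondVertex}$ is exactly what the text's ``easy to observe'' refers to. The only points demanding care are the leaf base cases, so that $\infinity$ and $0$ line up with the two colours, and the observation that ``children in the rooted component'' coincides with ``neighbours other than the parent'', which is where acyclicity of $\Forest_{\ttime}$ enters.
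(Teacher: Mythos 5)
Your proposal is correct and follows exactly the route the paper intends: the paper states the identity $\treeDist{\RootedTreeOne}{\secondVertex}=\detDist{\ttime}{\parent{\RootedTreeOne}{\secondVertex}}{\secondVertex}$ (for non-root $\secondVertex$) as an ``easy to observe'' fact and then records Observation~\ref{obs:dist} as a direct consequence at the root, and your bottom-up induction is precisely the standard way to verify that identity. You also correctly spot that the content hinges on $\children{\RootedTreeOne}{v}$ coinciding with $\neighbours{\ttime}{v}$ (minus the parent, if any), on the leaf base cases lining up $\infinity$/$0$ with the two colours, and on the tie-breaking order being the same in both definitions.
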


\begin{myobservation}
\label{obs:secdist}
 For $\white\in\White$, $\black\in\Black_{\ttime}$ the mini-max distance equals
\begin{align*}
\secondDist{ \ttime}{\black} = &\left\lbrace
\begin{array}
{ll} \min_{\white' \in \neighbours{\ttime}{\black}\setminus \set{\dir{\ttime}{\black}}}
\detDist{\ttime}{\black}{\white'}+1 & \text{if} \ \neighbours{\ttime}{\black}\setminus \set{\dir{\ttime}{\black}} \neq \emptyset \\
\infinity & \text{otherwise.}
\end{array}
\right.\\
\secondDist{\ttime}{\white} = & \left\lbrace
\begin{array}
{ll} \max_{\black' \in \neighbours{\ttime}{\white}\setminus \set{\dir{\ttime}{\white}}}
\detDist{\ttime}{\white}{\black'}+1 & \text{if} \ \neighbours{\ttime}{\white}\setminus \set{\dir{\ttime}{\white}}\neq \emptyset \\
0 & \text{otherwise.}
\end{array}
\right.
\end{align*}
 Vertex $\white'\in \neighbours{\ttime}{\black}\setminus \set{\dir{\ttime}{\black}}$ which determines the
minimum equals $\secondDir{\ttime}{\black}$.
If there are more such vertices this is the first one in some predefined order on the vertices $\Black \cup \White$.
If $\neighbours{\ttime}{\black}\setminus \set{\dir{\ttime}{\black}}$ is empty than
$\secondDir{\ttime}{\black}$ is not defined.
 Vertex $\black'\in \neighbours{\ttime}{\white}\setminus \set{\dir{\ttime}{\black}}$
which determines the maximum equals $\secondDir{\ttime}{\white}$.
If there are more such vertices this is the first one in some predefined order on the vertices $\Black \cup \White$.
If $\neighbours{\ttime}{\white}\setminus \set{\dir{\ttime}{\black}}$ is empty than
$\secondDir{\ttime}{\white}$ is not defined.
\end{myobservation}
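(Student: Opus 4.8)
The plan is to derive Observation~\ref{obs:secdist} by exactly the same mechanical unfolding that yields Observation~\ref{obs:dist}, but applied to the pruned tree from Definition~\ref{def:distdir} instead of to the full rooted component $\RootedTreeOne$.

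First I would record what the displayed identity preceding Observation~\ref{obs:dist} gives at the root. Let $v\in\White\cup\Black_\ttime$ and let $\RootedTreeOne$ be the connected component of $v$ in $\Forest_\ttime$ rooted at $v$. Since $v$ is the root, $\children{\RootedTreeOne}{v}=\neighbours{\ttime}{v}$, and for every $u\in\neighbours{\ttime}{v}$ we have $\parent{\RootedTreeOne}{u}=v$, hence $\treeDist{\RootedTreeOne}{u}=\detDist{\ttime}{v}{u}$ by that identity.

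Next I would unfold the definition of the second distance. By Definition~\ref{def:distdir}, $\secondDist{\ttime}{v}=\treeDist{\RootedTreeTwo}{v}$ and $\secondDir{\ttime}{v}=\treeDir{\RootedTreeTwo}{v}$, where $\RootedTreeTwo$ is $\RootedTreeOne$ with the child $\dir{\ttime}{v}=\treeDir{\RootedTreeOne}{v}$ and all of its descendants deleted (and $\RootedTreeTwo=\RootedTreeOne$ when $v$ has no neighbour). The key structural point --- immediate because in a rooted tree the subtrees hanging off distinct children of the root are pairwise disjoint --- is that $\children{\RootedTreeTwo}{v}=\neighbours{\ttime}{v}\setminus\set{\dir{\ttime}{v}}$ and, for every $u$ in this set, the subtree of $\RootedTreeTwo$ rooted at $u$ is literally the same as the subtree of $\RootedTreeOne$ rooted at $u$; therefore $\treeDist{\RootedTreeTwo}{u}=\treeDist{\RootedTreeOne}{u}=\detDist{\ttime}{v}{u}$. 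Substituting this into the defining recursion of Definition~\ref{def:minimax} evaluated at the root of $\RootedTreeTwo$ --- the minimum over $\children{\RootedTreeTwo}{v}$ incremented by $1$ if $v\in\Black_\ttime$, the maximum over $\children{\RootedTreeTwo}{v}$ incremented by $1$ if $v\in\White$, and the boundary values $\infinity$ resp.\ $0$ when $\children{\RootedTreeTwo}{v}=\emptyset$ --- produces verbatim the two claimed formulas for $\secondDist{\ttime}{\black}$ and $\secondDist{\ttime}{\white}$. Reading off the optimizing child through $\treeDir{\RootedTreeTwo}{v}$ (including its predefined tie-breaking rule, and its being undefined exactly when $\neighbours{\ttime}{v}\setminus\set{\dir{\ttime}{v}}=\emptyset$) gives the stated characterization of $\secondDir{\ttime}{v}$.

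I do not expect a genuine obstacle here; this is a definition chase. The only thing to be careful about is the degenerate cases: when $v$ has no neighbours, or when $\dir{\ttime}{v}$ is its unique neighbour, the set $\neighbours{\ttime}{v}\setminus\set{\dir{\ttime}{v}}$ is empty, and then $\secondDist{\ttime}{v}$ equals $\infinity$ for black $v$ and $0$ for white $v$ straight from the boundary clauses of Definitions~\ref{def:minimax} and~\ref{def:distdir}, which is precisely the ``otherwise'' branch of each formula. Everything else follows by matching the two recursions term by term.
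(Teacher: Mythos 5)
Your proof is correct and follows exactly the route the paper intends: it applies the displayed identity $\treeDist{\RootedTreeOne}{u}=\detDist{\ttime}{\parent{\RootedTreeOne}{u}}{u}$ at the children of the root of the pruned tree $\RootedTreeTwo$ from Definition~\ref{def:distdir}, which is precisely what the paper means by ``as a consequence we obtain two observations that follow.'' Nothing is missing; the handling of the degenerate cases (no neighbour, or $\dir{\ttime}{v}$ the unique neighbour) is also the right thing to check.
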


The meaning of $\dir{\ttime}{\cdot}$ and $\secondDir{\ttime}{\cdot}$ functions is given by the following observation which follows from Definition \ref{def:det-dist} and Observations~\ref{obs:dist}~and~\ref{obs:secdist}.

\begin{myobservation}
\label{obs:distSecdist}
For each $\firstVertex\in\Black_{\ttime}\cup\White$ it holds that
\begin{align*}
\dist{\ttime}{\firstVertex}&=\detDist{\ttime}{\firstVertex}{\dir{\ttime}{\firstVertex}}+1,\\
\secondDist{\ttime}{\firstVertex}&=\detDist{\ttime}{\firstVertex}{\secondDir{\ttime}{\firstVertex}}+1.
\end{align*}
\end{myobservation}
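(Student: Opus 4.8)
The plan is to read both identities straight off the reformulated mini-max recurrences of Observations~\ref{obs:dist} and~\ref{obs:secdist}, together with Definition~\ref{def:det-dist}, via a two-case split on the colour of $\firstVertex$. I would first handle the first identity. Suppose $\firstVertex=\black\in\Black_{\ttime}$ with $\neighbours{\ttime}{\black}\neq\emptyset$: by Observation~\ref{obs:dist}, $\dist{\ttime}{\black}=\min_{\white'\in\neighbours{\ttime}{\black}}\detDist{\ttime}{\black}{\white'}+1$, and the same observation defines $\dir{\ttime}{\black}$ to be precisely the neighbour $\white'$ at which this minimum is attained, so substituting $\white'=\dir{\ttime}{\black}$ yields $\dist{\ttime}{\black}=\detDist{\ttime}{\black}{\dir{\ttime}{\black}}+1$. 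Suppose instead $\firstVertex=\white\in\White$ with $\neighbours{\ttime}{\white}\neq\emptyset$: Observation~\ref{obs:dist} gives $\dist{\ttime}{\white}=\max_{\black'\in\neighbours{\ttime}{\white}}\detDist{\ttime}{\white}{\black'}+1$ with $\dir{\ttime}{\white}$ the maximizer, and the identity follows in the same way.

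For the second identity I would repeat the argument verbatim with Observation~\ref{obs:secdist} in place of Observation~\ref{obs:dist}: for $\black\in\Black_{\ttime}$ the extremum defining $\secondDist{\ttime}{\black}$ ranges over the punctured neighbourhood $\neighbours{\ttime}{\black}\setminus\set{\dir{\ttime}{\black}}$ and $\secondDir{\ttime}{\black}$ is its minimizer, while for $\white\in\White$ it ranges over $\neighbours{\ttime}{\white}\setminus\set{\dir{\ttime}{\white}}$ and $\secondDir{\ttime}{\white}$ is its maximizer; plugging the extremizer back in gives $\secondDist{\ttime}{\firstVertex}=\detDist{\ttime}{\firstVertex}{\secondDir{\ttime}{\firstVertex}}+1$.

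The only point deserving a comment is the degenerate regime: when $\neighbours{\ttime}{\firstVertex}=\emptyset$ (for the first identity) or the punctured neighbourhood is empty (for the second), the direction $\dir{\ttime}{\firstVertex}$, respectively $\secondDir{\ttime}{\firstVertex}$, is undefined and the right-hand side is vacuous; there $\dist{\ttime}{\firstVertex}$ equals $\infinity$ or $0$ by Definition~\ref{def:det-dist}, and the identities are to be understood as asserted exactly when the relevant direction exists, which is the only setting in which they are later invoked. I do not expect any genuine obstacle: the statement is pure bookkeeping, its whole content being that the reformulated recurrences of Observations~\ref{obs:dist} and~\ref{obs:secdist} are already written in "attained-at-the-direction" form, so all the real work was done in setting up Definition~\ref{def:det-dist} and those two observations.
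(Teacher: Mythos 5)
Your proof is correct and follows exactly the route the paper intends: the paper gives no explicit proof, merely remarking that the observation "follows from Definition~\ref{def:det-dist} and Observations~\ref{obs:dist} and~\ref{obs:secdist}," and your write-up simply spells out that substituting the extremizer $\dir{\ttime}{\firstVertex}$ (resp.\ $\secondDir{\ttime}{\firstVertex}$) into the $\min$/$\max$ formulas of those observations yields the claimed identities. The degenerate-case remark is a sensible clarification (though the fallback values $\infinity$ and $0$ are read off Observations~\ref{obs:dist}/\ref{obs:secdist} rather than Definition~\ref{def:det-dist}, a trivial mislabel).
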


Observation \ref{obs:distSecdist} provides a formula for $\dist{\ttime}{\firstVertex}$ and $\secondDist{\ttime}{\firstVertex}$ in terms of $\detDist{\ttime}{\firstVertex}{\cdot}$. If we want to determine $\detDist{\ttime}{\secondVertex}{\firstVertex}$ by $\dist{\ttime}{\firstVertex}$ and $\secondDist{\ttime}{\firstVertex}$ it is worth noting that $\dist{\ttime}{\firstVertex}$ almost always equals $\dist{\ttime}{\firstVertex}$. The only  exception is when $\dist{\ttime}{\firstVertex}$ is determined by $\secondVertex$ and equals $\secondDist{\ttime}{\firstVertex}$.

\begin{myobservation}
\label{obs:detdist}
For any edge $\set{\firstVertex,\secondVertex}\in\Edges_{\ttime}$ the following holds
$$
\detDist{\ttime}{\secondVertex}{\firstVertex}=
\Cases
{\dist{\ttime}{\firstVertex}}{if $\secondVertex\neq\dir{\ttime}{\firstVertex}$}
{\secondDist{\ttime}{\firstVertex}}{otherwise.}
$$
\end{myobservation}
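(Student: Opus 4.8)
The plan is to derive the identity directly from Definition~\ref{def:det-dist} and Observations~\ref{obs:dist} and~\ref{obs:secdist}, treating the cases $\firstVertex\in\Black_{\ttime}$ and $\firstVertex\in\White$ separately, and within each the two cases $\secondVertex\neq\dir{\ttime}{\firstVertex}$ and $\secondVertex=\dir{\ttime}{\firstVertex}$. A preliminary remark handles all degenerate branches: since $\set{\firstVertex,\secondVertex}\in\Edges_{\ttime}$ we have $\secondVertex\in\neighbours{\ttime}{\firstVertex}$, so $\neighbours{\ttime}{\firstVertex}\neq\emptyset$ and hence $\dir{\ttime}{\firstVertex}$ is defined; moreover every $\detDist{\ttime}{\firstVertex}{\cdot}$ term appearing below is along an actual edge of $\Forest$, so it is well-defined. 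The key structural observation driving the argument is that it is cleaner to compare the \emph{sets} over which the extrema are taken than the extremal values.

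First I would take $\firstVertex\in\Black_{\ttime}$. By Definition~\ref{def:det-dist}, $\detDist{\ttime}{\secondVertex}{\firstVertex}=\min_{\white'\in\neighbours{\ttime}{\firstVertex}\setminus\set{\secondVertex}}\detDist{\ttime}{\firstVertex}{\white'}+1$, with value $\infinity$ if that set is empty, while by Observation~\ref{obs:dist} we have $\dist{\ttime}{\firstVertex}=\min_{\white'\in\neighbours{\ttime}{\firstVertex}}\detDist{\ttime}{\firstVertex}{\white'}+1$ and $\dir{\ttime}{\firstVertex}$ is the (tie-broken) minimizer. If $\secondVertex\neq\dir{\ttime}{\firstVertex}$, then $\dir{\ttime}{\firstVertex}$ still belongs to $\neighbours{\ttime}{\firstVertex}\setminus\set{\secondVertex}$, so deleting $\secondVertex$ from the index set neither empties it nor affects the minimum value, and $\detDist{\ttime}{\secondVertex}{\firstVertex}=\dist{\ttime}{\firstVertex}$. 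If $\secondVertex=\dir{\ttime}{\firstVertex}$, then $\neighbours{\ttime}{\firstVertex}\setminus\set{\secondVertex}$ equals $\neighbours{\ttime}{\firstVertex}\setminus\set{\dir{\ttime}{\firstVertex}}$ literally, so the defining expression for $\detDist{\ttime}{\secondVertex}{\firstVertex}$ coincides verbatim with the expression for $\secondDist{\ttime}{\firstVertex}$ in Observation~\ref{obs:secdist} (the empty-set convention $\infinity$ matching on both sides), giving $\detDist{\ttime}{\secondVertex}{\firstVertex}=\secondDist{\ttime}{\firstVertex}$.

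The case $\firstVertex\in\White$ is entirely symmetric: one replaces $\min$ by $\max$ and the empty-set value $\infinity$ by $0$, uses the $\max$-versions of the formulas from Observations~\ref{obs:dist} and~\ref{obs:secdist}, and runs the same two-case argument, since the maximizer $\dir{\ttime}{\firstVertex}$ survives the removal of $\secondVertex$ from $\neighbours{\ttime}{\firstVertex}$ exactly when $\secondVertex\neq\dir{\ttime}{\firstVertex}$, and otherwise the index set is exactly the one defining $\secondDist{\ttime}{\firstVertex}$.

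There is no genuinely hard step; the only place requiring care is the bookkeeping of the boundary conventions (empty neighbourhoods yielding $\infinity$ for black and $0$ for white) and the tie-breaking rule for $\dir{\ttime}{\firstVertex}$ — which is precisely why arguing at the level of index sets, rather than comparing the numeric values produced by the $\min$/$\max$, makes the identity fall out immediately from the two observations.
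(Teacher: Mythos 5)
Your proof is correct, and it follows exactly the reasoning the paper gives informally in the paragraph preceding the observation (the paper states it without a formal proof, merely noting that the two quantities differ only when $\secondVertex$ determines the extremum). Comparing the index sets in Definition~\ref{def:det-dist} against Observations~\ref{obs:dist} and~\ref{obs:secdist}, and noting the extremizer $\dir{\ttime}{\firstVertex}$ survives the deletion of $\secondVertex$ precisely when $\secondVertex\neq\dir{\ttime}{\firstVertex}$, is exactly the intended argument.
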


Because $\dist{}{}$ is a $\min$ ($\max$) taken over a bigger set than $\secondDist{}{}$ we have the following.

\begin{myobservation}
\label{obs:distDetSecond}
For $\white\in\White$ and $\black\in\Black_{\ttime}$ such that $\set{\white,\black}\in\Edges_{\ttime}$ the following is holds
\begin{align*}
\dist{\ttime}{\black}\leqslant&\detDist{\ttime}{\white}{\black}\leqslant\secondDist{\ttime}{\black},\\
\secondDist{\ttime}{\white}\leqslant&\detDist{\ttime}{\black}{\white}\leqslant\dist{\ttime}{\white}.
\end{align*}
\end{myobservation}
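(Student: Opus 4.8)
The plan is to prove the four claimed inequalities using only the ``$1+\min$ (resp.\ $1+\max$) over a set of neighbours'' descriptions of $\dist{\ttime}{\cdot}$, $\secondDist{\ttime}{\cdot}$ and $\detDist{\ttime}{\cdot}{\cdot}$ provided by Definition~\ref{def:det-dist} and Observations~\ref{obs:dist},~\ref{obs:secdist},~\ref{obs:distSecdist}. Concretely, for the black endpoint $\black$ all three quantities are of the shape ``$1$ plus the minimum of $\detDist{\ttime}{\black}{\cdot}$ over a set of neighbours of $\black$'', the sets being $\neighbours{\ttime}{\black}$ for $\dist{\ttime}{\black}$, $\neighbours{\ttime}{\black}\setminus\set{\white}$ for $\detDist{\ttime}{\white}{\black}$, and $\neighbours{\ttime}{\black}\setminus\set{\dir{\ttime}{\black}}$ for $\secondDist{\ttime}{\black}$ (with the convention that an empty set yields value $\infinity$); dually, for the white endpoint $\white$ the three quantities are ``$1$ plus the maximum of $\detDist{\ttime}{\white}{\cdot}$'' over $\neighbours{\ttime}{\white}$, over $\neighbours{\ttime}{\white}\setminus\set{\black}$, and over $\neighbours{\ttime}{\white}\setminus\set{\dir{\ttime}{\white}}$ respectively (empty set yielding value $0$). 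Recall also from Observation~\ref{obs:distSecdist} that $\dir{\ttime}{\firstVertex}$ actually realises the extremum, i.e.\ $\dist{\ttime}{\firstVertex}=\detDist{\ttime}{\firstVertex}{\dir{\ttime}{\firstVertex}}+1$.

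For the chain $\dist{\ttime}{\black}\leqslant\detDist{\ttime}{\white}{\black}\leqslant\secondDist{\ttime}{\black}$, the left inequality is immediate since $\detDist{\ttime}{\white}{\black}$ minimises over the subset $\neighbours{\ttime}{\black}\setminus\set{\white}$ of $\neighbours{\ttime}{\black}$, which can only increase the value (and this still holds when that subset is empty, as then $\detDist{\ttime}{\white}{\black}=\infinity$). For the right inequality I would split on whether $\white=\dir{\ttime}{\black}$: if so, the defining sets of $\detDist{\ttime}{\white}{\black}$ and $\secondDist{\ttime}{\black}$ coincide and the two are equal; if not, then $\dir{\ttime}{\black}$ still belongs to $\neighbours{\ttime}{\black}\setminus\set{\white}$ and realises the minimum of $\detDist{\ttime}{\black}{\cdot}$ over all of $\neighbours{\ttime}{\black}$, so $\detDist{\ttime}{\white}{\black}\leqslant\detDist{\ttime}{\black}{\dir{\ttime}{\black}}+1=\dist{\ttime}{\black}\leqslant\secondDist{\ttime}{\black}$, the last step again being ``minimum over a subset''.

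The chain $\secondDist{\ttime}{\white}\leqslant\detDist{\ttime}{\black}{\white}\leqslant\dist{\ttime}{\white}$ is handled symmetrically with $\max$ in place of $\min$: $\detDist{\ttime}{\black}{\white}$ maximises $\detDist{\ttime}{\white}{\cdot}$ over the subset $\neighbours{\ttime}{\white}\setminus\set{\black}$ of $\neighbours{\ttime}{\white}$, yielding $\detDist{\ttime}{\black}{\white}\leqslant\dist{\ttime}{\white}$ (the empty-set value $0\leqslant\dist{\ttime}{\white}$ covering the degenerate case), and for the remaining inequality I split on whether $\black=\dir{\ttime}{\white}$, the equal case being immediate and, when $\black\neq\dir{\ttime}{\white}$, using that $\dir{\ttime}{\white}\in\neighbours{\ttime}{\white}\setminus\set{\black}$ realises the maximum, so $\detDist{\ttime}{\black}{\white}=\detDist{\ttime}{\white}{\dir{\ttime}{\white}}+1=\dist{\ttime}{\white}\geqslant\secondDist{\ttime}{\white}$.

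I do not expect a genuine obstacle here: the argument is pure bookkeeping about which neighbour set still contains $\dir{\ttime}{\cdot}$ after removing one element, together with the degenerate cases in which one of $\neighbours{\ttime}{\black}\setminus\set{\white}$ or $\neighbours{\ttime}{\white}\setminus\set{\black}$ is empty, so that the corresponding $\detDist$ value is $\infinity$ (black side) or $0$ (white side) and the claimed inequalities hold vacuously. The only place one must be slightly attentive is in distinguishing the sub-case $\white=\dir{\ttime}{\black}$ (resp.\ $\black=\dir{\ttime}{\white}$) from its complement, since in the latter the chain is routed through $\dist{\ttime}{\black}$ (resp.\ $\dist{\ttime}{\white}$).
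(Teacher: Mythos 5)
Your proof is correct. The paper does not give a formal proof of this observation; it merely offers the one-sentence heuristic that ``$\dist{}{}$ is a $\min$ ($\max$) taken over a bigger set than $\secondDist{}{}$,'' leaving the rest to the reader. Your write-up is the natural careful elaboration of that heuristic, and it rightly notices that the heuristic alone does not directly settle the inequalities $\detDist{\ttime}{\white}{\black}\leqslant\secondDist{\ttime}{\black}$ and $\secondDist{\ttime}{\white}\leqslant\detDist{\ttime}{\black}{\white}$: the index set $\neighbours{\ttime}{\black}\setminus\set{\white}$ for $\detDist{\ttime}{\white}{\black}$ is in general neither a subset nor a superset of the index set $\neighbours{\ttime}{\black}\setminus\set{\dir{\ttime}{\black}}$ for $\secondDist{\ttime}{\black}$. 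Your case split on whether $\white=\dir{\ttime}{\black}$ (and, dually, $\black=\dir{\ttime}{\white}$), with the complementary case routed through $\dist{\ttime}{\black}$ via the minimiser $\dir{\ttime}{\black}$ still lying in the restricted index set, is exactly the right way to close that gap. The handling of the degenerate empty-set cases ($\infinity$ on the black side, $0$ on the white side) is also correct.
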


The function $\detDist{ \ttime}{\firstVertex}{\secondVertex}$, given an entry edge
$\dedge{\firstVertex}{\secondVertex}$ to a vertex $\secondVertex$, returns one plus the distance
of some neighbor $\detDir{\ttime}{\firstVertex}{\secondVertex}$ of $\secondVertex$ other than
$\firstVertex$.
This defines a determined mini-max path in a natural recursive way.
An undetermined mini-max path can be defined using the notion of a determined mini-max path.

\begin{definition}
\label{def:mmpaths}
For any $\ttime=0, \ldots,n$ and $\set{\firstVertex,\secondVertex}\in\Edges_{\ttime}$ we define
$$
\detPath{\ttime}{\firstVertex}{\secondVertex}= \left \lbrace
\begin{array}
{ll} \firstVertex \cdot
\detPath{\ttime}{\secondVertex}{\detDir{\ttime}{\firstVertex}{\secondVertex}} & \text{if}
\ \detDir{\ttime}{\firstVertex}{\secondVertex} \ \text{exists,} \\
\firstVertex \cdot \secondVertex& \text{otherwise.}
\end{array}
\right.
$$
\end{definition}
We observe the following.
\begin{myobservation}
Let $\ttime=0, \ldots,n$ and $\firstVertex\in\Black_{\ttime}\cup\White$. Then
\begin{align*}
\pathh{\ttime}{\firstVertex}&= \left \lbrace
\begin{array}
{ll} \detPath{\ttime}{\firstVertex}{\dir{\ttime}{\firstVertex}} & \text{if}
\ \dir{\ttime}{\firstVertex} \ \text{exists,} \\
\firstVertex& \text{otherwise.}
\end{array}
\right.\\
\secondPath{\ttime}{\firstVertex}&= \left \lbrace
\begin{array}
{ll} \detPath{\ttime}{\firstVertex}{\secondDir{\ttime}{\firstVertex}} & \text{if}
\ \secondDir{\ttime}{\firstVertex} \ \text{exists,} \\
\firstVertex& \text{otherwise.}
\end{array}
\right.
\end{align*}
\end{myobservation}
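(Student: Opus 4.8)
The plan is to read off the identity from the recursive correspondence between the revenue/direction functions on a rooted component and the $\detDist{\ttime}{\cdot}{\cdot}$/$\detDir{\ttime}{\cdot}{\cdot}$ functions on $\Forest_{\ttime}$ recorded just before Observation~\ref{obs:dist}. Fix $\ttime$ and $\firstVertex$, and let $\RootedTreeOne$ be the connected component of $\firstVertex$ in $\Forest_{\ttime}$ rooted at $\firstVertex$. First I would upgrade the stated equality $\treeDist{\RootedTreeOne}{\secondVertex}=\detDist{\ttime}{\parent{\RootedTreeOne}{\secondVertex}}{\secondVertex}$ (valid for $\secondVertex\neq\firstVertex$) to the analogous equality for directions. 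For a non-root $\secondVertex$ the set $\neighbours{\ttime}{\secondVertex}\setminus\set{\parent{\RootedTreeOne}{\secondVertex}}$ is exactly $\children{\RootedTreeOne}{\secondVertex}$, and since the tie-breaking rule for $\treeDir{}{}$ and for $\detDir{}{}$ is the same predefined order on $\Black\cup\White$, the child of $\secondVertex$ attaining the defining min/max is the same object in both formulations; hence $\treeDir{\RootedTreeOne}{\secondVertex}=\detDir{\ttime}{\parent{\RootedTreeOne}{\secondVertex}}{\secondVertex}$, and moreover $\treeDir{\RootedTreeOne}{\secondVertex}$ is defined iff $\detDir{\ttime}{\parent{\RootedTreeOne}{\secondVertex}}{\secondVertex}$ is (both say that $\secondVertex$ has a neighbour other than $\parent{\RootedTreeOne}{\secondVertex}$). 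For the root, $\treeDir{\RootedTreeOne}{\firstVertex}=\dir{\ttime}{\firstVertex}$ directly by Definition~\ref{def:distdir}.

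The core step is the following claim: for every non-root $\secondVertex\in\Vertices\bra{\RootedTreeOne}$, writing $\secondVertex'=\parent{\RootedTreeOne}{\secondVertex}$, one has $\detPath{\ttime}{\secondVertex'}{\secondVertex}=\secondVertex'\cdot\mmpath{\RootedTreeOne}{\secondVertex}$. I would prove this by bottom-up induction on $\RootedTreeOne$. In the base case $\secondVertex$ is a leaf of $\RootedTreeOne$, so $\neighbours{\ttime}{\secondVertex}\setminus\set{\secondVertex'}=\emptyset$; then $\treeDir{\RootedTreeOne}{\secondVertex}$ and $\detDir{\ttime}{\secondVertex'}{\secondVertex}$ are both undefined, and by Definitions~\ref{def:minimax} and~\ref{def:mmpaths} both sides equal $\secondVertex'\cdot\secondVertex$. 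In the inductive step put $\secondVertex''=\treeDir{\RootedTreeOne}{\secondVertex}=\detDir{\ttime}{\secondVertex'}{\secondVertex}$; unfolding the two recursive definitions once gives $\mmpath{\RootedTreeOne}{\secondVertex}=\secondVertex\cdot\mmpath{\RootedTreeOne}{\secondVertex''}$ and $\detPath{\ttime}{\secondVertex'}{\secondVertex}=\secondVertex'\cdot\detPath{\ttime}{\secondVertex}{\secondVertex''}$, and since $\parent{\RootedTreeOne}{\secondVertex''}=\secondVertex$ the induction hypothesis yields $\detPath{\ttime}{\secondVertex}{\secondVertex''}=\secondVertex\cdot\mmpath{\RootedTreeOne}{\secondVertex''}$; concatenating proves the claim.

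From the claim the first identity is immediate: $\pathh{\ttime}{\firstVertex}=\mmpath{\RootedTreeOne}{\firstVertex}$ by Definition~\ref{def:path}, and if $\dir{\ttime}{\firstVertex}=\treeDir{\RootedTreeOne}{\firstVertex}$ exists then $\mmpath{\RootedTreeOne}{\firstVertex}=\firstVertex\cdot\mmpath{\RootedTreeOne}{\dir{\ttime}{\firstVertex}}$, which by the claim (with $\secondVertex=\dir{\ttime}{\firstVertex}$, $\secondVertex'=\firstVertex$) equals $\detPath{\ttime}{\firstVertex}{\dir{\ttime}{\firstVertex}}$; if $\dir{\ttime}{\firstVertex}$ is undefined then $\treeDir{\RootedTreeOne}{\firstVertex}$ is undefined and $\mmpath{\RootedTreeOne}{\firstVertex}=\firstVertex$. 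For the second identity, observe that the tree $\RootedTreeTwo$ of Definition~\ref{def:secPath} --- the component of $\firstVertex$ in $\Forest_{\ttime}$ with the edge $\set{\firstVertex,\dir{\ttime}{\firstVertex}}$ removed, rooted at $\firstVertex$ --- is precisely $\RootedTreeOne$ with $\treeDir{\RootedTreeOne}{\firstVertex}$ and all its descendants deleted, i.e. exactly the tree used to define $\secondDir{\ttime}{\firstVertex}$ in Definition~\ref{def:distdir}; hence $\treeDir{\RootedTreeTwo}{\firstVertex}=\secondDir{\ttime}{\firstVertex}$. Since for every non-root $\secondVertex\in\Vertices\bra{\RootedTreeTwo}$ both the parent pointer and the rooted subtree of $\secondVertex$ agree in $\RootedTreeTwo$ and in $\RootedTreeOne$, we get $\mmpath{\RootedTreeTwo}{\secondVertex}=\mmpath{\RootedTreeOne}{\secondVertex}$ and the core claim applies verbatim inside $\RootedTreeTwo$; therefore, if $\secondDir{\ttime}{\firstVertex}$ exists then $\secondPath{\ttime}{\firstVertex}=\mmpath{\RootedTreeTwo}{\firstVertex}=\firstVertex\cdot\mmpath{\RootedTreeTwo}{\secondDir{\ttime}{\firstVertex}}=\detPath{\ttime}{\firstVertex}{\secondDir{\ttime}{\firstVertex}}$, and otherwise $\secondPath{\ttime}{\firstVertex}=\firstVertex$.

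I do not expect a genuine obstacle here; the argument is bookkeeping on top of the correspondence already established in the paper. The two spots that need a little care are: transporting the $\treeDist{}{}$-correspondence to a $\treeDir{}{}$-correspondence while respecting the shared tie-breaking order and the ``defined/undefined'' dichotomy, and checking that deleting the edge $\set{\firstVertex,\dir{\ttime}{\firstVertex}}$ (equivalently, the $\dir{}{}$-child's subtree) leaves the parent pointers and rooted subtrees of all surviving non-root vertices unchanged, so that the core claim can be reused for $\RootedTreeTwo$ without any reproof.
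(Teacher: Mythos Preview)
Your argument is correct. The paper itself does not prove this observation at all: it is stated immediately after the displayed correspondence $\treeDist{\RootedTreeOne}{\secondVertex}=\detDist{\ttime}{\parent{\RootedTreeOne}{\secondVertex}}{\secondVertex}$ and is treated as self-evident from that correspondence together with the definitions of $\mmpath{}{}$ and $\detPath{}{}{}$. What you have done is spell out carefully the bookkeeping the paper leaves implicit---upgrading the revenue correspondence to a direction correspondence (with matching tie-breaking and definedness), running a bottom-up induction to get $\detPath{\ttime}{\secondVertex'}{\secondVertex}=\secondVertex'\cdot\mmpath{\RootedTreeOne}{\secondVertex}$, and then reusing it for $\RootedTreeTwo$ via the observation that non-root subtrees are unchanged by deleting the $\dir{}{}$-branch at the root. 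This is exactly the intended content; there is nothing to correct.
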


The next observation shows a relation between mini-max functions and the lengths of the corresponding mini-max paths.

\begin{myobservation}
\label{obs:lenpaths}
For any $\ttime=0, \ldots,n$ and any edge $\set{\firstVertex,\secondVertex}\in\Edges_{\ttime}$ and $\thirdVertex\in\White\cup\Black_{\ttime}$ we have that
\begin{align*}
\length{\detPath{\ttime}{\firstVertex}{\secondVertex}}&=\detDist{\ttime}{\firstVertex}{\secondVertex} &\text{if } \detDist{\ttime}{\firstVertex}{\secondVertex} < \infinity,\\
\length{\pathh{\ttime}{\thirdVertex}}&=\dist{\ttime}{\thirdVertex} &\text{if } \dist{\ttime}{\thirdVertex} < \infinity,\\
\length{\secondPath{\ttime}{\thirdVertex}}&=\secondDist{\ttime}{\thirdVertex} &\text{if } \secondDist{\ttime}{\thirdVertex} < \infinity.
\end{align*}
\end{myobservation}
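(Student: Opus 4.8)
The plan is to prove all three identities by a single structural induction that mirrors the recursive definitions against each other. On the left-hand sides, $\detPath{\ttime}{\firstVertex}{\secondVertex}$ is built by the recursion of Definition~\ref{def:mmpaths}, and $\pathh{\ttime}{\thirdVertex}$, $\secondPath{\ttime}{\thirdVertex}$ are then expressed through it by the observation immediately preceding the present statement; on the right-hand sides, $\detDist{\ttime}{\firstVertex}{\secondVertex}$ is built by the recursion of Definition~\ref{def:det-dist}, and $\dist{\ttime}{\thirdVertex}$, $\secondDist{\ttime}{\thirdVertex}$ are related to it by Observations~\ref{obs:dist}, \ref{obs:secdist} and~\ref{obs:distSecdist}. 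Since the two recursions have the same shape, the identities should come out by matching each recursive step and each base case.

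First I would establish the line about $\detPath{\ttime}{\firstVertex}{\secondVertex}$, assuming $\detDist{\ttime}{\firstVertex}{\secondVertex}<\infinity$, by induction on $\detDist{\ttime}{\firstVertex}{\secondVertex}$, which strictly decreases along each recursive call in the finite regime. If $\detDir{\ttime}{\firstVertex}{\secondVertex}$ is defined, put $\secondVertex'=\detDir{\ttime}{\firstVertex}{\secondVertex}$; then $\detPath{\ttime}{\firstVertex}{\secondVertex}=\firstVertex\cdot\detPath{\ttime}{\secondVertex}{\secondVertex'}$ and, by Definition~\ref{def:det-dist}, $\detDist{\ttime}{\firstVertex}{\secondVertex}=\detDist{\ttime}{\secondVertex}{\secondVertex'}+1$. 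The step that makes the induction work is that finiteness is inherited: $\detDist{\ttime}{\firstVertex}{\secondVertex}<\infinity$ forces $\detDist{\ttime}{\secondVertex}{\secondVertex'}<\infinity$, so the inductive hypothesis applies to $\detPath{\ttime}{\secondVertex}{\secondVertex'}$, and prepending $\firstVertex$ contributes exactly one more edge. If $\detDir{\ttime}{\firstVertex}{\secondVertex}$ is undefined, then $\secondVertex$ cannot be black (that would give $\detDist{\ttime}{\firstVertex}{\secondVertex}=\infinity$, contradicting the hypothesis), so $\secondVertex\in\White$ with $\neighbours{\ttime}{\secondVertex}\setminus\set{\firstVertex}=\emptyset$, and the two sides agree on their base values.

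The second and third lines then follow by substituting the first line into the observation that writes $\pathh{\ttime}{\thirdVertex}=\detPath{\ttime}{\thirdVertex}{\dir{\ttime}{\thirdVertex}}$ and $\secondPath{\ttime}{\thirdVertex}=\detPath{\ttime}{\thirdVertex}{\secondDir{\ttime}{\thirdVertex}}$ whenever $\dir{\ttime}{\thirdVertex}$, respectively $\secondDir{\ttime}{\thirdVertex}$, exists, together with Observation~\ref{obs:distSecdist}, which relates $\dist{\ttime}{\thirdVertex}$ and $\secondDist{\ttime}{\thirdVertex}$ to the corresponding $\detDist{\ttime}{}{}$ values. In the remaining degenerate case where $\dir{\ttime}{\thirdVertex}$ (resp.\ $\secondDir{\ttime}{\thirdVertex}$) is undefined, Observations~\ref{obs:dist} and~\ref{obs:secdist} force the finiteness hypothesis to hold only when $\thirdVertex$ is white with empty (resp.\ singleton) neighbourhood, whence $\dist{\ttime}{\thirdVertex}=0$ (resp.\ $\secondDist{\ttime}{\thirdVertex}=0$), which matches the length of the trivial path $\thirdVertex$.

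I do not expect a genuine obstacle here: the content is bookkeeping. The one place to be careful is the black/white case split in the base cases of both recursions, together with the explicit check — needed at every recursive call — that the finiteness assumption descends, so that the inductive hypothesis is available throughout.
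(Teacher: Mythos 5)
Your inductive scaffolding has the right shape, but the base case you wave through is exactly where the argument fails, and working it out carefully reveals a genuine inconsistency that you need to notice and resolve. When $\detDir{\ttime}{\firstVertex}{\secondVertex}$ is undefined and $\detDist{\ttime}{\firstVertex}{\secondVertex}<\infinity$, you correctly deduce that $\secondVertex\in\White$ with $\neighbours{\ttime}{\secondVertex}\setminus\set{\firstVertex}=\emptyset$; but then Definition~\ref{def:det-dist} gives $\detDist{\ttime}{\firstVertex}{\secondVertex}=0$, while Definition~\ref{def:mmpaths} gives $\detPath{\ttime}{\firstVertex}{\secondVertex}=\firstVertex\cdot\secondVertex$, a path with one edge, so $\length{\detPath{\ttime}{\firstVertex}{\secondVertex}}=1$. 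The two sides do \emph{not} agree; they differ by one. Since your (correct) inductive step adds one to both sides, the invariant you actually establish is $\length{\detPath{\ttime}{\firstVertex}{\secondVertex}}=\detDist{\ttime}{\firstVertex}{\secondVertex}+1$, not equality.

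This is not merely cosmetic: it propagates to your derivation of the second and third lines and shows that your argument cannot close. You propose to get $\length{\pathh{\ttime}{\thirdVertex}}=\dist{\ttime}{\thirdVertex}$ by combining the first line with the observation $\pathh{\ttime}{\thirdVertex}=\detPath{\ttime}{\thirdVertex}{\dir{\ttime}{\thirdVertex}}$ and Observation~\ref{obs:distSecdist}. But Observation~\ref{obs:distSecdist} says $\dist{\ttime}{\thirdVertex}=\detDist{\ttime}{\thirdVertex}{\dir{\ttime}{\thirdVertex}}+1$, so if the first line held \emph{as written} you would obtain $\length{\pathh{\ttime}{\thirdVertex}}=\detDist{\ttime}{\thirdVertex}{\dir{\ttime}{\thirdVertex}}=\dist{\ttime}{\thirdVertex}-1$, contradicting the target. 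Carrying out this substitution explicitly is precisely the sanity check that would have forced you to re-examine the base case. The corrected first identity $\length{\detPath{\ttime}{\firstVertex}{\secondVertex}}=\detDist{\ttime}{\firstVertex}{\secondVertex}+1$, fed through Observation~\ref{obs:distSecdist}, produces the second and third lines cleanly, and those lines are also what Section~\ref{sec:proof} uses (``if $\dist{\ttime}{\black_{\ttime}}<\infinity$, then $\length{\pathh{\ttime}{\black_{\ttime}}}=\dist{\ttime}{\black_{\ttime}}$''). In short: the bookkeeping is not automatic here, the first identity as you reproduce it does not follow from the definitions, and a careful base-case check is needed both to fix your induction and to notice that you must state the det-path identity with the extra $+1$.
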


We are now ready to prove the following interesting property that says that the level of the $\dir{}{}$ and $\secondDir{}{}$ can not change by more than $\pm1$. 

\lemplusMinusOne*

\begin{proof}
We present the proof for $\firstVertex=\white\in\White$ and then $\secondVertex=\black\in\Black_{\ttime}$. The case when $\firstVertex\in\Black_{\ttime}$ is symmetric.
We split this case to the following four subcases.

\begin{description}
\item [First subcase ]
We assume that $\dir{\ttime}{\white}=\black$ and $\dir{\ttime}{\black}=\white$.
Then
\begin{multline*}
\level{\ttime}{\black}=^{\text{Def.\ref{def:level}.}}
\dist{\ttime}{\black}=^{\text{Obs.\ref{obs:distSecdist}.}}
\detDist{\ttime}{\black}{\white}+1=^{\text{Obs.\ref{obs:detdist}.}}
\\=
\secondDist{\ttime}{\white}+1=^{\text{Def.\ref{def:level}.}}
\level{\ttime}{\white}+1.
\end{multline*}

\item [Second subcase ]
We assume that $\dir{\ttime}{\white}=\black$ and $\dir{\ttime}{\black}\neq\white$.
Then
\begin{multline*}
\level{\ttime}{\white}=^{\text{Def.\ref{def:level}.}}
\secondDist{\ttime}{\white}\leqslant^{\text{Obs.\ref{obs:distDetSecond}.}}
\dist{\ttime}{\white}=^{\text{Obs.\ref{obs:distSecdist}.}}
\\=
\detDist{\ttime}{\white}{\black}+1=^{\text{Obs.\ref{obs:detdist}.}}
\dist{\ttime}{\black}+1=^{\text{Def.\ref{def:level}.}}
\level{\ttime}{\black}+1.
\end{multline*}
On the other side we have that
\begin{multline*}
\level{\ttime}{\black}=^{\text{Def.\ref{def:level}.}}
\dist{\ttime}{\black}=^{\text{Obs.\ref{obs:dist}}}
\min_{\white'\in\neighbours{\ttime}{\black}}\detDist{\ttime}{\black}{\white'}+1\leqslant
\\=
\detDist{\ttime}{\black}{\white}+1=^{\text{Obs.\ref{obs:detdist}.}}
\secondDist{\ttime}{\white}+1=^{\text{Def.\ref{def:level}.}}
\level{\ttime}{\white}+1.
\end{multline*}

\item[Third subcase ]
We assume that $\secondDir{\ttime}{\white}=\black$ and $\dir{\ttime}{\black}=\white$.
Then
\begin{multline*}
\dist{\ttime}{\black}+2\leqslant^{\text{Obs.\ref{obs:distDetSecond}.}}
\detDist{\ttime}{\white}{\black}+2=^{\text{Obs.\ref{obs:distSecdist}.}}
\secondDist{\ttime}{\white}+1\leqslant^{\text{Obs.\ref{obs:distDetSecond}.}}
\\\leqslant
\detDist{\ttime}{\black}{\white}+1=^{\text{Obs.\ref{obs:distSecdist}.}}
\dist{\ttime}{\black}
\end{multline*}
but it is not possible.

\item[Fourth subcase ]
The last case describes situation when $\secondDir{\ttime}{\white}=\black$ and $\dir{\ttime}{\black}\neq\white$. Then
\begin{multline*}
\level{\ttime}{\white}=^{\text{Def.\ref{def:level}.}}
\secondDist{\ttime}{\white}=^{\text{Obs.\ref{obs:distSecdist}.}}
\detDist{\ttime}{\white}{\black}+1=^{\text{Obs.\ref{obs:detdist}.}}
\\=
\dist{\ttime}{\black}+1=^{\text{Def.\ref{def:level}.}}
\level{\ttime}{\black}+1.
\end{multline*}
\end{description}
\end{proof}

\end{document}